\newcommand{\E}{\mathbb{E}}
\newcommand{\Var}{\mathrm{Var}}
\newcommand{\Cov}{\mathrm{Cov}}
\newtheorem{theorem}{Theorem}
\newtheorem{corollary}{Corollary}
\newtheorem{lemma}{Lemma}
\theoremstyle{definition}
\newtheorem{definition}{Definition}
\newtheorem{remark}{Remark}
\begin{document}

\title{Performance Bounds for Group Testing \\ With Doubly-Regular Designs}

\author{Nelvin Tan, Way Tan, and Jonathan Scarlett\thanks{

N.~Tan is with the Department of Engineering, University of Cambridge.
W.~Tan and J.~Scarlett are with the Department of Computer Science, National University of Singapore (NUS), and also with the Department of Mathematics, NUS. e-mails: \url{tcnt2@cam.ac.uk}; \url{e0174826@u.nus.edu}; \url{scarlett@comp.nus.edu.sg}.

This work was presented in part at the 2021 IEEE International Symposium on Information Theory (ISIT) \cite{Nel21}.

This work was supported by an NUS Early Career Research Award.
N.~Tan and W.~Tan contributed equally to this work.
}}

\allowdisplaybreaks

\maketitle

\begin{abstract}
    In the group testing problem, the goal is to identify a subset of defective items within a larger set of items based on tests whose outcomes indicate whether any defective item is present. This problem is relevant in areas such as medical testing, DNA sequencing, and communications.  In this paper, we study a doubly-regular design in which the number of tests-per-item and the number of items-per-test are fixed.  We analyze the performance of this test design alongside the Definite Defectives (DD) decoding algorithm in several settings, namely, (i) the sub-linear regime $k=o(n)$ with exact recovery, (ii) the linear regime $k=\Theta(n)$ with approximate recovery, and (iii) the size-constrained setting, where the number of items per test is constrained. Under setting (i), we show that our design together with the DD algorithm, matches an existing achievability result for the DD algorithm with the near-constant tests-per-item design, which is known to be asymptotically optimal in broad scaling regimes. Under setting (ii), we provide novel approximate recovery bounds that complement a hardness result regarding exact recovery. Lastly, under setting (iii), we improve on the best known upper and lower bounds in scaling regimes where the maximum test size grows with the total number of items.
\end{abstract}
\begin{IEEEkeywords}
   Group testing, sparsity, performance bounds, randomized test designs, information-theoretic limits
\end{IEEEkeywords}

\newcounter{mytempeqncnt}

\section{Introduction}

In the group testing problem, the goal is to identify a small subset of defective items of size $k$ within a larger set of items of size $n$, based on a number $T$ of tests. This problem is relevant in areas such as medical testing, DNA sequencing, and communication protocols \cite[Sec.~1.7]{Ald19}, and has recently found utility in COVID-19 testing \cite{Ald21}.

In non-adaptive group testing, the placements of items into test can be represented by a binary test matrix of size $T \times n$.  The strongest known theoretical guarantees are based on the idea of generating this matrix at random and analyzing the average performance.  Starting from early studies of group testing, a line of works led to a detailed understanding of the i.i.d.~test design \cite{Mal78,Ati12,Ald14a,Sca15b}, and more recently, further improvements were shown for a near-constant tests-per-item design \cite{Joh16,Coj19}, whose asymptotic optimality in sub-linear sparsity regimes was established in \cite{Coj20}.  Recently, there has been increasing evidence that {\em doubly-regular designs} (i.e., both constant tests-per-item and items-per-test) also play a crucial role in various settings of interest:
\begin{itemize}
    \item The work of Mez\'ard {\em et al.} \cite{Mez08} uses heuristic arguments from statistical physics to suggest that doubly-regular designs achieve the same optimal threshold as the near-constant tests-per-item design when $k = \Theta(n^{\theta})$, at least when $\theta$ is not too small.
    \item In constrained settings where the number of items per test cannot exceed a pre-specified threshold, doubly-regular designs have been used to obtain performance bounds that appear to be difficult or impossible to obtain using the other designs mentioned above \cite{Ven19,Oli20}.
    \item In the linear sparsity regime (i.e., $k = \Theta(n)$), various two-stage adaptive designs were studied in \cite{Ald20}, and using a doubly-regular design in the first stage led to strict improvements over the other designs.  See also Appendix \ref{app:COMP} for analogous observations with non-adaptive testing and approximate recovery.
    \item From a more practical viewpoint, doubly-regular designs have found utility in application-driven settings, e.g., see \cite{Goe21,Gho21} for recent studies relating to medical testing.
\end{itemize}
In this paper, motivated by these developments, we seek to provide a more detailed understanding of non-adaptive doubly-constant test designs, particularly when paired with the Definite Defectives (DD) algorithm \cite{Ald14a}.  Briefly, our contributions are as follows: (i) For $\theta \in \big[\frac{1}{2},1\big)$, we rigorously prove the above-mentioned result shown heuristically in \cite{Mez08}, albeit with a slightly different version of the doubly-regular design; (ii) We establish new performance bounds for non-adaptive group testing in the linear regime ($k = \Theta(n)$) with some false negatives allowed in the reconstruction, complementing strong impossibility results for exact recovery \cite{Ald18,Bay20}; (iii) We provide improved upper and lower bounds on the number of tests for the constrained setting with at most $\rho$ items per test.  Our bounds apply to general scaling regimes beyond the regime $\rho = O(1)$ recently studied in \cite{Oli20}, and our consideration of the DD algorithm leads to strict improvements over the COMP algorithm considered in \cite{Ven19}.


\subsection{Problem Setup}

Let $n$ denote the number of items, which we label as $[n] = \{1,\dots,n\}$. Let $\mathcal{K}\subset [n]$ denote the fixed set of defective items, and let $k=|\mathcal{K}|$ be the number of defective items. We adopt the combinatorial prior \cite{Ald19}, where $\mathcal{K}$ is chosen uniformly at random from ${n\choose k}$ sets of size $k$. We let $T=T(n)$ be the number of tests performed. The $i$-th test takes the form 
\begin{align}
    Y^{(i)}=\bigvee_{j\in\mathcal{K}}X_j^{(i)}, \label{eq:test_outcome_formula}
\end{align}
where the test vector $X^{(i)}=\big(X_1^{(i)},\dots,X_n^{(i)}\big)\in\{0,1\}^n$ indicates which items are are included in the test, and $Y^{(i)}\in\{0,1\}$ is the resulting observation, which indicates whether at least one defective item was included in the test. The goal of group testing is to design a sequence of tests $X^{(1)},\dots,X^{(T)}$, with $T$ ideally as small as possible, such that the outcomes can be used to reliably recover the defective set $\mathcal{K}$. We focus on the non-adaptive setting, in which all tests $X^{(1)},\dots,X^{(T)}$ must be designed prior to observing any outcomes. 

Next, we introduce the main defining features distinguishing the settings we consider:
\begin{itemize}
    \item Regarding the scaling of $k$, we consider the following:
    \begin{itemize}
        \item \textbf{Sub-linear:} We have $k=\Theta\big(n^\theta\big)$ for some constant $\theta\in(0,1)$. This is the regime where defectivity is rare, and also where group testing typically exhibits the greatest gains.
        \item \textbf{Linear:} We have $k=pn$ for some prevalence rate $p\in(0,1)$. This regime may potentially be of greater relevance in certain practical situations, e.g., with $p$ representing the prevalence of a disease.
    \end{itemize}
    \item Regarding the constraints (or lack thereof), we consider the following:
    \begin{itemize}
        \item \textbf{Unconstrained:} There is no restriction on the number of tests-per-item or the number of items-per-test in the test design.
        \item \textbf{Size-constrained:} Tests are size-constrained and thus contain no more than $\rho=\Theta\big(\big(\frac{n}{k}\big)^\beta\big)$ items per test, for some constant $\beta\in(0,1)$. Note that if each test comprises of $\Theta(n/k)$ items, then $\Theta(k\log n)$ tests suffice for group testing algorithms with asymptotically vanishing error probability \cite{Cha14,Ald14a,Sca15b,Joh16}. One can alternatively  use exactly $n$ tests via one-by-one testing, and it has recently been shown that taking the better of the two (i.e., $\Theta(\min\{k \log n, n\})$ tests) gives asymptotically optimal scaling \cite{Bay20}. Hence, to avoid essentially reducing to the unconstrained setting, the parameter regime of interest in the size-constrained setting is $\rho=o(n/k)$, which justifies the scaling $\rho=\Theta\big(\big(\frac{n}{k}\big)^\beta\big)$.
    \end{itemize}
    \item Regarding recovery criteria, we consider the following:
    \begin{itemize}
        \item \textbf{Exact recovery:} We seek to develop a testing strategy and decoder that produces an estimate $\widehat{\mathcal{K}}$ such that the error probability $P_e=\mathbb{P}\big[\widehat{\mathcal{K}}\neq\mathcal{K}\big]$ is asymptotically vanishing as $n \to \infty$.
        \item \textbf{Approximate recovery:} We seek to characterize the per-item false-positive rate (FPR) and false-negative rate (FNR), which are defined as the probability that a non-defective item (picked uniformly at random) is declared defective (i.e., ${\rm FPR} = \frac{\E[|\widehat{\mathcal{K}}\setminus\mathcal{K}|]}{|[n]\setminus\mathcal{K}|}$), and the probability that a defective item (picked uniformly at random) is declared non-defective (i.e., ${\rm FNR} = \frac{\E[|\mathcal{K}\setminus\widehat{\mathcal{K}}|]}{|\mathcal{K}|}$), respectively.
    \end{itemize}
\end{itemize}

With these definitions in place, the settings that we focus on are (i) the unconstrained sub-linear regime with exact recovery, (ii) the unconstrained linear regime with approximate recovery, and (iii) the size-constrained sub-linear regime with exact recovery. More specifically, in the second of these, we only consider the FNR; the FPR is not considered, since the DD algorithm that we study never declares a non-defective item to be defective.\footnote{See also Appendix \ref{app:COMP} for a result regarding the COMP algorithm with only false positives and no false negatives.}  While the above definitions lead to $2^3 = 8$ possible settings of interest, our focus is on three that we believe to be suitably representative and of the most interest given what is already known in existing works (e.g., due to the hardness of exact recovery in the linear regime \cite{Ald18,Bay20}).

\textbf{Notation.} Throughout the paper, the function $\log(\cdot)$ has base $e$, and we make use of Bachmann-Landau asymptotic
notation (i.e., $O$, $o$, $\Omega$, $\omega$, $\Theta$).

\subsection{Related Work}

\begin{algorithm}[t]
    \begin{algorithmic}[1]
        \REQUIRE $T$ tests.
        \STATE Initialize two empty sets $\mathcal{PD}$ (possibly defective set) and $\mathcal{DD}$ (definitely defective set).
        \STATE Label any item in a negative test as definitely non-defective, and add all remaining items to $\mathcal{PD}$.
        \FOR{each test} {
        \STATE If the test contains exactly one item from $\mathcal{PD}$, then add that item to the set $\mathcal{DD}$.
        }\ENDFOR
        \RETURN $\widehat{\mathcal{K}} = \mathcal{PD}$ for COMP, or $\widehat{\mathcal{K}} = \mathcal{DD}$ for DD.
    \end{algorithmic}
    \caption{COMP and DD algorithms \cite{Cha14,Ald14a}. \label{alg:COMP&DD_algo}}
\end{algorithm}

We focus on non-adaptive and noiseless group testing with a combinatorial prior. We begin by introducing two common decoding algorithms, Combinatorial Orthogonal Matching Pursuit (COMP) and Definite Defectives (DD), in Algorithm \ref{alg:COMP&DD_algo}. A key difference between the COMP and DD algorithm is that the COMP algorithm produces only false positives (i.e., no false negatives), while the DD algorithm produces only false negatives (i.e., no false positives).

Next, we introduce some test designs \cite[Section 1.3]{Ald19}, which will be useful for purposes of comparison later:
\begin{itemize}
    \item \textbf{Bernoulli design:} Each item is randomly included in each test independently with some fixed probability.
    \item \textbf{Near-constant tests-per-item:} Each item is included in some fixed number of tests, with the tests for each item chosen uniformly at random \textit{with replacement}, independent from the choices for all other items.
    \item \textbf{Constant tests-per-item:} Each item is included in some fixed number of tests, with the tests for each item chosen uniformly at random \textit{without replacement}, independent from the choices for all other items.
    \item \textbf{Doubly-regular design:} Both the number of tests-per-item and the number of items-per-test are fixed to pre-specified values.  Previous works predominantly considered the uniform distribution over all designs satisfying these conditions, though our own results will use a slightly different block-structured variant from \cite{Bro20}.
\end{itemize}

We proceed to review the related work for each setting.

\subsubsection{Unconstrained Sub-Linear Regime} 

In the unconstrained setting with sub-linear sparsity, the following number of tests multiplied with $(1+\epsilon)$ (where $\epsilon$ is any positive constant) are sufficient to attain asymptotically vanishing error probability: 
\begin{itemize}
    \item Bernoulli testing \& COMP decoding \cite{Ald15, cha11}: $ek\log n \approx 2.72k\log n$;
    \item Bernoulli testing \& DD decoding \cite{Ald15, cha11}: $e\max\{\theta,1-\theta\}k\log n \approx 2.72\max\{\theta,1-\theta\}k\log n$;
    \item Near-constant tests-per-item \& COMP decoding \cite{Joh16}: $\frac{k\log n}{\log^22} \approx 2.08k\log n$;
    \item Near-constant tests-per-item \& DD decoding \cite{Joh16}: $\frac{\max\{\theta,1-\theta\}}{\log^22}k\log n \approx 2.08\max\{\theta,1-\theta\}k\log n$.
\end{itemize}
These results indicate that the near-constant tests-per-item is superior to Bernoulli testing, with the intuition being that the former avoids over-testing or under-testing items.  We also observe that DD decoding is superior to COMP decoding, with the intuition being that the information from positive tests is ``wasted'' in the latter.  Further improvements for information-theoretically optimal decoding are discussed below.

Additionally, converse results have been proven for each of these test designs: In the sub-linear regime with unconstrained tests, \textit{any} decoding algorithm that uses the following number of tests multiplied by $1-\epsilon$ (where $\epsilon$ is arbitrarily small) is unable to attain asymptotically vanishing error probability:
\begin{itemize}
    \item Bernoulli testing \cite{Ald15}: $\big(\log2\cdot\max_{\nu>0}\min\big\{H_2(e^{-\nu}), \frac{\nu e^{-\nu}}{\log{2}}\frac{1-\theta}{\theta}\big\}\big)^{-1}(1-\theta)k\log n$;
    \item Near-constant tests-per-item \cite{Joh16,Coj19}: $\max\big\{\frac{\theta}{\log^2 2},\frac{1-\theta}{\log 2}\big\}k\log n$.
\end{itemize}
For both designs, the DD algorithm's performance matches the converse for $\theta \ge \frac{1}{2}$. 

For information-theoretically optimal decoding, exact thresholds on the required number of tests were characterized for Bernoulli testing in \cite{Sca15b,Ald15}, and for the near-constant tests-per-item design in \cite{Joh16,Coj19}.  Perhaps most importantly among these, it was shown in \cite{Coj20} that the above converse for the near-constant tests-per-item design extends to {\em arbitrary} non-adaptive designs, and that a matching achievability threshold holds for a certain spatially-coupled test design with {\em polynomial-time decoding}.  Hence, $\max\big\{\frac{\theta}{\log^2 2},\frac{1-\theta}{\log 2}\big\}k\log n$ is the optimal threshold for all $\theta \in (0,1)$, and for $\theta \ge \frac{1}{2}$ the DD algorithm with near-constant tests-per-item is asymptotically optimal.

M{\'e}zard {\em et al.}~\cite{Mez08} considered doubly-regular designs, and designs with constant tests-per-item only. Their analysis used heuristics from statistical physics to suggest that such designs can improve on Bernoulli designs, and match the above near-constant tests-per-item bound for the DD algorithm.  The analysis in \cite{Mez08} contains some non-rigorous steps; in particular, they make use of a ``no short loops'' assumption that is only verified for $\theta>\frac{5}{6}$ and conjectured for $\theta\geq\frac{2}{3}$, while experimentally being shown to fail for certain smaller values.  One of our contributions in this paper is to establish a rigorous version of their result for $\theta \ge \frac{1}{2}$.

A distinct line of works has sought designs that not only require a low number of tests, but also near-optimal decoding complexity (e.g., $k\,\text{poly}(\log n)$) \cite{Cai13,Lee16,Ngo11,Ind10,Eri20,Nel21a}. However, our focus in this paper is on the required number of tests, for which the existing guarantees of such algorithms contain loose constants or extra logarithmic factors.


\subsubsection{Linear Regime} 

Under the exact recovery guarantee, the known methods for deriving achievability bounds on $T$ in the unconstrained sub-linear regime do not readily extend to the linear regime. In fact, as the following results assert, individual testing is optimal for exact recovery in the linear regime. 
\begin{itemize}
    \item \textbf{Weak converse \cite{Ald18}:} In the linear regime with prevalence $p \in (0,1)$, if we use $T<n-1$ tests, there exists $\epsilon = \epsilon(p) > 0$ independent of $n$ such that $P_e \ge \epsilon$.
    
    \item \textbf{Strong converse \cite{Bay20}:} In the linear regime with any fixed $p \in (0,1)$, if $T \le (1-\epsilon)n$ for some constant $\epsilon > 0$, then $P_e \to 1$ as $n \to \infty$.
\end{itemize}
These results imply that individual testing is an asymptotically optimal non-adaptive strategy for exact recovery. Hence, for this regime, we will instead investigate the FNR.

A recent study of two-stage adaptive algorithms in \cite{Ald20} turns out to be relevant to our setup (albeit not directly applicable in our non-adaptive setup).  There, the high-level approach was the following, which was also considered in earlier works (e.g., see \cite{Mez11,Dya14}): (i) Conduct non-adaptive testing and identify a set of definitely non-defective items, leaving only the possibly defective items. (ii) Conduct individual testing on the remaining possibly defective items.  It was shown in \cite{Ald20} that using a doubly-regular design from \cite{Bro20} in the first stage gives us the lowest expected number of tests required to attain zero error probability, with strict improvements over the near-constant tests-per-item design. This motivates us to analyze the FNR of the DD algorithm with a doubly-regular design.

\subsubsection{Size-Constrained Sub-Linear Regime} \label{sec:prev_results_con_sublinear}


The results most relevant to this setting are described as follows, where $k=\Theta(n^\theta)$ with $\theta\in[0,1)$ throughout:
\begin{itemize}
    \item \textbf{Converse \cite{Ven19}:} For $\rho=\Theta\big(\big(\frac{n}{k}\big)^\beta\big)$ with $\beta\in[0,1)$, and an arbitrarily small $\epsilon>0$, any non-adaptive algorithm with error probability at most $\epsilon$ requires $T\geq\frac{1-6\epsilon}{1-\beta}\cdot\frac{n}{\rho}$, for sufficiently large $n$.
    
    \item \textbf{Improved Converse for $\beta = 0$ \cite{Oli20}:} For $\rho = \Theta(1)$ satisfying $\rho \ge 1+\big\lfloor\frac{\theta}{1-\theta}\rfloor$, if $T\leq(1-\epsilon)\max\big\{\big(1+\big\lfloor\frac{\theta}{1-\theta}\big\rfloor\big)\frac{n}{\rho},\frac{2n}{\rho+1}\big\}$ for some constant $\epsilon > 0$, then any non-adaptive algorithm fails (with $1-o(1)$ probability if $\frac{\theta}{1-\theta}$ is a non-integer, and with $\Omega(1)$ probability otherwise).
    
    \item \textbf{Achievability \cite{Ven19}:} Under a doubly-regular random test design and the COMP algorithm, for $\rho=\Theta\big(\big(\frac{n}{k}\big)^\beta\big)$ with $\beta\in[0,1)$, and an arbitrarily small $\epsilon>0$, the error probability is asymptotically vanishing when $T\geq\big\lceil\frac{1+\epsilon}{(1-\theta)(1-\beta)}\big\rceil\cdot\big\lceil\frac{n}{\rho}\big\rceil$.
    
    \item \textbf{Improved Achievability for $\beta = 0$ \cite{Oli20}:} In the regime $\rho = \Theta(1)$, under a suitably-chosen near-regular random test design (which slightly differs depending on whether or not $\theta \ge \frac{1}{2}$), the error probability is asymptotically vanishing when $T\geq(1+\epsilon)\max\big\{\big(1+\big\lfloor\frac{\theta}{1-\theta}\big\rfloor\big)\frac{n}{\rho},\frac{2n}{\rho+1}\big\}$.  This is achieved using the Sequential COMP (SCOMP) algorithm \cite{Ald14a}, which starts with the DD solution and then iteratively refines it.
\end{itemize}

The above results for $\beta = 0$ (i.e., $\rho = \Theta(1)$) strictly improve on the results for general $\beta$, and enjoy matching achievability and converse thresholds.  Thus, it is natural to ask whether we can attain similar improvements for the case that $\rho=\Theta\big( (n/k)^\beta \big)$, for $\beta=(0,1)$ (i.e., large $\rho$).  We partially answer this question in the affirmative.

Regarding our use of a doubly-regular design, the column weight restriction is not strictly imposed by the testing constraints, but helps in avoiding ``bad'' events where some items are not tested enough (or even not tested at all). For example, in the case of the COMP algorithm, the doubly-regular design helps to reduce the number of tests by a factor of $O(\log n)$ compared to i.i.d.~testing.

Additional results are given for the adaptive setting  in \cite{Nel20, Oli20}, and for the noisy non-adaptive setting in \cite[Section 7.2]{Ven19}. Another notable type of sparsity constraint is that of finitely divisible items (i.e., bounded tests-per-item) constraint, which are studied in \cite{Ven19,Nel20,Oli20}.  The main reason that we do not consider such constraints here is that it was already studied extensively in \cite{Oli20} without any analog of the above-mentioned restrictive assumption $\rho = O(1)$.

\subsection{Contributions}

Our main contributions are as follows:
\begin{itemize}
    \item \textbf{Unconstrained sub-linear regime with exact recovery:} We provide an achievability result for the DD algorithm with a doubly-regular design that matches a result of \cite{Joh16} for the DD algorithm with a near-constant tests-per-item design, which is asymptotically optimal when $\theta\geq \frac{1}{2}$.  Thus, for this range of $\theta$, we provide a rigorous counterpart to the result shown heuristically in \cite{Mez08}.
    \item \textbf{Unconstrained linear regime with approximate recovery:} We provide an asymptotic bound on the FNR for the DD algorithm with a doubly-regular design, and further characterize the low-sparsity limit analytically, while evaluating various higher-sparsity regimes numerically.
    \item \textbf{Size-constrained sub-linear regime with exact recovery:} Motivated by recently-shown gains for the regime $\rho = O(1)$ \cite{Oli20}, we show that analogous gains are also possible for more general $\rho = o\big(\frac{n}{k}\big)$.  We improve on the best known achievability and converse bounds in such regimes, in particular using the DD algorithm to improve over known results for the COMP algorithm.
\end{itemize}
Our analysis techniques build on the existing works outlined above, but also come with several new aspects and challenges; specific comparisons are deferred to Remarks \ref{rem:techniques} and \ref{rem:conv}.

\section{Main Results} \label{sec:main}

We first describe a randomized construction of a doubly-regular $T\times n$ test matrix $\mathsf{X}$, with $T=\frac{rn}{s}$, where $r$ and $s$ are variables to be chosen according to the setting being studied. We select the test matrix in the following manner:
\begin{itemize}
    \item Sample $r$ matrices $\mathsf{X}_1,\dots,\mathsf{X}_r$ independently, where each matrix $\mathsf{X}_j$ ($j\in\{1,\dots,r\}$) is sampled uniformly from all $\frac{n}{s}\times n$ binary matrices with exactly $s$ items per test (i.e., a row weight of $s$) and one item per column (i.e., a column weight of one).\footnote{We perform our analysis assuming that $\frac{n}{s}$ is an integer, since the effect of rounding is asymptotically negligible.}
    \item Form $\mathsf{X}$ by concatenating $\mathsf{X}_1,\dots,\mathsf{X}_r$ vertically.
\end{itemize}
In other words, we perform $r$ independent rounds of testing, where each round randomly partitions the items into $\frac{n}{s}$ tests of size $s$.  This approach was proposed in \cite{Bro20}, and was used as the first step of a two-stage procedure in \cite{Ald20}.  As we hinted in the previous section, it is distinct from the design that follows the uniform distribution over \emph{all} matrices with row weight $s$ and column weight $r$ (e.g., see \cite{Mez08}). The above design is considered primarily to facilitate the analysis; we expect the two designs to behave similarly, but we leave it as an open problem as to whether there exist settings in which one provably outperforms the other.

After testing the items using our test matrix, we run the DD algorithm, shown in Algorithm \ref{alg:COMP&DD_algo}, to attain our estimate $\widehat{\mathcal{K}}$ of the defective set.

\subsection{Unconstrained Testing in the Sub-Linear Regime}

We state our first main result as follows, and prove it in Section \ref{sec:uncon_sublinear}.

\begin{theorem} \label{thm:unconstrained_achievability}
Under the doubly-regular design described above with parameters $s = \frac{n \log 2}{k}$ and $r = c \log n$ for some constant $c > 0$,\footnote{This result holds regardless of whether we round these values up or down.} when there are $k=\Theta\big(n^\theta\big)$ defective items with constant $\theta\in(0,1)$, the DD algorithm attains vanishing error probability if
\begin{align}
    T\geq(1+\epsilon)\frac{\max\{\theta,1-\theta\}}{\log^2 2}k\log n,
\end{align}
where $\epsilon$ is an arbitrarily small positive constant (i.e., when the constant $c$ in $r = c \log n$ satisfies $c \geq(1+\epsilon)\frac{\max\{\theta,1-\theta\}}{\log2}$, noting that $T = \frac{rn}{s}$).
\end{theorem}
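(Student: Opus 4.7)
The plan is to follow the classical ``masked set'' analysis of the DD algorithm, as in \cite{Ald14a,Joh16}, adapted to our doubly-regular block design. DD correctly returns a defective $i\in\mathcal{K}$ exactly when at least one of the $r$ tests containing $i$ contains no other item from the possibly-defective set $\mathcal{PD}=\mathcal{K}\cup G$, where $G$ is the set of non-defectives that happen to lie only in positive tests. I would split the overall error probability as the event that $|G|$ exceeds a chosen threshold, plus the conditional probability that some defective has no identifying test given the bound on $|G|$.

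For the first piece, the choice $s=\frac{n\log 2}{k}$ makes $\binom{n-k}{s}/\binom{n}{s}\to 1/2$, so each test is positive with probability $\to 1/2$, and the same limit holds for tests conditioned on containing a fixed non-defective. Since the $r$ rounds are mutually independent and each non-defective lies in exactly one test per round, $\mathbb{P}[j\in G]\approx(1/2)^r = n^{-c\log 2}$ and hence $\mathbb{E}[|G|]\approx n^{1-c\log 2}$; Markov then gives $|G|\le n^{1-c\log 2+\epsilon'}$ with probability $1-o(1)$ for any $\epsilon'>0$. For the second piece, fix a defective $i$ and note that in each round the test containing $i$ consists of $i$ together with $s-1$ items drawn uniformly from $[n]\setminus\{i\}$, so the probability it contains no other defective is $\binom{n-k}{s-1}/\binom{n-1}{s-1}\to 1/2$. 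Provided $c\log 2>1-\theta$, the first piece gives $|G|=o(k)=o(n^\theta)$, hence $s|G|/n=o(1)$, so the conditional probability that the test additionally contains no intruder is $1-o(1)$. By round-level independence, the probability that $i$ has no identifying test is at most $(1/2)^r(1+o(1))=n^{-c\log 2+o(1)}$, and a union bound over the $k=\Theta(n^\theta)$ defectives gives total failure probability $n^{\theta-c\log 2+o(1)}$, which is $o(1)$ once $c\log 2>\theta$. Combining the two conditions gives $c>\max\{\theta,1-\theta\}/\log 2$, which via $T=rn/s=ck\log n/\log 2$ matches the stated bound.

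The main difficulty I anticipate lies in handling dependencies cleanly. Within a single round the $n/s$ tests partition $[n]$, so the identifying-test events are correlated across defectives, and because $G$ is a global function of the entire design, naive conditioning on $\{|G|\le\tau\}$ can distort the marginal distribution of any individual round. My workaround is to replace $G$ by a ``leave-one-round-out'' intruder count $G^{(-j)}$ obtained by asking that an item lie in positive tests in all rounds except round $j$; one has $G\subseteq G^{(-j)}$ and $\mathbb{E}[|G^{(-j)}|]\le 2\,\mathbb{E}[|G|]$, so the same Markov argument controls $|G^{(-j)}|$, while conditional on $G^{(-j)}$ round $j$ is independent of this auxiliary count so the per-defective calculation above goes through. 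The resulting lower-order slack is absorbed into the $(1+\epsilon)$ factor in the statement.
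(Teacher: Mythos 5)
Your overall strategy (bound the intruder set $G$, then bound the per-round masking probability by roughly $\tfrac{1}{2}+o(1)$ and raise it to the power $r$, then union bound over defectives) is the same skeleton as the paper's proof, and your parameter bookkeeping ($c\log 2>1-\theta$ from the $G$ step, $c\log 2>\theta$ from the masking step) is correct. The gap is in the one place you flagged as the main difficulty: the multiplication across rounds. Your leave-one-round-out substitution $G\subseteq G^{(-j)}$ does make the round-$j$ masking event independent of round $j$'s own contribution to the intruder set, and it does justify the \emph{marginal} bound $\mathbb{P}\big[\text{$i$ masked in round $j$ by }(\mathcal{K}\setminus\{i\})\cup G^{(-j)}\big]\le \tfrac12+o(1)$. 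But it does not justify taking the product over $j$: each event in the intersection depends on round $j$ \emph{and}, through $G^{(-j)}$, on all the other rounds, so the $r$ events are mutually dependent and ``round-level independence'' cannot be invoked. Conditioning on $\{|G^{(-j)}|\le\tau\}$ for a single $j$ leaves round $j$'s distribution intact, but the intersection $\bigcap_j\{\text{masked in round }j\}$ has no product structure under your decoupling, and since the constant $\tfrac12$ must be preserved exactly (any lossy splitting of rounds destroys the $\log^2 2$ constant), this is not a cosmetic issue — it is the step the whole theorem hinges on.

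The paper resolves it differently: instead of decoupling $G$ round by round, it conditions simultaneously on the defective set, the identity of the intruder set $\mathcal{G}$, the sets of positive tests in each sub-matrix, the \emph{counts} $M^j$ of masked defectives per sub-matrix, and the specific placements of the remaining non-defectives into negative tests. With all of these fixed, the conditioning event factorizes as an intersection of per-sub-matrix events, so conditional independence of the $r$ rounds is genuinely preserved (this is Step 3 and Appendix~\ref{app:condition}); within each round, a symmetry argument then bounds the conditional masking probability by $\frac{M^j}{k}+\frac{G}{k}$, which is why the paper also needs the Chebyshev-plus-covariance work (Lemma~\ref{lem:covariance}) to pin down $M^j=\frac{k}{2}(1+o(1))$ per round — a step your sketch replaces by the marginal $\approx\tfrac12$ computation, which is only valid once the conditioning has been set up to factorize. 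To repair your write-up you would either need to adopt such an explicit conditioning scheme, or supply a genuinely different argument (e.g.\ a sequential/martingale or moment-expansion bound) showing that $\mathbb{P}\big[\bigcap_j B_j\big]\le\big(\tfrac12+o(1)\big)^r$ despite the cross-round dependence; as written, that inequality is asserted but not proved.
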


\subsection{Approximate Recovery in the Linear Regime}

We state our main result for the linear regime as follows, and prove it in Section \ref{sec:linear}. In Appendix \ref{app:COMP}, we also provide a counterpart of this result for the case that there are false negatives but no false positives.

\begin{theorem} \label{thm:linear_achievability}
Under the doubly-regular design described above with parameters $s$ and $r$, when there are $k=pn$ defective items with constant $p\in(0,1)$, the DD algorithm attains $\textup{FNR} \le \min\{1, \textup{FNR}_{\max}(1+o(1))\}$, where 
\begin{align}
    &\textup{FNR}_{\max} \nonumber \\
    &=\left((1-(1-p)^{s-1})+\frac{(1-p)(1-(1-p)^{s-1})^r}{p}\right)^r. \label{eq:linear_achievability}
\end{align}
The corresponding number of tests is $T = \frac{rn}{s}$.
\end{theorem}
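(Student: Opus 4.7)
The plan is to reduce $\text{FNR}$ to the probability that a single fixed defective item is missed by DD, and then to analyze that probability using the independence of the $r$ partition rounds in the block-structured doubly-regular design.

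First, by linearity and item-symmetry I would write $\text{FNR} = \E[|\mathcal{K}\setminus\widehat{\mathcal{K}}|]/k = \mathbb{P}(i^*\text{ missed by DD}\mid i^*\in\mathcal{K})$ for any fixed index $i^*$; equivalently, by the exchangeability of the design I may assume $\mathcal{K}=\{1,\dots,k\}$ and bound $\mathbb{P}(1\notin\widehat{\mathcal{K}})$. Recall that item $1$ fails to enter $\widehat{\mathcal{K}}$ precisely when every one of its $r$ tests contains at least one other item from the ``possibly-defective'' (PD) set, where PD consists of the $k$ defectives together with the non-defective \emph{intruders} --- non-defectives all of whose tests happen to be positive.

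Next, I would analyze the per-round masking probability $q := \mathbb{P}(\text{the round-}j\text{ test of item }1\text{ contains another PD item})$. A union bound splits $q$ into two parts: the probability that another defective lies in the test, which asymptotically equals $\eta := 1-(1-p)^{s-1}$, plus the probability that a non-defective intruder lies in the test. For the latter, conditional on a given non-defective being in the round-$j$ test of item $1$, that test is automatically positive, so the item is an intruder iff each of its other $r-1$ tests is positive --- an event of probability $\approx \eta^{r-1}$ by independence of rounds. I expect the target bound $\tfrac{(1-p)\eta^r}{p}$ on the intruder term to come from an amortized counting argument: the expected total number of non-defective intruders is $\approx (1-p)n\eta^r$, and symmetrically distributing them across the $k=pn$ defectives' tests yields exactly $(1-p)\eta^r/p$ intruders per defective on average.

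Finally, because each round is an independent random partition, the $r$ tests containing item $1$ are chosen independently; conditional on the PD set the per-round masking events are independent, so $\mathbb{P}(1\text{ missed}) \le q^r \le \text{FNR}_{\max}$. The $(1+o(1))$ slack in the theorem absorbs low-order corrections: the hypergeometric-to-binomial passage $(s-1)/(n-1)\to s/n$ as $n\to\infty$, and concentration of the intruder count around its expectation.

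The main obstacle is that the PD set depends on the \emph{entire} random design and is therefore not independent of the round-$j$ test contents; handling this dependence requires either careful conditioning on the PD set (and showing enough concentration on its size and structure) or a direct incidence-counting argument. A closely related difficulty is obtaining the specific form $(1-p)\eta^r/p$ for the intruder contribution, which is strictly tighter than the naive per-test Markov estimate $(s-1)(1-p)\eta^{r-1}$ and thus requires the aforementioned amortized/global counting step rather than a one-test expectation bound.
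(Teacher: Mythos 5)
Your plan follows essentially the same route as the paper's proof: show that the number $G$ of non-defective ``intruders'' in $\mathcal{PD}$ concentrates around $(n-k)(1-(1-p)^{s-1})^r$ and the number $M^j$ of masked defectives per round around $k(1-(1-p)^{s-1})$, bound the per-round probability that a fixed defective's test contains another $\mathcal{PD}$ item by the masked-defective fraction plus an amortized intruder term $G/k$, and multiply over the $r$ rounds. Your observation that the amortized form $\frac{(1-p)\eta^r}{p}$ (with $\eta = 1-(1-p)^{s-1}$) must replace the naive per-test estimate $(s-1)(1-p)\eta^{r-1}$ is exactly right and matches the paper's $G/k$ term, obtained there by symmetry: conditioned on not being masked by defectives, item $i$'s test is one of the $k-M^j$ single-defective tests, each intruder lands in that particular test with probability at most $\frac{1}{k-M^j}$, and a union bound gives $\frac{G}{k-M^j}$, hence $\frac{M^j}{k}+\frac{G}{k}$ overall.

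The genuine gap is at the step you flag but do not resolve, and it is the heart of the argument rather than a $(1+o(1))$-absorbable technicality: the assertion that ``conditional on the PD set the per-round masking events are independent'' is false under naive conditioning. Conditioning on $\mathcal{G}$ entails that every non-defective outside $\mathcal{G}$ appears in a negative test in \emph{some} round (an OR-over-rounds constraint), so learning that such an item sits in a negative test of round $1$ lowers the probability of the same in round $2$, breaking independence across sub-matrices. The paper repairs this (Step 3 of Section \ref{sec:ach_ana} and Appendix \ref{app:condition}) by conditioning on more: a fixed set of positive tests in each round, a fixed $\mathcal{K}$ and $\mathcal{G}$, only the \emph{counts} $M^j$ of masked defectives (not their identities), and, crucially, fixed specific placements of each non-$\mathcal{PD}$ non-defective into a designated negative test of a designated round. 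Under this refined conditioning the event factorizes as an intersection of per-sub-matrix events, so the rounds remain conditionally independent and the product bound $\prod_{j=1}^r\big(\frac{M^j}{k}+\frac{G}{k}\big)$ is legitimate. You would also need to actually carry out the concentration of $G$ and $M^j$ (the indicators $\mathrm{PD}_i$ are dependent, so a covariance computation plus Chebyshev is needed, as in the paper), but that part is routine; the conditional-independence device is the missing idea.
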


Next, we pause momentarily to introduce the following definition to help us evaluate our result:
\begin{definition}[Rate] \label{def:rate}
Under the combinatorial prior with $n$ items, $k$ defectives and $T$ tests, the rate is equal to $\frac{\log_2{n \choose k}}{T}$, which measures the average number of bits of information that would be gained per test if $\mathcal{K}$ were recovered perfectly. In the linear regime with $k=pn$, this can be simplified to $\frac{nH_2(p)}{T}$ (up to a $1+o(1)$ equivalence), where $H_2(p)=-p\log_2{p}-(1-p)\log_2{(1-p)}$ is the binary entropy function.
\end{definition}

Returning to \eqref{eq:linear_achievability}, we observe that this expression is minimized when $r$ is large and $s$ is small. However, this conflicts with our goal of minimizing the number of tests $T=\frac{rn}{s}$, which implies that we should be aiming for small $r$ and large $s$ instead. To balance these conflicting goals, we evaluate our results in terms of their achievable rates, subject to a maximum permissible $\text{FNR}$.  We first partially do so analytically, and then turn to numerical evaluations.

The following corollary concerns the limit of a small proportion of defectives, i.e., the low-sparsity regime.

\begin{corollary} \label{cor:small_p}
    Under the setup of Theorem \ref{thm:linear_achievability}, there exist choices of $r$ and $s$ (depending on $p$) such that, in the limit as $p \to 0$ (after having taking $n \to \infty$), we have (i) $\mathrm{FNR}_{\max}$ approaches $0$, (ii) the rate approaches $\log 2$, and (iii) it holds that  $s=\frac{\log2}{p}(1+o(1))$ and $r = \frac{\log(\frac{1}{p})}{\log2}(1+o(1))$.
\end{corollary}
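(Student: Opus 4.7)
The plan is to exhibit the explicit choices $s = \frac{\log 2}{p}$ and $r = \frac{\log(1/p)}{\log 2}(1+\delta(p))$ for a slowly-vanishing $\delta(p) > 0$, so that claim (iii) holds by construction, and then to verify claims (i) and (ii) in the limit $p\to 0$ by asymptotic expansion of the formula \eqref{eq:linear_achievability}.

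First I would asymptotically evaluate $(1-p)^{s-1}$. With $s = (\log 2)/p$, the expansion $\log(1-p) = -p - p^2/2 - \cdots$ yields $(s-1)\log(1-p) = -\log 2 + O(p)$, so $(1-p)^{s-1} = \tfrac{1}{2}(1 + O(p))$ and $1-(1-p)^{s-1} = \tfrac{1}{2}(1+O(p))$. Raising this to the power $r$ gives $\bigl(1-(1-p)^{s-1}\bigr)^r = 2^{-r}(1+O(p))^r$, and since $r = \Theta(\log(1/p))$ we have $pr\to 0$, so $(1+O(p))^r \to 1$. Dividing by $p$ and using $2^{-r} = p^{1+\delta(p)}$ gives $\frac{1}{p}\bigl(1-(1-p)^{s-1}\bigr)^r = p^{\delta(p)}(1+o(1))$.

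Substituting into \eqref{eq:linear_achievability}, the base of the outer $r$-th power is $\tfrac{1}{2} + (1-p)\, p^{\delta(p)}(1+o(1))$, so
\[
\mathrm{FNR}_{\max} = 2^{-r}\bigl(1 + 2 p^{\delta(p)}(1+o(1))\bigr)^r \le p^{1+\delta(p)}\exp\bigl(C\, p^{\delta(p)} r\bigr)
\]
for some constant $C$. Choosing $\delta(p) = 1/\sqrt{\log(1/p)}$ gives $p^{\delta(p)} = e^{-\sqrt{\log(1/p)}}$, hence $p^{\delta(p)} r \to 0$ and the exponential factor tends to $1$; thus $\mathrm{FNR}_{\max} \sim p^{1+\delta(p)} \to 0$, yielding claim (i). For the rate, I would use $\log_2\binom{n}{pn} = nH_2(p)(1+o(1))$ and $T = rn/s$ to write $\text{rate} = \frac{sH_2(p)}{r}(1+o(1))$, combined with $H_2(p) = \frac{p\log(1/p)}{\log 2}(1+o(1))$ as $p\to 0$, to obtain
\[
\text{rate} = \frac{(\log 2/p)\cdot (p\log(1/p)/\log 2)}{(\log(1/p)/\log 2)(1+\delta(p))}(1+o(1)) = \frac{\log 2}{1+\delta(p)}(1+o(1)) \to \log 2,
\]
which is claim (ii).

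The main obstacle is calibrating $\delta(p)$: it must tend to $0$ so the rate limit is exactly $\log 2$, yet $p^{\delta(p)} r(p) \to 0$ must still hold so that the outer $r$-th power does not inflate $\mathrm{FNR}_{\max}$ back towards $1$. The scale $\delta(p) = 1/\sqrt{\log(1/p)}$ is a natural midpoint that achieves both, and a secondary technical point is verifying that the $O(p)$ corrections arising from the expansion of $(1-p)^{s-1}$, after being raised to the power $r = \Theta(\log(1/p))$, remain negligible alongside the $p^{\delta(p)}$ term.
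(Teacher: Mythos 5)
Your proposal is correct and follows essentially the same route as the paper: substitute $s=\frac{\log 2}{p}$, use $(1-p)^{s-1}=\tfrac12(1+o(1))$, choose $r=\frac{\log(1/p)}{\log 2}(1+o(1))$ just large enough that the term $\tfrac1p\big(1-(1-p)^{s-1}\big)^r$ is controlled, and then compute the rate from $T=rn/s$ exactly as the paper does. The only difference is in the calibration of the slack: the paper keeps that term below a fixed constant $\tfrac12-\delta$ so the base is bounded away from $1$ and the outer power $r=\omega(1)$ drives $\mathrm{FNR}_{\max}\to 0$, whereas your vanishing slack $\delta(p)=1/\sqrt{\log(1/p)}$ makes the term itself vanish and yields the slightly more quantitative bound $\mathrm{FNR}_{\max}\le p^{1+o(1)}$.
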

\begin{proof}
    We start by choosing $s=\frac{\log2}{p}$ (the effect of rounding is negligible as $p \to 0$). 
    By Theorem \ref{thm:linear_achievability}, we have
    \begin{align}
        &\text{FNR}_{\max} \nonumber \\
        &=\left(1-(1-p)^{s-1}+\frac{(1-p)(1-(1-p)^{s-1})^r}{p}\right)^r \\
        &\stackrel{(a)}{=}\bigg(1-e^{-p(s-1)(1+o(1))} + \frac{1}{p}\big(1-e^{-p(s-1)(1+o(1))}\big)^r \bigg)^r \\
        &\stackrel{(b)}{=}\bigg(\frac{1+o(1)}{2}+\frac{1}{p}\Big(\frac{1+o(1)}{2}\Big)^r\bigg)^r,
    \end{align}
    where (a) applies $(1-p)^{s-1}=e^{-p(s-1)(1+o(1))}$ as $p \to 0$, and (b) substitutes $s=\frac{\log2}{p}$. The above expression approaches zero if $\frac{1}{p}\big(\frac{1+o(1)}{2}\big)^r\leq\frac{1}{2}-\delta$ for some positive constant $\delta\in(0,0.5)$, because this gives $\text{FNR}_{\max}=(1-\delta + o(1))^{\omega(1)}=o(1)$. Making $r$ the subject, we obtain
    \begin{align}
        r&\geq\frac{\log p+\log(\frac{1}{2}-\delta)}{-\log 2 + o(1)} \nonumber \\
        &=\frac{\log(\frac{1}{p})-\log(\frac{1}{2}-\delta)}{\log2 + o(1)}
        =\frac{\log(\frac{1}{p})}{\log2}(1+o(1)). \label{eq:r_bound}
    \end{align}
    Hence, we have $r= \frac{ \log(\frac{1}{p})} {\log2}(1+o(1))$ as desired. Finally, 
    \begin{align}
        \text{rate}
        &=\frac{\log_2{n\choose k}}{T}
        \stackrel{(a)}{=}\frac{s\log_2{n\choose pn}}{rn} \nonumber \\
        &\stackrel{(b)}{=}\frac{sp\log(\frac{1}{p})}{r\log 2}(1+o(1))
        \stackrel{(c)}{=}(\log2)(1+o(1)),
    \end{align}
    where (a) substitutes $T=\frac{rn}{s}$, (b) applies $\log_2{n\choose pn}=\frac{1}{\log2} pn(1+o(1))\log\big(\frac{n}{pn}\big)$ for $p = o(1)$, and (c) substitutes $s=\frac{\log2}{p}$ and $r = \frac{\log(\frac{1}{p})}{\log2}(1+o(1))$ along with some simplifications.
\end{proof}

Corollary \ref{cor:small_p} is consistent with the fact that, in the sub-linear regime $k = o(n)$, DD can achieve a rate of $\log 2$ bits/test for an arbitrarily small target FNR \cite[Sec.~5.1]{Ald19}.  Essentially, these two results can both be viewed as taking the limits $\frac{k}{n} \to 0$ and $\mathrm{FNR} \to 0$, but in the opposite order.

\textbf{Numerical evaluation and comparison:} To numerically evaluate the result given in Theorem \ref{thm:linear_achievability}, we perform the following:
\begin{enumerate}
    \item Select a value $\alpha$ to be the maximum permissible $\text{FNR}$, and select values of $p$ from the interval $(0,0.5]$ to evaluate.
    \item For each $p$, numerically optimize the free parameters $(s,r)$ to minimize the \textit{aspect ratio} $\frac{T}{n}=\frac{r}{s}$, subject to $\text{FNR} \le \alpha$.
    \item Compute the rate $\frac{nH_2(p)}{T}$, and plot this over the chosen values of $p$.
\end{enumerate}

The rates attained by the doubly-regular design (from Theorem \ref{thm:linear_achievability}) are shown in Figure \ref{fig:dconstrate}.

\begin{figure}
    \centering
    \includegraphics[width=0.475\textwidth]{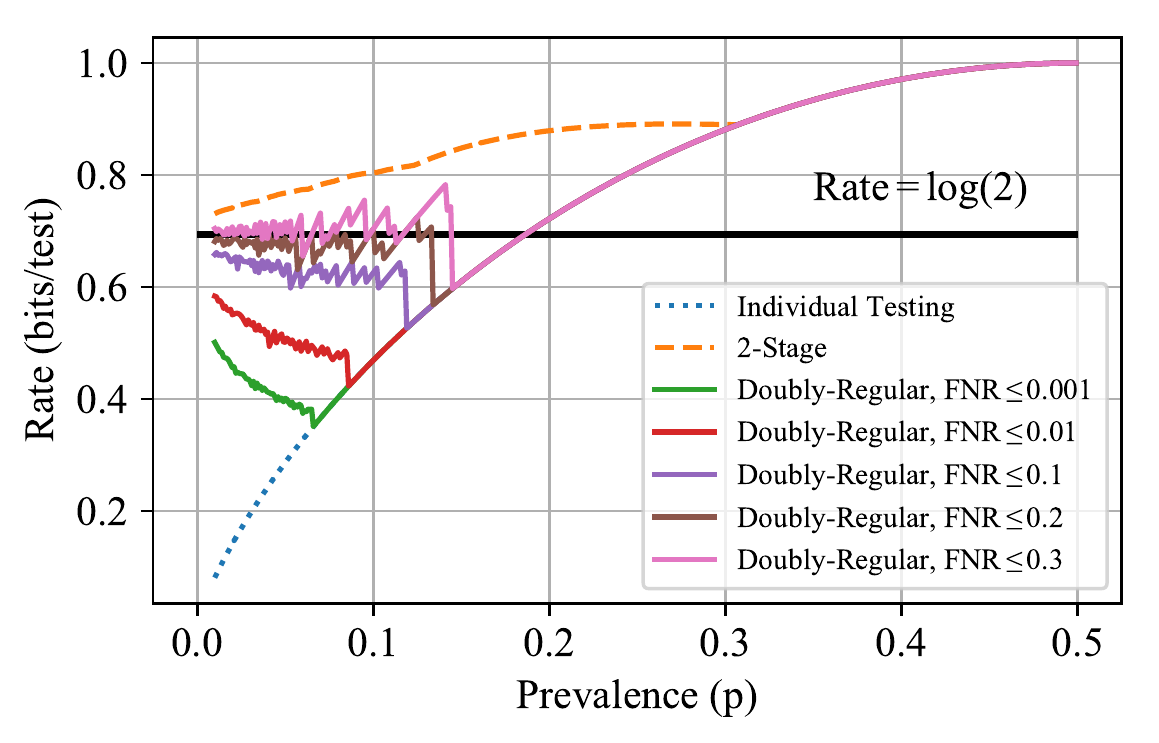}
    \caption{Achievable rates for DD decoding with the doubly-regular design and approximate recovery (along with individual testing and a two-stage design \cite{Ald20}, both of which attain exact recovery).} \vspace*{-1.5ex}
    \label{fig:dconstrate}
\end{figure}

From Figure \ref{fig:dconstrate}, we observe that doubly-regular testing with the DD algorithm attains strictly higher rates than individual testing for smaller values of $p$, while reducing to individual testing for larger $p$. The extent of improvement increases as the target FNR increases. However, the rates obtained by conservative 2-stage testing (with exact recovery) remain higher.  We are not aware of analogous results on the FNR for other non-adaptive designs such as Bernoulli or near-constant column weight, but in Appendix \ref{app:COMP} we compare to those designs under COMP decoding using the FPR instead of FNR, and see that the doubly-regular design almost always outperforms them.

The discontinuities in the plot can be explained by the fact that $r$ (tests per item) and $s$ (items per test) must both be integers. Since $\text{FNR}_{\max}$ is increasing in $p$, the pair $(r,s)$ will change whenever $\text{FNR}_{\max}$ exceeds $\alpha$. This leads to a downward jump in rate, albeit with $\text{FNR}_{\max}$ potentially being significantly smaller than $\alpha$.

\begin{figure}
    \includegraphics[width=0.45\textwidth]{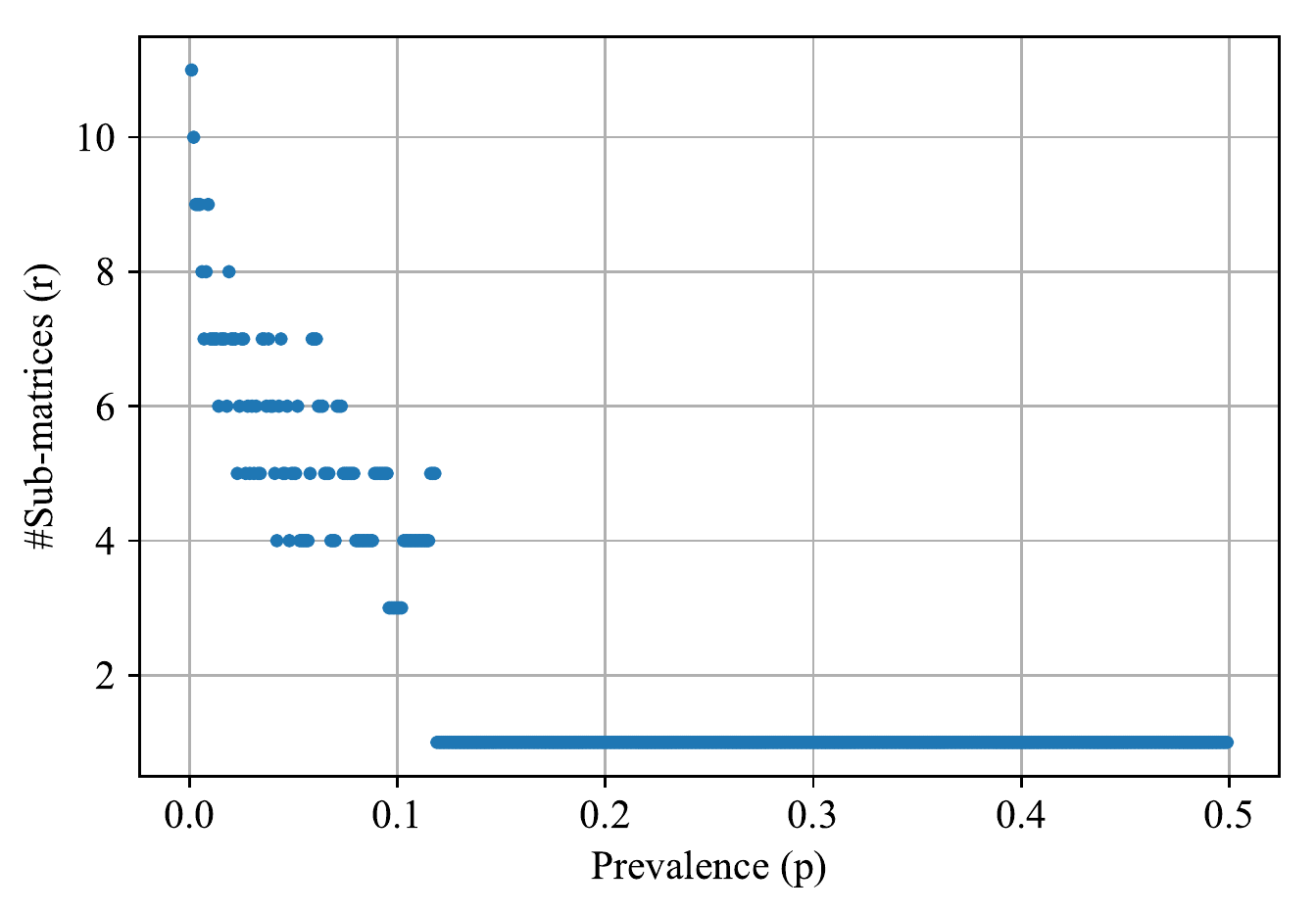} \quad
     \includegraphics[width=0.45\textwidth]{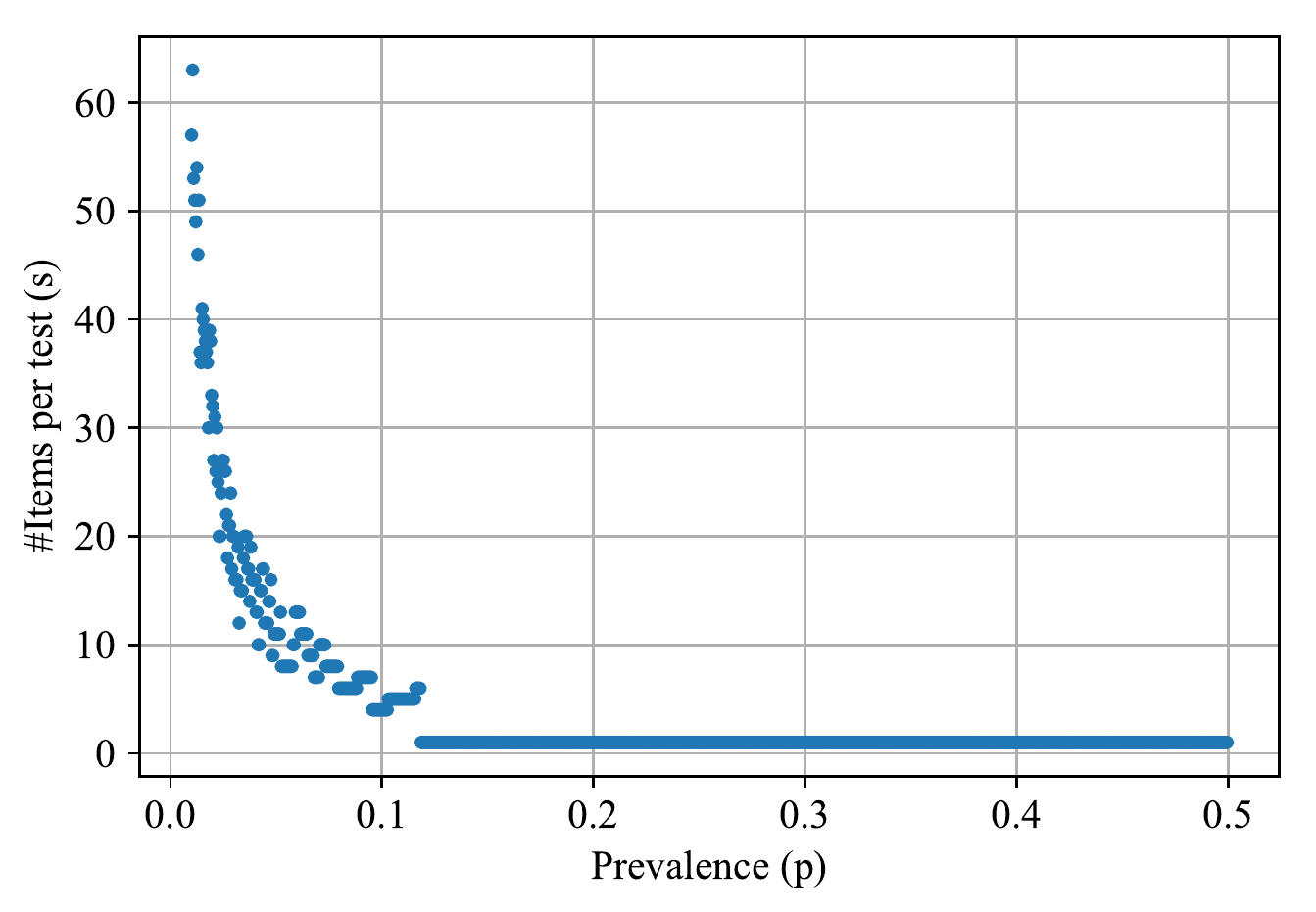}     

    \caption{Optimal $r$ (Left) and $s$ (Right) for $\alpha = 0.1$.} \vspace*{-1.5ex}
    \label{fig:DConstRS}
\end{figure}

Figure \ref{fig:DConstRS} illustrates the optimal values of $r$ and $s$. Generally, $r$ is small, since $\text{FNR}_{\max}$ decreases exponentially as $r$ increases. Although the curve for $s$ is not smooth, it empirically satisfies $s=\Theta(p^{-1})$, which is consistent with Corollary \ref{cor:small_p}.  

At this stage, one may wonder why some of the curves in Figure \ref{fig:dconstrate} appear to approach a value strictly smaller than $\log 2 \approx 0.693$ despite Corollary \ref{cor:small_p}.  The reason is that this behavior is only observed for {\em extremely} small $p$, and the rate almost instantaneously drops to a significantly smaller value as $p$ increases.  In fact, we found that if we set $s = \frac{\log 2}{p}$ and $r = \frac{\log(\frac{1}{p})}{\log2}$ as suggested by Corollary \ref{cor:small_p}, our upper bound on the FNR exceeds one even when $p$ is brought down to the order of $10^{-14}$.  These findings suggest that asymptotic results for the regime $k = o(n)$ should be interpreted with caution when it comes to practical problem sizes.

{\bf Discussion on possible converse results.} It is difficult to gauge the tightness of our achievability result, due to the lack of converse results in this setting.  While we do not attempt to make any formal statements addressing this, we believe that the constant $\log 2$ in Corollary \ref{cor:small_p} is likely to be the best possible.  This is supported by the following:
\begin{itemize}
    \item Under the near-constant tests-per-item design, the DD algorithm is known to fail to attain ${\rm FNR} \to 0$ at rates exceeding $\log 2$ bits/test \cite{Coj19}.  Furthermore, it appears unlikely that any algorithm could outperform DD for the goal of attaining both ${\rm FPR} = 0$ and ${\rm FNR} \to 0$.\footnote{On the other hand, an improved rate of $1$ bit/test is possible when we only require ${\rm FPR} \to 0$ and ${\rm FNR} \to 0$ \cite{Sca15b}.}
    \item In Appendix \ref{app:COMP}, we study the ``opposite'' goal of attaining ${\rm FNR} = 0$ and ${\rm FPR} \to 0$, and in that case one can rigorously show that $\log 2$ bits/test is the best possible.
\end{itemize} 

\subsection{Size-Constrained Sub-Linear Regime}

We state our achievability result below, and prove it in Section \ref{sec:con_sublinear}.

\begin{theorem} \label{thm:sparse_unconstrained_achievability}
For $k=\Theta(n^{\theta})$ with $\theta\in[0,1)$, and $\rho=\Theta\big(\big(\frac{n}{k}\big)^\beta\big)$ with $\beta\in(0,1)$, for any integer $r$ satisfying:
\begin{itemize}
    \item If $\theta\geq\frac{1}{2}$: $r>\frac{\theta}{(1-\theta)(1-\beta)}$ and $r\geq\frac{2-\beta}{1-\beta}$;
    \item If $\theta<\frac{1}{2}$: $r\geq\frac{1-\theta\beta}{(1-\theta)(1-\beta)}$;
\end{itemize}
the DD algorithm with $T = \frac{rn}{\rho}$ tests, chosen according to the above randomized doubly-regular design with $s = \rho$, recovers the defective set with asymptotically vanishing error probability.
\end{theorem}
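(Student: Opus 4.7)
The plan is to apply a union bound over the $k$ defectives and carefully bound $\mathbb{P}[\ell \notin \mathcal{DD}]$ for each defective $\ell$. A defective $\ell$ fails to appear in $\mathcal{DD}$ iff every test containing $\ell$ also contains another item in $\mathcal{PD}$---either another defective, or a non-defective ``intruder'' that happens to appear only in positive tests across all its $r$ rounds. Writing $A_i$ for the event that $\ell$'s round-$i$ test contains another defective and $C_i$ for the event that it contains an intruder, the failure event is $\bigcap_{i=1}^r(A_i \cup C_i)$.

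The crucial structural ingredient is that the rounds of the doubly-regular design are drawn independently. Hence the events $\{A_i\}_{i=1}^r$ are mutually independent, each with probability $1-q_d$, where $q_d = \binom{n-k}{s-1}/\binom{n-1}{s-1}$ satisfies $1 - q_d = (1+o(1))\tfrac{ks}{n}$ under the scaling $s = \rho = o(n/k)$. Likewise, for a non-defective $j$, conditional on $j$ appearing in $\ell$'s round-$i$ test, the other $r-1$ rounds of $j$ are independent of round $i$, so the probability that $j$ is an intruder equals $(1+o(1))(ks/n)^{r-1}$. A union bound over the $s-1$ non-defective candidates in a clean test then yields $\mathbb{P}[C_i \mid A_i^c] \lesssim s(ks/n)^{r-1}$.

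Combining over rounds using the independence of the $A_i$'s---and bounding the probability that all clean rounds (those with $A_i^c$) simultaneously contain an intruder by decoupling per candidate intruder, which is justified because a candidate $j$ locked to $\ell$ in a particular round contributes to its intruder status only through the other $r-1$ rounds (independent of $j$'s placements in rounds where it is locked to $\ell$)---one obtains an upper bound essentially of the form
\begin{align*}
    \mathbb{P}[\ell \notin \mathcal{DD}] \lesssim \left(\frac{ks}{n} + s\left(\frac{ks}{n}\right)^{r-1}\right)^r.
\end{align*}
Multiplying by $k = \Theta(n^\theta)$ and substituting $s = \rho = \Theta((n/k)^\beta)$ converts this into $n^\theta (A+B)^r$, with $A = n^{-(1-\theta)(1-\beta)}$ (``another defective'' contribution) and $B = n^{(1-\theta)(1-r(1-\beta))}$ (``intruder in clean round'' contribution); the two candidates swap dominance at $r = (2-\beta)/(1-\beta)$. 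Requiring $k A^r = o(1)$ yields $r > \theta/((1-\theta)(1-\beta))$, and $k B^r = o(1)$ rearranges algebraically to a condition equivalent to $r(1-\theta)(1-\beta) \geq 1 - \theta\beta$. The two-case structure of the theorem then reflects which of $A$ or $B$ dominates: for $\theta \geq \tfrac12$ the hypothesis $r \geq (2-\beta)/(1-\beta)$ places us in the $A^r$-dominant regime (first condition), while for $\theta < \tfrac12$ we land in the $B^r$-dominant regime (second condition).

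The main obstacle I anticipate is making rigorous the decoupling of intruder events across distinct clean rounds, since these events formally share the underlying random test design. The plan is a first-moment calculation over tuples of (clean round, candidate intruder), exploiting that conditioning a candidate $j$ into a particular round's test for $\ell$ leaves the remaining $r-1$ rounds' joint distribution unchanged by round-independence. A looser but conceptually simpler alternative route is a Markov bound on the total intruder count $|G|$ combined with a conditional analysis, mirroring the standard DD arguments for i.i.d.\ test designs~\cite{Ald14a}; this alternative appears to give the stated condition in the $\theta \geq \tfrac12$ case but not the sharp numerator $(1 - \theta\beta)$ in the $\theta < \tfrac12$ case, where the finer binomial-sum analysis seems essential.
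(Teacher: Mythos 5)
Your high-level shape (union bound over defectives, per-round split into ``masked by another defective'' vs.\ ``masked by an intruder'', independence of the $r$ blocks) is sound, but the decoupling step you flag as the main obstacle is a genuine gap, not a technicality: the intermediate bound $\mathbb{P}[\ell\notin\mathcal{DD}]\lesssim\big(\tfrac{ks}{n}+s(\tfrac{ks}{n})^{r-1}\big)^r$ around which your argument is organized is actually \emph{false} in part of the parameter range covered by the theorem. The issue is a single non-defective serving as the intruder for several (or all) of $\ell$'s clean rounds. If a non-defective $j$ lands in $\ell$'s test in $m\ge 2$ rounds, those placements already make its tests positive, so its intruder probability is $(1+o(1))(ks/n)^{r-m}$, not $(ks/n)^{r-1}$; the corresponding first-moment term is $n(s/n)^m(ks/n)^{r-m}$, which exceeds the distinct-candidates term $\big(s(ks/n)^{r-1}\big)^m$ precisely when $\E[G]\approx n(ks/n)^r<1$. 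Concretely, take $\theta=0.1$, $\beta=0.5$, so the theorem permits $r=3$: the probability that some non-defective shares all three of $\ell$'s tests is $\Theta\big(n(s/n)^3\big)=\Theta(n^{-0.65})$, and such an item is automatically in $\mathcal{PD}$, forcing $\ell\notin\mathcal{DD}$; your claimed bound is $\Theta(n^{-1.35})$. The theorem still holds because $k\cdot n^{-0.65}=o(1)$, but your derivation of the $r$-conditions from $kA^r$ and $kB^r$ alone misses these repeated-intruder terms (after the union bound over defectives they take the form $k^{1+|P|-|S|}A^{r-|S|}B^{|P|}$ for $|P|$ distinct intruders covering $|S|$ clean rounds), and one must check separately that all of them vanish under the stated hypotheses --- true, but it requires a case analysis you have not supplied. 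A secondary error: $kB^r=o(1)$ is \emph{not} equivalent to $r(1-\theta)(1-\beta)\ge 1-\theta\beta$; it is a quadratic condition in $r$ that is strictly weaker (the implication goes in the direction you need, but the numerator $1-\theta\beta$ does not actually emerge from your calculation).

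For what it is worth, your ``looser alternative'' is essentially the paper's proof, and it does recover the sharp $\theta<\tfrac12$ condition: the paper bounds the total intruder count $G$ by Markov at the threshold $k^\beta\log n$ for $\theta<\tfrac12$ (and $\tfrac{k^2\rho}{n}\log n$ for $\theta\ge\tfrac12$), conditions on $G$, the per-block masked-defective counts $M^j$, and the placements of non-$\mathcal{PD}$ items into negative tests so that the blocks remain conditionally independent, bounds each block's masking probability by $\tfrac{M^j+G}{k}$, raises to the power $r$, and union-bounds over defectives. In that route the numerator $1-\theta\beta$ comes from requiring the Markov probability $\E[G]/(k^\beta\log n)=n^{1-\theta\beta-r(1-\theta)(1-\beta)}/\log n$ to vanish (and $r\ge\tfrac{2-\beta}{1-\beta}$ plays the same role for $\theta\ge\tfrac12$), not from the final union bound; so your concern that this route cannot give the $1-\theta\beta$ numerator is unfounded, and it sidesteps the repeated-intruder bookkeeping entirely, at the cost of a careful conditional-independence argument across blocks.
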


We additionally provide a converse result, which is stated as follows and proved in Section \ref{sec:uncon_sublinear_converse}.

\begin{theorem} \label{thm:rho_converse}
Suppose that $k=\Theta\big(n^\theta\big)$, for $\theta\in(0,1)$, and let $\mathsf{X}$ be a non-adaptive test matrix such that each test contains at most $\rho=\Theta\big(\big(\frac{n}{k}\big)\big)^\beta$ items, for $\beta\in(0,1)$. Given\footnote{The $\frac{1}{1-\beta}$ term comes from the converse in \cite{Ven19}, and need not be integer-valued.}
\begin{align}
    r&=\max\bigg\{2,\frac{1}{1-\beta},\bigg\lceil\frac{1-(1-\theta)(2\beta+1)}{(1-\theta)(1-\beta)}\bigg\rceil\bigg\}, \label{eq:two_cases_previously}
\end{align}
for an arbitrary constant $\epsilon>0$, if there are $(1-\epsilon)\frac{rn}{\rho}$ or fewer tests, then any decoder has error probability $1-o(1)$.
\end{theorem}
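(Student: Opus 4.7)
My plan is to establish the converse by exhibiting, with probability $1-o(1)$ over the random defective set $\mathcal{K}$, an alternative defective set $\mathcal{K}' \neq \mathcal{K}$ of the same cardinality that induces identical test outcomes. Such an ambiguity forces any decoder---including the information-theoretically optimal one---to err with probability $1-o(1)$. The three-way maximum in \eqref{eq:two_cases_previously} arises because three different obstructions must be overcome: the term $\tfrac{1}{1-\beta}$ is inherited directly from the counting-based converse of \cite{Ven19}; the bound $r \geq 2$ reflects a trivial combinatorial floor; and the ceiling term arises from a new masking-based argument that adapts the approach of \cite{Oli20} (which treated only $\rho = \Theta(1)$) to the regime $\rho = \Theta\big((n/k)^\beta\big)$ with $\beta \in (0,1)$.

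For the masking component---the main novelty---I would first fix an arbitrary non-adaptive test matrix $\mathsf{X}$ with row sums at most $\rho$ and at most $T \leq (1-\epsilon)\tfrac{rn}{\rho}$ tests, so that the total number of item--test incidences is at most $(1-\epsilon)rn$. An averaging argument then implies that a constant fraction of items occur in at most $r-1$ tests; call these the \emph{low-weight} items. I would then analyze, over the randomness in $\mathcal{K}$, the existence of a \emph{swap pair} $(a,b)$ with $a \in \mathcal{K}$, $b \notin \mathcal{K}$, both low-weight, such that every test containing $a$ but not $b$ also contains some other defective from $\mathcal{K}\setminus\{a\}$, and every test containing $b$ but not $a$ likewise contains a defective from $\mathcal{K}\setminus\{a\}$. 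When both conditions hold, the swap $\mathcal{K}' = (\mathcal{K}\setminus\{a\})\cup\{b\}$ preserves every test outcome, so $\mathcal{K}$ and $\mathcal{K}'$ are indistinguishable.

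The core estimate is a first-moment calculation for the number of swap pairs. For each pair $(a,b)$, the ``masking'' probability is controlled by a product of factors of the form $1-(1-k/n)^{\rho-1} \asymp \rho k/n = n^{-(1-\theta)(1-\beta)}$, one factor per relevant test of $a$ or $b$. Summing the resulting per-pair probabilities over the $k(n-k) = \Theta(n^{1+\theta})$ candidate pairs, while carefully tracking how the two-sided masking condition interacts with the low-weight constraint and the overlap of $a$'s and $b$'s test sets, shows that the expected number of swap pairs diverges precisely when $r$ falls below $\tfrac{1-(1-\theta)(2\beta+1)}{(1-\theta)(1-\beta)}$. A second-moment (Chebyshev) argument then upgrades this expectation lower bound into a high-probability existence statement, yielding error probability $1-o(1)$ and completing the converse.

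The main obstacle will be the second-moment computation under an \emph{adversarial} test matrix: the events that different pairs $(a,b)$ form valid swap pairs depend on the shared random $\mathcal{K}$ and on the overlap structure of the tests, and an adversary can try to concentrate tests so as to correlate these events unfavorably and shrink the second-moment lower bound. Handling this carefully will likely require a case analysis over the item degree sequence $\{d_i\}_{i=1}^n$, combined with correlation inequalities or a conditioning argument restricted to the low-weight subset. A secondary subtlety is unifying the three mechanisms responsible for the maximum in \eqref{eq:two_cases_previously}, since each dominates in a different $(\theta,\beta)$ regime and may require a distinct extremal configuration of the test matrix to be exhibited in the converse.
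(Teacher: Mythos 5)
Your high-level strategy for the ceiling term (find a masked defective and a masked non-defective, swap them to get an indistinguishable $\mathcal{K}'$, count low-degree items by averaging) is the same starting point as the paper's Part I, but the step you yourself flag as the main obstacle---making the existence statement hold with probability $1-o(1)$ against an \emph{arbitrary} $\rho$-sparse design---is exactly where your plan is incomplete, and the paper does not solve it by a second-moment computation at all. Instead it prunes the bipartite graph (removing items of degree $>\log n$ and tests of degree $<\rho/\log n$ together with their items, which also handles the possibility of small tests that your estimate $1-(1-k/n)^{\rho-1}\asymp \rho k/n$ silently assumes away), and then greedily extracts $N=n^{1-2\beta(1-\theta)}/\log^3 n$ items of degree at most $r-1$ whose pairwise distance in the graph is at least $6$, so that their masking events are mutually independent; the per-item masking probability is lower bounded via the FKG inequality, and a Chernoff bound then gives $|V_{1+}|,|V_{0+}|=\omega(1)$, after which a transfer lemma moves from an i.i.d.\ prior back to the combinatorial prior. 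Crucially, the $2\beta+1$ in the threshold is \emph{not} where a naive first moment over all pairs diverges: summing the masking probability over $\Theta(n)$ low-degree defectives would give divergence up to the larger threshold $\theta/((1-\theta)(1-\beta))$, and the extra $-2\beta(1-\theta)$ in the exponent is precisely the price of the distance-$6$ extraction (each extraction deletes up to $\Theta(\rho^2\log^2 n)$ items) that buys independence. So your claim that the expectation of swap pairs ``diverges precisely when $r$ falls below'' the ceiling value is unsubstantiated, and without either that extraction device or a genuinely new correlation argument, the concentration step---and hence the converse at the stated threshold---is missing.

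The second gap is the term $r\ge 2$, which you dismiss as ``a trivial combinatorial floor.'' Requiring each item to be tested at least once only forces $T\gtrsim n/\rho$; getting $(1-\epsilon)\cdot 2n/\rho$ is a real argument, and in the paper it occupies all of Part II: assuming success probability $\Omega(1)$ with $T=(1-\epsilon)2n/\rho$, one shows there are at least $2\epsilon n(1-o(1))$ degree-one items, then splits on whether the number $T'$ of tests containing at least $\rho/\log n$ degree-one items exceeds $n/(\rho\log n)$. If it does, a Chernoff/hypergeometric analysis produces $\omega(1)$ tests with exactly one degree-one defective alongside $\omega(1)$ masked non-defectives, so the decoder cannot do better than guessing among $\ge \rho/\log n$ candidates per test; if it does not, an edge count forces $\epsilon=o(1)$, a contradiction. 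This is the part the authors single out as requiring the most substantial changes relative to the $\rho=O(1)$ analysis of the prior work, so it cannot be absorbed into a one-line remark. Finally, note that for a converse you must handle every admissible test matrix, so your closing suggestion of exhibiting ``a distinct extremal configuration of the test matrix'' per regime is not the right shape of argument.
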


The plots of the constant $r$ against $\theta$ are displayed in Figure \ref{fig:c_vs_theta_plots}. Our main results do not apply to the case that $\beta = 0$ exactly (which was already handled in \cite{Oli20}), but we can plot the relevant limits as $\beta \to 0$.  Similarly, the results of \cite{Oli20} do not apply when $\beta > 0$, but we can plot the relevant limits as $\rho \to \infty$.  In Figure \ref{fig:c_vs_theta_plots} (top-left), the same curve is obtained in both limits, and in both cases we have matching achievability and converse bounds. 

For the other values of $\beta$ shown, we observe strict improvements of DD over COMP, and the gap widens as $\beta$ increases. For the converse, we similarly observe a strict improvement over the previous converse for $\beta\in(0,1)$.

\begin{figure*}
    \begin{tabular}{ll}
    \includegraphics[width=0.45\textwidth]{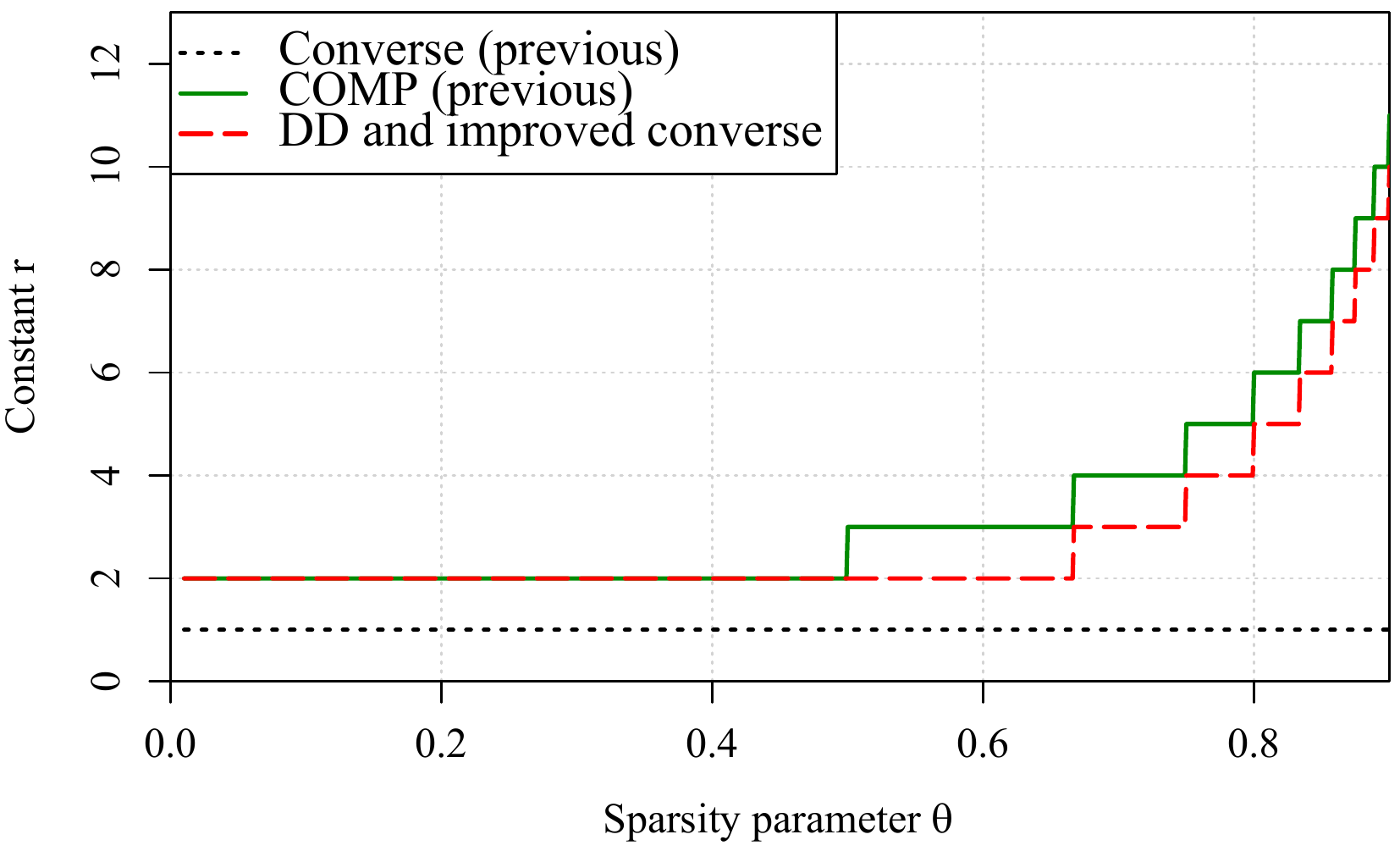} &
    \includegraphics[width=0.45\textwidth]{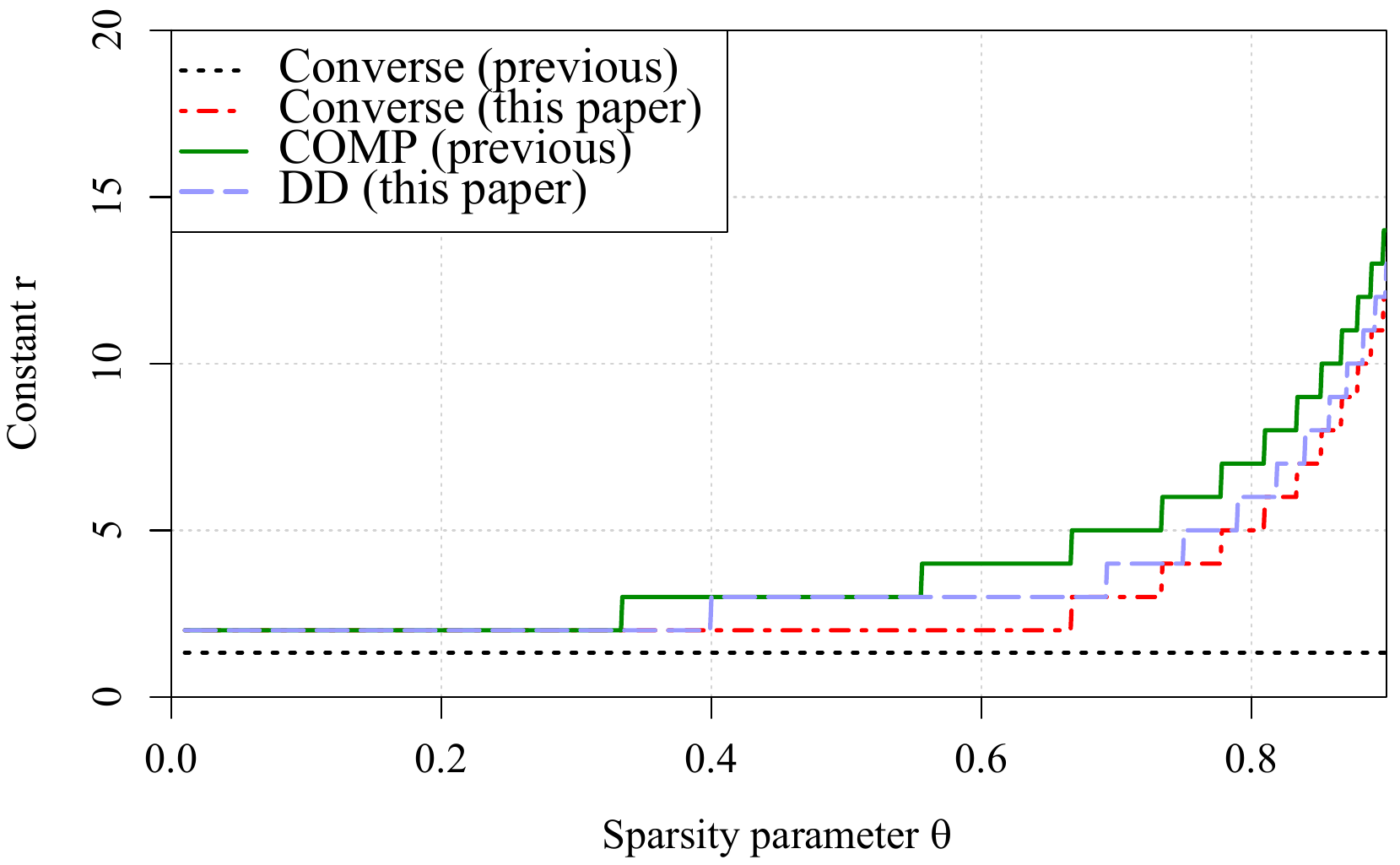} \\
    \includegraphics[width=0.45\textwidth]{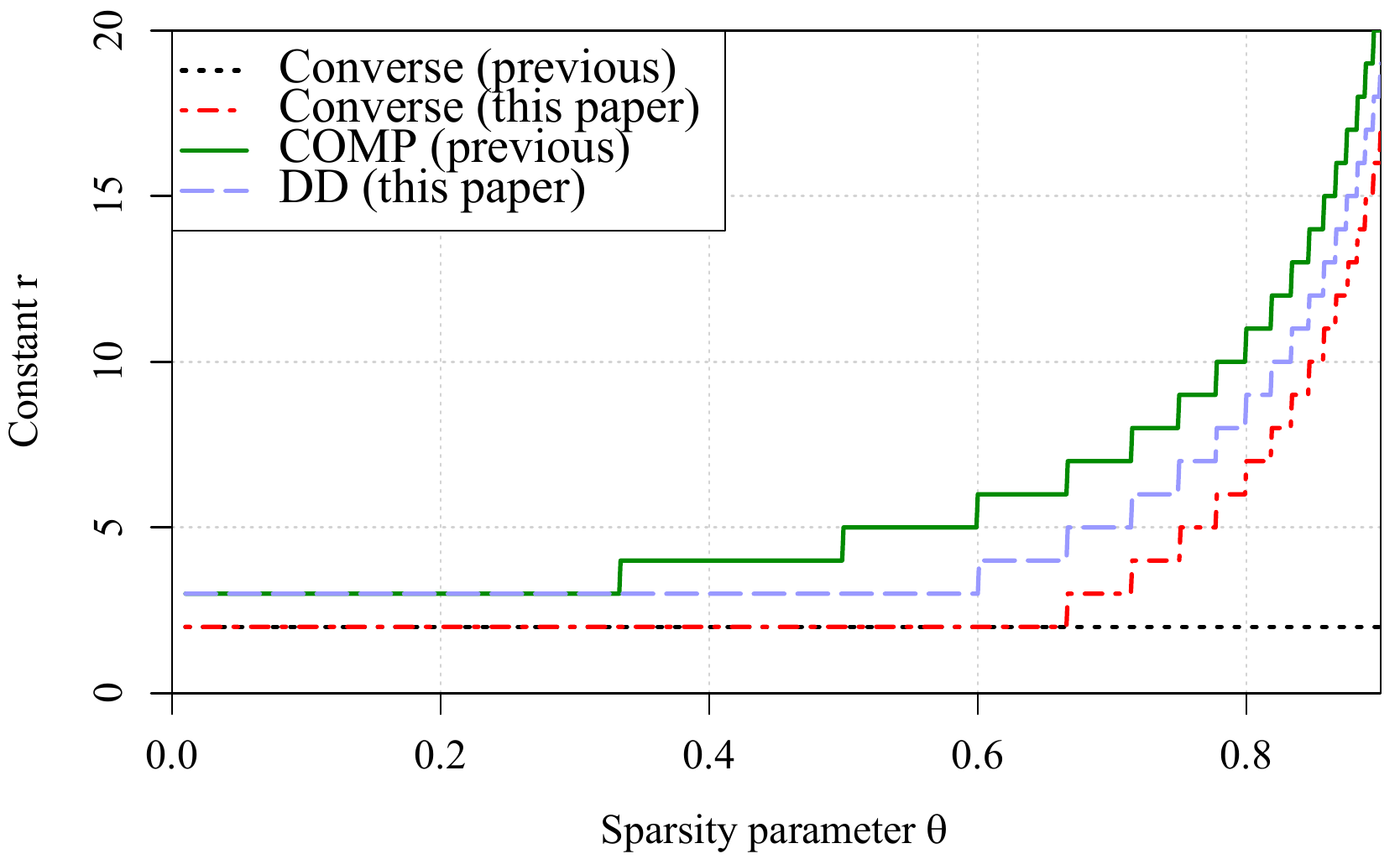} &
    \includegraphics[width=0.45\textwidth]{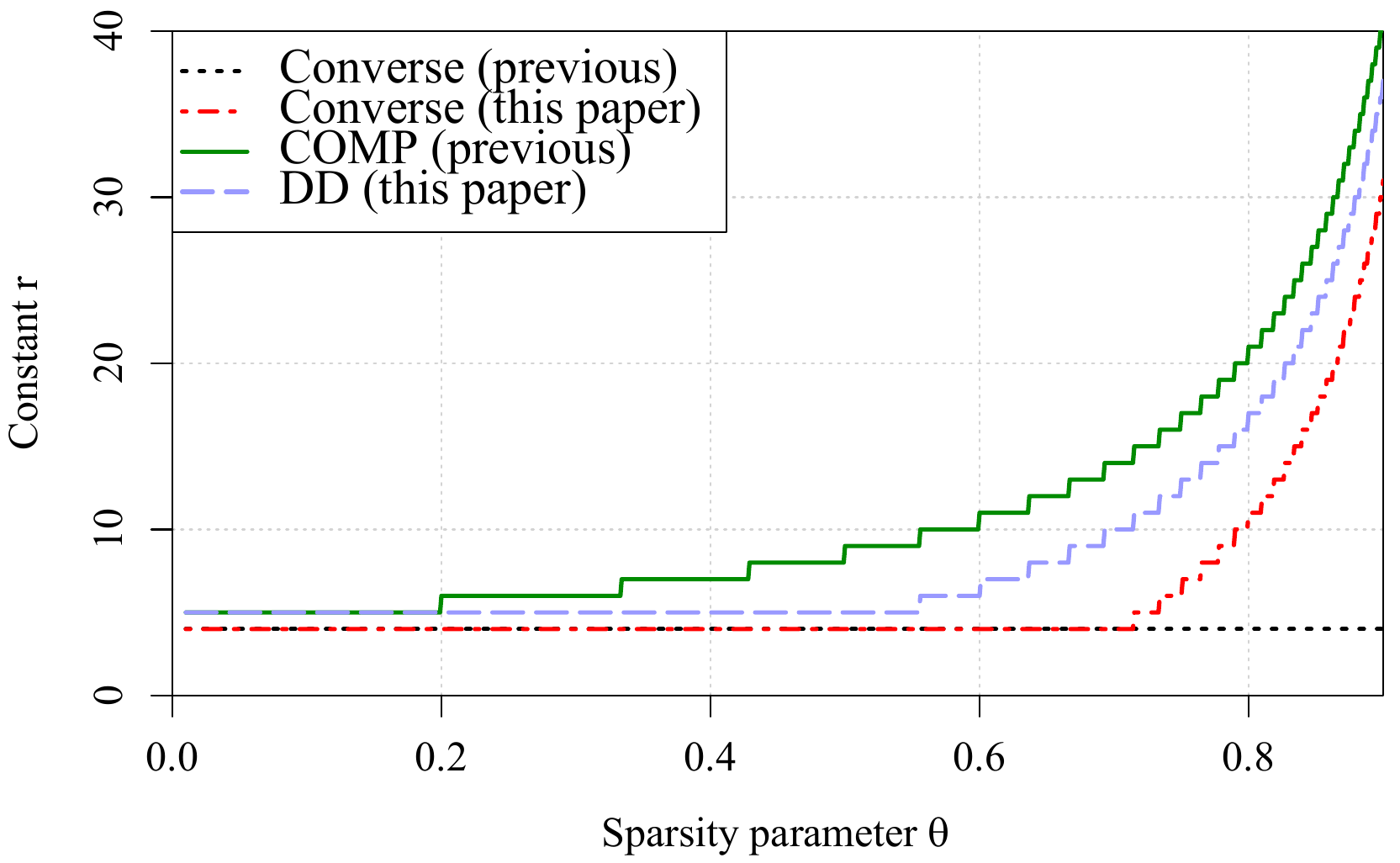}
    \end{tabular}
    \caption{Plots of $r$ vs. $\theta$ for $\beta \to 0$ (top-left), $\beta=0.25$ (top-right), $\beta=0.5$ (bottom-left), and $\beta=0.75$ (bottom-right).} \vspace*{-1.5ex}
    \label{fig:c_vs_theta_plots}
\end{figure*}

\section{Achievability Analysis} \label{sec:ach_ana}

In this section, we prove the three achievability results stated above.   In general, doubly-regular designs come with more complicated dependencies that are difficult to handle. The construction that we consider (i.e., concatenating independent doubly-regular sub-matrices with column weight one) allows us to simplify the analysis of the DD algorithm by extending the analysis of one sub-matrix to the entire test matrix.

We proceed by outlining the key steps of the analysis and introducing the relevant notation. This applies to all three settings that we consider, and the differences lie in the specific details (e.g., the choices of $r$ and $s$) and the subsequent parts that extend from these steps. The key steps are:
\begin{enumerate}
    \item \textbf{Determine concentration of \#non-defective items in $\mathcal{PD}$:} Let $\mathcal{G}$ be the set of non-defective items in $\mathcal{PD}$, and $G=|\mathcal{G}|$. Furthermore, for each non-defective $i$, let $\mathrm{PD}_i=\mathds{1}\{i\in\mathcal{PD}\}$. Then, $\sum_{i\in[n]\setminus\mathcal{K}}\mathrm{PD}_i=G$. This step concerns the concentration of $G$ around its mean. We start by considering the probability that a particular non-defective item $i$ is found in a negative test in a single sub-matrix $\textsf{X}_j$. Consider the unique test which item $i$ is found in. The probability that all the other $s-1$ items in the test are non-defective is
    \begin{align}
        \frac{{n-k-1 \choose s-1}}{{n-1 \choose s-1}} 
        &=\frac{(n-k-1)!(n-s)!}{(n-1)!(n-k-s)!} \nonumber \\
        &=\prod_{i=1}^{s-1}\frac{n-i-k}{n-i}
        =\prod_{i=1}^{s-1}\Big(1-\frac{k}{n-i}\Big). \label{eq:prob_nondef_in_neg_test_general}
    \end{align}
    Hence, consider all of the $r$ sub-matrices, we obtain
    \begin{align}
        \E[\mathrm{PD}_i]
        &\stackrel{(a)}{=}\mathbb{P}[i\in\mathcal{PD}] \nonumber \\
        &\stackrel{(b)}{=}\bigg(1-\prod_{i=1}^{s-1}\Big(1-\frac{k}{n-i}\Big)\bigg)^{r}, \label{eq:E[PD_i]_general}
    \end{align}
    where (a) applies $\E[{\mathds{1}\{A\}}]=\mathbb{P}[A]$, and (b) applies \eqref{eq:prob_nondef_in_neg_test_general}. Hence, we have
    \begin{align}
        \E[G]
        &=(n-k)\E[\mathrm{PD}_i] \nonumber \\
        &=(n-k)\bigg(1-\prod_{i=1}^{s-1}\Big(1-\frac{k}{n-i}\Big)\bigg)^{r}. \label{eq:E[G]_general}
    \end{align}
    Depending on the setting, we use an appropriate concentration inequality (e.g., Chebyshev's inequality) to show that $G$ is close to this value with probability $1-o(1)$.
    
    \item \textbf{Determine concentration of \#masked defective items:} We introduce two further definitions.
    \begin{definition}
    Consider a defective item $i$ and a set $\mathcal{L}$ that does not include $i$. We say that defective item $i$ is \textit{masked} by $\mathcal{L}$ if every test that includes $i$ also includes at least one item from $\mathcal{L}$.
    \end{definition}
    \begin{definition} \label{def:masked_def_item}
    We call an item $i$ \textit{masked} if it is masked by $\mathcal{K}\setminus\{i\}$.\footnote{Here we are concerned with the case that $i$ is defective, but later we will use this definition where $i$ is non-defective, and hence $\mathcal{K}\setminus\{i\} = \mathcal{K}$.}  If the matrix is not specified then this property is defined with respect to the full test matrix $\mathsf{X}$, but we will also use the same terminology with respect to a given sub-matrix $\mathsf{X}_j$.
    \end{definition}
    For a given defective item $i$, let $\mathrm{M}_i^j=\mathds{1}\{\text{$i$ is masked in $\mathsf{X}_j$}\}$, and let $M^j=\sum_{i\in\mathcal{K}}\mathrm{M}_i^j$ be the number of masked defective items in $\mathsf{X}_j$. Without loss of generality (WLOG), suppose that items $1,\dots,k$ are defective. We have
    \begin{align}
        \mathbb{P}\big[\mathrm{M}_i^j=0\big]
        =\frac{{n-k\choose s-1}}{{n-1\choose s-1}}
        =\prod_{i=1}^{s-1}\Big(1-\frac{k-1}{n-i}\Big), \label{eq:P[M_i^j]_general}
    \end{align}
    where the last equality uses the same steps as \eqref{eq:prob_nondef_in_neg_test_general}. Summing over all $k$ defective items, it follows that
    \begin{align}
        \E\big[M^j\big]
        &=k\big(1-\mathbb{P}\big[\mathrm{M}_i^j=0\big]\big) \nonumber \\
        &=k\bigg(1-\prod_{i=1}^{s-1}\Big(1-\frac{k-1}{n-i}\Big)\bigg). \label{eq:E[M_i^j]_general}
    \end{align}
    Depending on the setting, we use an appropriate technique to show that $M^j$ concentrates around this value for all $\mathsf{X}_j$ simultaneously, with probability $1-o(1)$.
    
    \item \textbf{Establish conditional independence:} In this step, we condition on the preceding high-probability bounds on $G$ and $M^j$ holding.  To facilitate the analysis, it is useful to not only condition on such events, but to condition on more specific events that ensure such concentration (apart from these, one final conditioning event will also be given below):
    \begin{itemize}
        \item We condition on a fixed set of tests being positive, and the remaining tests being negative.  This fixed set has no explicit constraints.
        \item We condition on a fixed realization of the defective set $\mathcal{K}$.
        \item We condition on $\mathcal{G}$ (the non-defectives that are marked as possibly defective) being a fixed set with some fixed size $G$.  This value of $G$ is assumed to satisfy the above-established concentration behavior.
        \item For $j=1,\dotsc,r$, we condition on the {\em number} of masked defectives in sub-matrix $\mathsf{X}_j$ being $M^j$, whose values again satisfy our established concentration results.  Note that unlike with $\mathcal{G}$, we do not condition on the specific set of masked item indices, but instead only on the total number.
    \end{itemize}
    Once we show that the conditional error probability is suitably small, it follows that the same is true after averaging (over all realizations of positive tests, all possible $\mathcal{K}$, and so on).

    By the symmetry of the test design, without loss of generality, we can consider the following:
    \begin{itemize}
        \item The defective items are indexed by $1,\dotsc,k$.
        \item The items in $\mathcal{G}$ indexed by $k+1,k+2,\dotsc,k+G$.
    \end{itemize}
    A slight issue here is that if we naively condition on the above sets of size $G$ and $M^j$, the resulting sub-matrices $\mathsf{X}_1,\dots,\mathsf{X}_r$ will typically not be conditionally independent, since the sub-matrices are coupled via $\mathcal{G}$.  Specifically, conditioning on $\mathcal{G}$ amounts to conditioning on (i) items $k+1,k+2,\dots,k+G$ only being in positive tests in {\em all} sub-matrices, and (ii) each of items $k+G+1,k+G+2,\dots,n$ being in a negative test in {\em some} sub-matrix.  The latter of these means, for example, that if we are told that item $k+G+1$ is in a negative test in submatrix 1, it reduces the probability of the same being true in submatrix 2 (thus violating independence).
    
    
    To overcome this difficulty, we additionally condition on the {\em fixed and specific} placements of items $k+G+1,k+G+2,\dots,n$ into negative tests (but not positive tests).  This must be done subject to each of them being in some negative test, but the placements are otherwise arbitrary and do not impact our analysis.  
    
    When this additional conditioning is done, the overall conditioning involving $G$ and $M^j$ becomes an ``AND'' of $r$ sub-events (one per sub-matrix), with each sub-event requiring that (i) $M^j$ defectives are masked in  sub-matrix $j$, (ii) items $k+1,k+2,\dots,k+G$ are only included in positive tests, and (iii) items $k+G+1,k+G+2,\dots,n$ are included in the specific negative tests indicated.  Due to this ``AND'' structure, conditional independence is maintained among the $r$ sub-matrices.  For completeness, we provide the mathematical details of this finding in Appendix \ref{app:condition}.
    
    \item \textbf{Bound the total error probability:} Recall that $\mathrm{M}_i^j$ is the indicator event that defective $i$ is masked by $\mathcal{K}\setminus\{i\}$ in $\mathsf{X}_j$, and let $\mathrm{MG}_i^j$ be the event that defective item $i$ is masked by $\mathcal{G}$ in $\mathsf{X}_j$. For a given $\mathsf{X}_j$, we have the following, where $\mathcal{E}$ is the event described in step 3 above and the subscript $\mathbb{P}_{\mathcal{E}}[\cdot]$ indicates conditioning:
    \begin{align}
        &\mathbb{P}_{\mathcal{E}}\big[\mathrm{M}_i^j\cup \mathrm{MG}_i^j\big] \nonumber \\
        &=\mathbb{P}_{\mathcal{E}}\big[\mathrm{M}_i^j\big]+\mathbb{P}_{\mathcal{E}}\big[\mathrm{MG}_i^j\cap\neg \mathrm{M}_i^j\big] \\
        &=\mathbb{P}_{\mathcal{E}}\big[\mathrm{M}_i^j\big]+\mathbb{P}_{\mathcal{E}}\big[\neg \mathrm{M}_i^j\big]\mathbb{P}_{\mathcal{E}}\big[\mathrm{MG}_i^j|\neg \mathrm{M}_i^j\big] \\
        &\stackrel{(a)}{=}\frac{M^j}{k}+\Big(1-\frac{M^j}{k}\Big)\mathbb{P}_{\mathcal{E}}\big[\mathrm{MG}_i^j|\neg \mathrm{M}_i^j\big] \\
        &\stackrel{(b)}{\leq}\frac{M^j}{k}+\Big(1-\frac{M^j}{k}\Big)\frac{G}{k-M^j} \label{eq:loose_upp_bound} \\
        &=\frac{M^j}{k}+\frac{G}{k}, \label{eq:M_G_k}
    \end{align}
    where:
    \begin{itemize}
        \item (a) uses the fact that that we conditioned on having $M^j$ masked defective items, so by the symmetry of the randomized test design, we have $\mathbb{P}\big[\mathrm{M}_i^j\big]=\frac{M^j}{k}$.
        \item (b) holds because conditioned on $\neg \mathrm{M}_i^j$, item $i$ is in one of the $k-M^j$ tests with a single defective item (recall that $k-M^j$ is the number of non-masked defective items). Again using symmetry, each non-defective item has at most $\frac{1}{k-M^j}$ probability of being in that particular test (due to there existing $k - M^j$ equally likely options). Taking the union bound over all items in $\mathcal{G}$ gives the desired bound.
    \end{itemize}
    By \eqref{eq:M_G_k} and the conditional independence of the $\mathsf{X}_j$'s, the probability of defective item $i$ being masked by $\mathcal{PD}\setminus\{i\}$ in all sub-matrices satisfies
    \begin{align}
        \prod_{j=1}^{r}\mathbb{P}_{\mathcal{E}}\big[\mathrm{M}_i^j\cup \mathrm{MG}_i^j\big]
        &\leq\Big(\frac{M^j}{k}+\frac{G}{k}\Big)^r. \label{eq:prob_def_i_masked_by_PD_general}
    \end{align}
\end{enumerate}

\begin{remark} \label{rem:techniques}
    {\em (Comparison to existing techniques)} Our approach is distinct from existing analyses of the DD algorithm \cite{Ald14a,Joh16,Oli20}, which use a ``globally symmetric'' random matrix with no sub-matrix block structure.  Like all of these works, we still adopt the the high-level steps of characterizing $G$ and then characterizing $M^j$ conditioned on $G$, but the details are largely different, including the unique aspect of transferring results regarding simpler sub-matrices to the entire test matrix.  Moreover, regarding the size-constrained regime, we note that the analysis of \cite{Oli20} relies strongly on the assumption $\rho = O(1)$, which is what led us to adopting a distinct approach.  Compared to \cite{Oli20}, we believe that our test design and analysis are relatively simpler, though the regimes handled are different and neither subsumes the other ($\beta = 0$ vs.~$\beta > 0$).
\end{remark}

\subsection{Unconstrained Sub-Linear Regime With Exact Recovery (Theorem \ref{thm:unconstrained_achievability})} \label{sec:uncon_sublinear}

We follow the four steps described above to obtain our result, and begin by selecting the relevant parameters. We choose $s=\frac{n\log2}{k}$ and $r=c\log n$, where $c$ is a constant to be specified shortly\footnote{Rounding $r$ has a negligible effect on $c$ (only changing by a factor of $1+o(1)$) and is thus ignored in our analysis.}. This results in each $\mathsf{X}_j$ being a $\frac{k}{\log2}\times n$ sub-matrix, and $\mathsf{X}$ being a $\frac{ck\log n}{\log 2}\times n$ matrix.

\noindent\textbf{Step 1:} Setting $s=\frac{n\log2}{k}$ and $r=c\log n$ in \eqref{eq:E[G]_general}, we have
\begin{align}
    &\E[G] \nonumber \\
    &=\Theta\bigg(n\bigg(1-\prod_{i=1}^{(n/k)\log2-1}\Big(1-\frac{k}{n-i}\Big)\bigg)^{c\log n}\bigg) \label{eq:start_binom_frac_simplification} \\
    &\stackrel{(a)}{=}\Theta\bigg(n\bigg(1-\Big(1-\frac{k}{n(1-o(1))}\Big)^{(n/k)\log2-1}\bigg)^{c\log n}\bigg) \\
    &\stackrel{(b)}{=}\Theta\bigg(n\bigg(1-\exp\Big(-\frac{k}{n}\frac{n\log2}{k}(1+o(1))\Big)\bigg)^{c\log n}\bigg) \\
    &=\Theta\bigg(n\Big(\frac{1}{2}(1+o(1))\Big)^{c\log n}\bigg) \label{eq:end_binom_frac_simplification} \\
    &=\Theta\big(n^{1-(c\log2)(1+o(1))}\big), \label{eq:E[G]_uncon_sublinear}
\end{align}
where (a) uses $i\leq\frac{n\log2}{k}-1=o(n)$, and (b) uses the fact $(1+a)^b=e^{ab(1+o(1))}$ when $|a|=o(1)$. Applying Markov's inequality, we have
\begin{align}
    \mathbb{P}\Big[G\geq\frac{k}{\log n}\Big]
    &\le\frac{\E[G]}{k/\log n}
    \stackrel{(a)}{=}\frac{\Theta(n^{1-(c\log2)(1+o(1))})}{k^{1-o(1)}} \nonumber \\
    &\stackrel{(b)}{=}\Theta\big(n^{1-c\log2-\theta+o(1)}\big), \label{eq:G_prob_bound}
\end{align}
where (a) uses \eqref{eq:E[G]_uncon_sublinear} and $\frac{k}{\log n}=k^{1-o(1)}$, and (b) substitutes $k=\Theta\big(n^\theta\big)$ and further simplifies the expression. Observe that the power of $n$ is below zero when $c\geq(1+\epsilon)\frac{1-\theta}{\log2}$, where $\epsilon$ is any positive constant. This implies that $G=o(k)$ with probability $1-o(1)$ when $c\geq(1+\epsilon)\frac{1-\theta}{\log2}$.

\noindent\textbf{Step 2:} Setting $s=\frac{n\log2}{k}$ and $r=c\log n$ in \eqref{eq:E[M_i^j]_general}, we obtain
\begin{align}
    \E\big[M^j\big]
    &=k\bigg(1-\prod_{i=1}^{(n/k)\log2-1}\Big(1-\frac{k-1}{n-i}\Big)\bigg)
    =\frac{k}{2}(1+o(1)), \label{eq:avg_M}
\end{align}
where the last equality uses the same steps as those in \eqref{eq:start_binom_frac_simplification}--\eqref{eq:end_binom_frac_simplification} to simplify the product to $\frac{1}{2}(1+o(1))$. Next we have
\begin{align}
    \Var\big[M^j\big]
    &=\Var\bigg[\sum_{i=1}^k\mathrm{M}_i^j\bigg] \nonumber \\
    &=k\Var\big[\mathrm{M}_i^j\big]+k(k-1)\Cov\big[M_1^j,M_2^j\big]. \label{eq:var_M^j}
\end{align}
We proceed to evaluate the variance and covariance terms separately. The calculation is the same for any value of $i$, so we fix $i=1$ and write
\begin{align}
    \Var\big[M_1^j\big]
    &=\E\big[(M_1^j)^2\big]-\E[M_1^j]^2 \nonumber \\
    &=\mathbb{P}\big[M_1^j=1\big]-\mathbb{P}[M_1^j=1]^2 \label{eq:var_of_M_i^j} \nonumber \\
    &=\frac{1}{4}(1+o(1)),
\end{align}
since the equality $\mathbb{P}\big[M_1^j=1\big] = \frac{1}{2}(1+o(1))$ is implicit in \eqref{eq:avg_M}.  Next, we define $S_{12}^j$ to be the event that defective items $1$ and $2$ are in the same test in $\mathsf{X}_j$, yielding
\begin{align}
    \mathbb{P}\big[S_{12}^j\big]
    =\frac{{n-2\choose\frac{n\log2}{k}-2}}{{n-1\choose\frac{n\log2}{k}-1}}
    =\frac{(n/k)\log2-1}{n-1}
    \leq\frac{\log2}{k}. \label{eq:S12j_uncon_sublinear}
\end{align}
In addition, we have
\begin{align}
    & \Cov\big[M_1^j,M_2^j\big] \nonumber \\
    &=\E\big[M_1^j,M_2^j\big]-\E\big[M_1^j\big]\E\big[M_2^j\big] \nonumber \\
    &=\mathbb{P}\big[M_1^j=1,M_2^j=1\big]-\mathbb{P}\big[M_1^j=1\big]\mathbb{P}\big[M_2^j=1\big]. \label{eq:cov_M1j_and_M2j}
\end{align}
\begin{figure*}[t]
    \normalsize
    \setcounter{mytempeqncnt}{\value{equation}}
    \begin{align}
        \mathbb{P}\big[M_1^j=1,M_2^j=1\big]
        &=\mathbb{P}\big[S_{12}^j\big]\mathbb{P}\big[M_1^j=1,M_2^j=1|S_{12}^j\big]+\mathbb{P}\big[\neg S_{12}^j\big]\mathbb{P}\big[M_1^j=1,M_2^j=1|\neg S_{12}^j\big] \label{eq:first_in_long} \\
        &\stackrel{(a)}{\leq}\frac{\log2}{k}+\mathbb{P}\big[\neg S_{12}^j\big]\mathbb{P}\big[M_1^j=1,M_2^j=1|\neg S_{12}^j\big] \\
        &\stackrel{(b)}{\leq}\frac{\log2}{k}+1-\mathbb{P}\big[M_1^j=0\cup M_2^j=0|\neg S_{12}^j\big] \\
        &\stackrel{(c)}{\leq}\frac{\log2}{k}+1-\mathbb{P}\big[M_1^j=0|\neg S_{12}^j\big]-\mathbb{P}\big[M_2^j=0|\neg S_{12}^j\big]+\mathbb{P}\big[M_1^j=0,M_2^j=0|\neg S_{12}^j\big] \\
        &\stackrel{(d)}{=}\frac{\log2}{k}+1-2\frac{{n-k\choose\frac{n\log2}{k}-1}}{{n-2\choose\frac{n\log2}{k}-1}}
        +\frac{{n-k\choose\frac{n\log2}{k}-1}{n-k-\frac{n\log2}{k}+1\choose\frac{n\log2}{k}-1}}{{n-2\choose\frac{n\log2}{k}-1}{n-\frac{n\log2}{k}-1\choose\frac{n\log2}{k}-1}}, \label{eq:prob_M1j_and_M_2j_both_1}
    \end{align}
    \hrulefill 
    \begin{align}
        \Cov\big[M_1^j,M_2^j\big]
        &\leq\frac{\log2}{k}+1-2\frac{{n-k\choose\frac{n\log2}{k}-1}}{{n-2\choose\frac{n\log2}{k}-1}}
        +\frac{{n-k\choose\frac{n\log2}{k}-1}{n-k-\frac{n\log2}{k}+1\choose\frac{n\log2}{k}-1}}{{n-2\choose\frac{n\log2}{k}-1}{n-\frac{n\log2}{k}-1\choose\frac{n\log2}{k}-1}}
        -\mathbb{P}\big[M_1^j=1\big]\mathbb{P}\big[M_2^j=1\big] \label{eq:first_in_long_2} \\
        &\stackrel{(a)}{=}\frac{\log2}{k}+1-2\frac{{n-k\choose\frac{n\log2}{k}-1}}{{n-2\choose\frac{n\log2}{k}-1}}
        +\frac{{n-k\choose\frac{n\log2}{k}-1}{n-k-\frac{n\log2}{k}+1\choose\frac{n\log2}{k}-1}}{{n-2\choose\frac{n\log2}{k}-1}{n-\frac{n\log2}{k}-1\choose\frac{n\log2}{k}-1}}
        -\bigg(1-\frac{{n-k\choose\frac{n\log2}{k}-1}}{{n-1\choose\frac{n\log2}{k}-1}}\bigg)^2 \\
        &\stackrel{(b)}{=}\frac{\log2}{k}
        +2\underbrace{\Bigg(\frac{{n-k\choose\frac{n\log2}{k}-1}}{{n-1\choose\frac{n\log2}{k}-1}}-\frac{{n-k\choose\frac{n\log2}{k}-1}}{{n-2\choose\frac{n\log2}{k}-1}}\Bigg)}
        _{\text{first part}}
        +\underbrace{\frac{{n-k\choose\frac{n\log2}{k}-1}{n-k-\frac{n\log2}{k}+1\choose\frac{n\log2}{k}-1}}{{n-2\choose\frac{n\log2}{k}-1}{n-\frac{n\log2}{k}-1\choose\frac{n\log2}{k}-1}}
            -\frac{{n-k\choose\frac{n\log2}{k}-1}^2}{{n-1\choose\frac{n\log2}{k}-1}^2}}
        _{\text{second part}}, \label{eq:cov_M1j_M2j}
    \end{align}
    \hrulefill
    \vspace*{4pt}
\end{figure*}
\noindent Focusing on the first term, by the law of total probability, we have \eqref{eq:first_in_long}--\eqref{eq:prob_M1j_and_M_2j_both_1} at the top of the next page, where (a) substitutes \eqref{eq:S12j_uncon_sublinear} and uses $\mathbb{P}\big[M_1^j=1,M_2^j=1|S_{12}^j\big]\le 1$, 
(b) uses $\mathbb{P}\big[\neg S_{12}^j\big]\leq1$ and $\mathbb{P}[A\cap B]=1-\mathbb{P}[\neg A\cup\neg B]$ for any two events $A$ and $B$, (c) uses the inclusion-exclusion principle, and (d) calculates $\mathbb{P}\big[M_1^j=0|\neg S_{12}^j\big]$ by counting the number of ways to choose the other $\frac{n\log2}{k}-1$ items in the test containing item 1 (this calculation also holds for $\mathbb{P}\big[M_2^j=0|\neg S_{12}^j\big]$). Likewise, $\mathbb{P}\big[M_1^j=0,M_2^j=0|\neg S_{12}^j\big]$ is calculated by counting the number of ways to choose the other $\frac{n\log2}{k}-2$ items in the tests (rows) of item 1 and item 2. Substituting \eqref{eq:prob_M1j_and_M_2j_both_1} into \eqref{eq:cov_M1j_and_M2j}, we have \eqref{eq:first_in_long_2}--\eqref{eq:cov_M1j_M2j} at the top of the next page, where (a) computes the probabilities in the same way as \eqref{eq:prob_M1j_and_M_2j_both_1}, and (b) follows by expanding the square and simplifying. 

The combinatorial terms in \eqref{eq:cov_M1j_M2j} can be bounded using manipulations that are elementary but tedious, so we state the resulting bound here and defer the proof to Appendix \ref{app:covariance}.

\begin{lemma} \label{lem:covariance}
    Under the preceding setup, we have $\Cov\big[M_1^j,M_2^j\big] \le O\big(\max\big\{\frac{1}{k},\frac{k}{n}\big\}\big)$.
\end{lemma}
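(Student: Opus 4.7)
Let $m = \frac{n\log 2}{k} - 1$, so $m = \Theta(n/k)$ under the sub-linear regime of interest. The right-hand side of \eqref{eq:cov_M1j_M2j} consists of three summands, and the plan is to bound each of them by $O(\max\{1/k,\, k/n\})$. The first summand, $\tfrac{\log 2}{k}$, trivially satisfies this bound.

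For the ``first part,'' apply the Pascal identity $\binom{n-1}{m} = \binom{n-2}{m} + \binom{n-2}{m-1}$ together with $\frac{\binom{n-2}{m-1}}{\binom{n-2}{m}} = \frac{m}{n-m-1}$ to rewrite
\begin{align*}
\frac{\binom{n-k}{m}}{\binom{n-1}{m}} - \frac{\binom{n-k}{m}}{\binom{n-2}{m}} = -\frac{\binom{n-k}{m}}{\binom{n-1}{m}} \cdot \frac{m}{n-m-1}.
\end{align*}
The first factor is $\Theta(1)$ (in fact $\tfrac12(1+o(1))$ by the same simplification used in \eqref{eq:avg_M}), and the second is $O(m/n) = O(1/k)$, so the first part is $O(1/k)$.

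For the ``second part,'' first use the elementary identity $\binom{n-k}{m}\binom{n-k-m}{m} = \binom{2m}{m}\binom{n-k}{2m}$ (and the analogous identity for the denominator) to collapse it into $P - Q$, where $P := \frac{\binom{n-k}{2m}}{\binom{n-2}{2m}}$ and $Q := \Big(\frac{\binom{n-k}{m}}{\binom{n-1}{m}}\Big)^{2}$. Express each as a product $\prod_{i} \bigl(1 - \tfrac{c}{n-c-i}\bigr)$ with $c \in \{k-1, k-2\}$, pass to logarithms, and expand using $\log(1-x) = -x - \tfrac{x^2}{2} + O(x^3)$, which is valid since each argument is $O(k/n) = o(1)$. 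The dominant linear-in-$k$ contributions to $\log P$ and $\log Q$ (each of size $\Theta(1)$ individually) cancel up to a residue driven by the sub-leading harmonic-sum corrections $\sum_{i=0}^{M-1}\frac{1}{n-c-i} = \frac{M}{n}(1+O(M/n))$, yielding $O(k m^2/n^2) = O(1/k)$, while the quadratic Taylor term contributes $O(k^2 m/n^2) = O(k/n)$. Hence $\log P - \log Q = O(\max\{1/k,\, k/n\})$, and since $P,Q = \Theta(1)$ this transfers to $P - Q = Q\bigl(e^{\log P - \log Q} - 1\bigr) = O(\max\{1/k,\, k/n\})$.

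\textbf{Main obstacle.} The delicate step is the second part: both $P$ and $Q$ converge to $\tfrac14$, so the claimed bound only emerges after tracking two successive cancellations. First one must verify that the leading-order (of size $k \cdot m/n = \Theta(1)$) contributions to $\log P$ and $\log Q$ cancel, and only then does the sub-leading behaviour of the harmonic sums, together with the second-order Taylor remainder, deliver the stated rate. The collapse of $P$ to a single binomial ratio is a useful bookkeeping simplification but not essential; without it one would manipulate four binomial ratios simultaneously.
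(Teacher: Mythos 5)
Your proposal is correct, and it follows the same overall strategy as the paper's Appendix \ref{app:covariance}: bound the three summands of \eqref{eq:cov_M1j_M2j} separately, with the only delicate point being the near-cancellation of the two $\Theta(1)$ terms (both tending to $\tfrac14$) in the second part. Where you differ is in the algebraic bookkeeping, and your route is arguably cleaner. For the first part, the paper expands the factorials directly and simplifies to $-\frac{\log 2}{k}(1+o(1))\big(1-\frac{\log2}{k}(1+o(1))\big)^{k-2}=O(1/k)$, whereas your Pascal-identity rewriting gives the same $O(m/n)=O(1/k)$ conclusion in one line. For the second part, the paper keeps all four binomial ratios, writes each factor as $1-a=e^{-a(1+O(a))}$, and observes that the two exponentials agree to leading order with an $O(\max\{1/k,k/n\})$ remainder; your collapse of $\binom{n-k}{m}\binom{n-k-m}{m}$ into $\binom{2m}{m}\binom{n-k}{2m}$ (and likewise for the denominator) reduces this to a comparison of $P=\binom{n-k}{2m}/\binom{n-2}{2m}$ with $Q=\big(\binom{n-k}{m}/\binom{n-1}{m}\big)^2$, and your log-expansion accounting of the residues ($O(km^2/n^2)=O(1/k)$ from the harmonic-sum corrections, $O(k^2m/n^2)=O(k/n)$ from the quadratic Taylor terms) is exactly the content of the paper's steps (b)--(c) in \eqref{eq:start_final}--\eqref{eq:end_final}. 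Two small points to make explicit when writing this up: the factors are $1-\frac{k-2}{n-2-i}$ and $1-\frac{k-1}{n-1-i}$ (not $1-\frac{c}{n-c-i}$), which is immaterial asymptotically; and besides the harmonic-sum correction, the mismatch between the linear coefficients $k-2$ (in $\log P$) and $k-1$ (in $\log Q$) leaves an additional residue $2m/n=O(1/k)$, which is of the same order and so does not affect the stated bound.
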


Applying \eqref{eq:var_of_M_i^j} and Lemma \ref{lem:covariance} in \eqref{eq:var_M^j}, we obtain $\Var\big[M^j\big]=O\big(\max\big\{k,\frac{k^3}{n}\frac{}{}\big\}\big)$. By Chebyshev's inequality, it follows that
\begin{align}
    &\mathbb{P}\bigg[\Big|M^j-\E\big[M^j\big]\Big|\geq\frac{\E\big[M^j\big]}{\log n}\bigg]
    \leq\frac{\Var\big[M^j\big]}{\big(\E\big[M^j\big]/\log n\big)^2} \nonumber \\
    & \qquad =\frac{O\big(\max\big\{k,\frac{k^3}{n}\frac{}{}\big\}\big)\log^2n}{\Theta(k^2)}
    =\frac{\log^2n}{n^{\Omega(1)}},
\end{align}
where we used the fact that $k=\Theta\big(n^\theta\big)$ with $\theta\in(0,1)$. Hence, $M^j=\E[M^j]\big(1+O\big(\frac{1}{\log n}\big)\big)=\frac{k}{2}(1+o(1))$ for all $\mathsf{X}_j$ simultaneously with probability
\begin{align}
    \Big(1-\frac{\log^2n}{n^{\Omega(1)}}\Big)^{c\log n}
    \stackrel{(a)}{=}1-O\Big(\frac{c\log^3n}{n^{\Omega(1)}}\Big)
    =1-o(1),
\end{align}
where (a) uses the fact $(1+a)^b=(1+ab)(1+o(1))$ when $|a|<1$ and $ab=o(1)$.

\noindent\textbf{Step 3:} We condition on the events described in the general description of Step 3 following \eqref{eq:E[M_i^j]_general}.  Here the high-probability bounds dictate that $G=o(k)$ and $M^j=\frac{k}{2}(1+o(1))$.


\noindent\textbf{Step 4:} Substituting $r=c\log n$ into \eqref{eq:prob_def_i_masked_by_PD_general}, the conditional probability of defective item $i$ being masked by $\mathcal{PD}\setminus\{i\}$ in all sub-matrices is
    \begin{align}
        \prod_{j=1}^{r}\mathbb{P}_{\mathcal{E}}\big[\mathrm{M}_i^j\cup \mathrm{MG}_i^j\big]
        &\leq\bigg(\Big(\frac{M^j}{k}+\frac{G}{k}\Big)(1+o(1))\bigg)^{c\log n} \nonumber \\
        & \stackrel{(a)}{=}\Big(\frac{1}{2}(1+o(1))+\frac{o(k)}{k}\Big)^{c\log n} \nonumber \\
        &=\Big(\frac{1}{2}(1+o(1))\Big)^{c\log n} \nonumber \\
        &=\exp\big((-c\log2+o(1))\log n\big) \nonumber \\
        &=n^{-c\log2+o(1)},
    \end{align}
    where (a) substitutes $M^j=\frac{k}{2}(1+o(1))$ and $G=o(k)$.
    
    Taking the union bound over all defective items, the probability of any defective item being masked by $\mathcal{PD}\setminus\{i\}$ in all sub-matrices is at most $kn^{-c\log2+o(1)}=\Theta\big(n^{\theta-c\log2+o(1)}\big)$. The power of $n$ is below zero when $c\geq(1+\epsilon)\frac{\theta}{\log2}$ for some positive constant $\epsilon$. Together with our previous bound on $c$ (see \eqref{eq:G_prob_bound}), this requires us to choose $c\geq(1+\epsilon)\frac{\max\{\theta,1-\theta\}}{\log2}$, which means it suffices to have
\begin{align}
    T\geq(1+\epsilon)\frac{\max\{\theta,1-\theta\}}{\log^22}k\log n.
\end{align}
This completes the proof of Theorem \ref{thm:unconstrained_achievability}.

\subsection{Unconstrained Linear Regime with Approximate Recovery (Theorem \ref{thm:linear_achievability})} \label{sec:linear}

We again follow the four-step procedure to obtain the desired result. We assume that $s=\Theta(1)$ and $r=\Theta(1)$, but their values are otherwise generic.  Recall that each $\mathsf{X}_j$ is a $\frac{n}{s}\times n$ sub-matrix, and $\mathsf{X}$ is a $\frac{rn}{s}\times n$ matrix.  It suffices to prove the theorem for $s \ge 2$, since otherwise each $\mathsf{X}_j$ amounts to one-by-one testing and the FNR is trivially zero.

The following technical lemma will be used throughout the analysis.

\begin{lemma} \label{lem:simplification_lemma}
For any positive constants $a$ and $b$ (not depending on $n$) satisfying $a \le b$, we have
\begin{align}
    \prod_{i=a}^b\Big(1-\frac{k}{n-i}\Big)
    &=(1-p)^{b-a+1}\bigg(1-O\Big(\frac{1}{n}\Big)\bigg),
\end{align}
where $p = \frac{k}{n}$.
\end{lemma}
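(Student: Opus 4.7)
The plan is to analyze each factor in the product individually, factor out $(1-p)$, and then combine the resulting constant number of error terms.

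First, I would observe that since $i$ ranges over integers between the fixed constants $a$ and $b$, we have $i = O(1)$ uniformly in $i$. Consequently
\[
\frac{k}{n-i} = \frac{pn}{n-i} = p\cdot\frac{1}{1 - i/n} = p\left(1 + O\!\left(\tfrac{1}{n}\right)\right),
\]
using the expansion $\frac{1}{1-x} = 1 + O(x)$ valid for $x = O(1/n)$.

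Next, I would write each factor as
\[
1 - \frac{k}{n-i} = 1 - p - p\cdot O\!\left(\tfrac{1}{n}\right) = (1-p)\left(1 - O\!\left(\tfrac{1}{n}\right)\right),
\]
where pulling out the factor $(1-p)$ is legitimate because $p \in (0,1)$ is a fixed constant bounded away from $0$ and $1$, so dividing by $1-p$ does not change the asymptotic order of the error term.

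Finally, I would take the product over $i = a, a+1, \ldots, b$. Since $b - a + 1$ is a constant, multiplying together $b - a + 1$ copies of $\left(1 - O(1/n)\right)$ again yields $1 - O(1/n)$ (the implicit constant depends on $a$, $b$, and $p$, all of which are fixed). This gives
\[
\prod_{i=a}^{b} \left(1 - \frac{k}{n-i}\right) = (1-p)^{b-a+1}\left(1 - O\!\left(\tfrac{1}{n}\right)\right),
\]
as claimed. There is no serious obstacle here; the only point worth flagging is that the argument relies crucially on $a$, $b$, and $p$ being constants (so that the number of factors and the reciprocal $1/(1-p)$ are $O(1)$), which is precisely the hypothesis of the lemma.
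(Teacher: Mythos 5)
Your proof is correct and follows essentially the same route as the paper: bound each factor uniformly using $n-i = n(1-O(\frac{1}{n}))$ for $i \le b = O(1)$, factor out $(1-p)$ using that $\frac{p}{1-p}$ is a constant, and absorb the constant number ($b-a+1$) of $1-O(\frac{1}{n})$ error factors into a single $1-O(\frac{1}{n})$. No gaps; the points you flag (constancy of $a$, $b$, $p$) are exactly the ones the paper relies on.
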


\begin{proof}
We have
\begin{align}
    \prod_{i=a}^b\Big(1-\frac{k}{n-i}\Big)
    &=\bigg(1-\frac{k}{n(1+O(\frac{1}{n}))}\bigg)^{b-a+1} \nonumber \\
    & \stackrel{(a)}{=}\bigg(1-p\bigg(1+O\Big(\frac{1}{n}\Big)\bigg)\bigg)^{b-a+1} \nonumber \\
    &=(1-p)\bigg(1-\frac{p}{1-p}\cdot O\Big(\frac{1}{n}\Big)\bigg)^{b-a+1} \nonumber \\
    & \stackrel{(b)}{=}(1-p)^{b-a+1}\bigg(1-O\Big(\frac{1}{n}\Big)\bigg),
\end{align}
where (a) substitutes $k=pn$ and uses $\big(1-O\big(\frac{1}{n}\big)\big)^{-1}=1+O\big(\frac{1}{n}\big)$, and (b) applies $\big(1-\frac{p}{1-p} O\big(\frac{1}{n}\big)\big)^{b-a+1}=1-O\big(\frac{1}{n}\big)$, noting that $\frac{p}{1-p}$ and $b-a+1$ are both constants.
\end{proof}

We now proceed with the main steps.

\noindent\textbf{Step 1:} Continuing from \eqref{eq:E[G]_general}, we have
\begin{align}
    \E[G]
    &=(n-k)\bigg(1-\prod_{i=1}^{s-1}\Big(1-\frac{k}{n-i}\Big)\bigg)^{r} \label{eq:start_binom_frac_simplification_linear} \\
    &\stackrel{(a)}{=}(n-k)\bigg(1-(1-p)^{s-1}\bigg(1-O\Big(\frac{1}{n}\Big)\bigg)\bigg)^r \\
    &\stackrel{(b)}{=}(n-k)\big(1-(1-p)^{s-1}\big)^r\bigg(1+O\Big(\frac{1}{n}\Big)\bigg), \label{eq:E[G]_linear}
\end{align}
where (a) applies Lemma \ref{lem:simplification_lemma}, and (b) applies $\big(1-O\big(\frac{1}{n}\big)\big)^{r}=\big(1-O\big(\frac{1}{n}\big)\big)$ for constant $r$. Next, we have
\begin{align}
    \text{Var}[G]
    &=\text{Var}\bigg[\sum_{i\in[n]\setminus\mathcal{K}}\mathrm{PD}_i\bigg] \nonumber \\
    &=(n-k)\text{Var}[\mathrm{PD}_1] \nonumber \\
    &\qquad + (n-k)(n-k-1)\text{Cov}[\mathrm{PD}_1, \mathrm{PD}_2], \label{eq:Var[G]_linear}
\end{align}
where here and in the rest of this step (step 1), we assume for notational convenience that items $1$ and $2$ are non-defective.\footnote{This should not be confused with the convention $\mathcal{K} = \{1,\dotsc,k\}$ and $\mathcal{G} = \{k+1,\dotsc,k+G\}$ used when analyzing the defective items in other steps.  The precise indices are inconsequential and merely a matter of notational convenience, so there is no contradiction in using indices $1$ and $2$ for non-defectives here but for defectives in other parts.}  We proceed to evaluate the variance and covariance terms separately. We have
\begin{align}
    &\text{Var}[\mathrm{PD}_1] \nonumber \\
    &=\mathbb{E}\big[\mathrm{PD}_1^2\big]-\mathbb{E}\big[\mathrm{PD}_1\big]^2 \nonumber \\
    &=\mathbb{P}[\mathrm{PD}_1=1]-\mathbb{P}[\mathrm{PD}_1=1]^2 \nonumber \\
    &\stackrel{(a)}{=}\big((1-(1-p)^{s-1})^r-(1-(1-p)^{s-1})^{2r}\big)(1+o(1)) \nonumber \\
    &=\Theta(1), \label{eq:Var[PD_1]_linear}
\end{align}
where (a) uses the same steps as \eqref{eq:prob_nondef_in_neg_test_general}--\eqref{eq:E[PD_i]_general} followed by Lemma \ref{lem:simplification_lemma}. Next, we have
\begin{align}
    &\text{Cov}[\mathrm{PD}_1, \mathrm{PD}_2] \nonumber \\
    &=\mathbb{E}[\mathrm{PD}_1 \mathrm{PD}_2]-\mathbb{E}[\mathrm{PD}_1]\mathbb{E}[\mathrm{PD}_2] \\
    &=\mathbb{P}[\mathrm{PD}_1=1,\mathrm{PD}_2=1] \nonumber \\
        &\qquad -(1-(1-p)^{s-1})^{2r}\bigg(1+O\Big(\frac{1}{n}\Big)\bigg). \label{eq:cov_init}
\end{align}
Focusing on the first term, we note that the event $(\mathrm{PD}_1=1)\cap (\mathrm{PD}_2=1)$ holds if items $1$ and $2$ are both contained in a positive test in each sub-matrix. Since the sub-matrices are sampled independently, we can consider them separately. Let $\mathrm{PD}_1^j, \mathrm{PD}_2^j$ respectively be the events that items $1,2$ are contained in a positive test in sub-matrix $\mathsf{X}_j$. Then $\mathbb{P}[\mathrm{PD}_1=1,\mathrm{PD}_2=1]=\mathbb{P}\big[\mathrm{PD}_1^1=1,\mathrm{PD}_2^1=1\big]^r$.

Let $S_{12}^j$ be the event that items $1,2$ are placed into the same test in $\mathsf{X}_j$. Similar to before (see \eqref{eq:S12j_uncon_sublinear}), we have $\mathbb{P}\big[S_{12}^j\big] = \frac{s-1}{n-1}$. Conditioning on event $S_{12}^j$, we have:
\begin{align}
    &\mathbb{P}\big[\mathrm{PD}_1^1=1,\mathrm{PD}_2^1=1|S_{12}^j\big] 
    =1-\frac{{n-k-2 \choose s-2}}{{n-2 \choose s-2}} \nonumber \\
    &\qquad =\big(1-(1-p)^{s-2}\big)\bigg(1+O\Big(\frac{1}{n}\Big)\bigg), \label{eq:conditional_PD_1_and_PD_2_linear1}
\end{align}
\begin{figure*}[t]
    \normalsize
    \setcounter{mytempeqncnt}{\value{equation}}
    \begin{align}
        \mathbb{P}\big[\mathrm{PD}_1^1=1,\mathrm{PD}_2^1=1|\neg S_{12}^j\big]
        &=1-\mathbb{P}\big[\mathrm{PD}_1^1=0\cup \mathrm{PD}_2^1=0|\neg S_{12}^j\big] \label{eq:long_start3} \\
        &\stackrel{(a)}{=}1-\Big(\mathbb{P}\big[\mathrm{PD}_1^1=0|\neg S_{12}^j\big]
        +\mathbb{P}\big[\mathrm{PD}_2^1=0|\neg S_{12}^j\big] \nonumber \\
        &\qquad-\mathbb{P}\big[\mathrm{PD}_1^1=0,\mathrm{PD}_2^1=0|\neg S_{12}^j\big]\Big) \\
        &= 1 - \left(2\frac{{n-k-2 \choose s-1}}{{n-2 \choose s-1}} - \frac{{n-k-2 \choose s-1}}{{n-2 \choose s-1}} \cdot \frac{{n-k-s-1\choose s-1}}{{n-s-1\choose s-1}}\right)\\
        &\stackrel{(b)}{=}1-\Big(2(1-p)^{s-1} - (1-p)^{2(s-1)}\Big)\bigg(1-O\Big(\frac{1}{n}\Big)\bigg) \\
        &\stackrel{(b)}{=}\big(1-(1-p)^{s-1}\big)^2\bigg(1+O\Big(\frac{1}{n}\Big)\bigg), \label{eq:conditional_PD_1_and_PD_2_linear2}
    \end{align}
    \hrulefill 
    \begin{align}
        \mathbb{P}\big[\mathrm{PD}_1^1=1,\mathrm{PD}_2^1=1\big]
        &=\mathbb{P}\big[S_{12}^j\big]\mathbb{P}\big[\mathrm{PD}_1^1=1,\mathrm{PD}_2^1=1|S_{12}^j\big]
        +\mathbb{P}[\neg S_{12}^j]\mathbb{P}\big[\mathrm{PD}_1^1=1,\mathrm{PD}_2^1=1|\neg S_{12}^j\big] \label{eq:long_start4} \\
        &\stackrel{(a)}{=}\bigg(\frac{s-1}{n-1}\cdot(1-(1-p)^{s-2})+\Big(1-\frac{s-1}{n-1}\Big)\cdot(1-(1-p)^{s-1})^2\bigg)\bigg(1+O\Big(\frac{1}{n}\Big)\bigg) \\
        &\stackrel{(b)}{=}\bigg(O\Big(\frac{1}{n}\Big)+\Big(1-O\Big(\frac{1}{n}\Big)\Big)\big(1-(1-p)^{s-1}\big)^2\bigg)\bigg(1+O\Big(\frac{1}{n}\Big)\bigg) \\
        &\stackrel{(c)}{=}\big(1-(1-p)^{s-1}\big)^2\bigg(1+O\Big(\frac{1}{n}\Big)\bigg), \label{eq:joint_p}
    \end{align}
    \hrulefill
    \vspace*{4pt}
\end{figure*}
where the final equality uses the same steps as \eqref{eq:prob_nondef_in_neg_test_general} followed by Lemma \ref{lem:simplification_lemma}. On the other hand, under the complement event, we have \eqref{eq:long_start3}--\eqref{eq:conditional_PD_1_and_PD_2_linear2} at the top of the next page, where (a) uses the the inclusion-exclusion principle, (b) uses the same steps as \eqref{eq:prob_nondef_in_neg_test_general} followed by Lemma \ref{lem:simplification_lemma}, and (c) uses the fact that $s$ and $p$ are constant. Hence, by the law of total probability, we have \eqref{eq:long_start4}--\eqref{eq:joint_p} at the top of the next page,
where (a) substitutes $\mathbb{P}\big[S_{12}^j\big] = \frac{s-1}{n-1}$ along with \eqref{eq:conditional_PD_1_and_PD_2_linear1} and \eqref{eq:conditional_PD_1_and_PD_2_linear2}, (b) applies $\frac{s-1}{n-1}=O\big(\frac{1}{n}\big)$, and both (b) and (c) use the fact that $s$ and $p$ are constant. Combining \eqref{eq:joint_p} with \eqref{eq:cov_init} gives
\begin{align}
    \text{Cov}[\mathrm{PD}_1,\mathrm{PD}_2]
    &=O\Big(\frac{1}{n}\Big), \label{eq:Cov[PD_1,PD_2]_linear}
\end{align}
since the leading terms cancel and only leave the $O\big(\frac{1}{n}\big)$ remainder.
Substituting \eqref{eq:Var[PD_1]_linear} and \eqref{eq:Cov[PD_1,PD_2]_linear} into \eqref{eq:Var[G]_linear}, we obtain
\begin{align}
    \text{Var}[G]
    &\le\Theta(n) \Theta(1) + \Theta\big(n^2\big)O\big(n^{-1}\big)
    =\Theta(n).
\end{align}
Since $\E[G] = \Theta(n)$ (see \eqref{eq:E[G]_linear}), it follows from Chebyshev's Inequality that
\begin{align}
    \mathbb{P}\bigg[\big|G-\mathbb{E}[G]\big|\ge \frac{\E[G]}{\log n}\bigg] 
    &\le\frac{\Var[G]\log^2n}{(\E[G])^2}
    =\frac{\log^2n}{n},
\end{align}
implying that $G =(n-k)\big(1-(1-p)^{s-1}\big)^r(1+o(1))$ with probability $1-o(1)$.

\noindent\textbf{Step 2:} Continuing from \eqref{eq:E[M_i^j]_general}, we have
\begin{align}
    \E\big[M^j\big]
    &=k\bigg(1-\prod_{i=1}^{s-1}\Big(1-\frac{k-1}{n-i}\Big)\bigg) \nonumber \\
    &=k\big(1-(1-p)^{s-1}\big)(1+o(1)),
\end{align}
where the last equality uses the same steps as those in \eqref{eq:start_binom_frac_simplification_linear}--\eqref{eq:E[G]_linear}. We can use a near-identical approach as Step 1 to establish that $\Var[M^j] = \Theta(n)$, and again applying Chebyshev's inequality, we have $M^j = k(1-(1-p)^{s-1})(1+o(1))$ for all $\mathsf{X}_j$ simultaneously with probability $(1-o(1))^r=1-o(1)$ (since $r=\Theta(1)$).

\noindent\textbf{Step 3:} We again condition on the events described in the general description of Step 3 following \eqref{eq:E[M_i^j]_general}.  Here the high-probability bounds dictate that $G=k(1-(1-p)^{s-1})^r(1+o(1))$ and $M^j = k(1-(1-p)^{s-1})(1+o(1))$.


\noindent\textbf{Step 4:} Continuing from \eqref{eq:prob_def_i_masked_by_PD_general}, the conditional probability of defective item $i$ being masked by $\mathcal{PD}\setminus\{i\}$ in all sub-matrices (which is precisely the FNR) satisfies
\begin{align}
    &\prod_{j=1}^{r}\mathbb{P}_{\mathcal{E}}\big[\mathrm{M}_i^j\cup \mathrm{MG}_i^j\big] \nonumber \\
    &\leq\bigg(\Big(\frac{M^j}{k}+\frac{G}{k}\Big)(1+o(1))\bigg)^r \\
    &\stackrel{(a)}{=}\bigg(\big(1-(1-p)^{s-1}\big)+\frac{(n-k)(1-(1-p)^{s-1})^r}{k}\bigg)^r \nonumber \\
        &\hspace*{5.5cm}\times(1+o(1)) \\
    &\stackrel{(b)}{=}\bigg(\big(1-(1-p)^{s-1}\big)+\frac{(1-p)(1-(1-p)^{s-1})^r}{p}\bigg)^r \nonumber \\
    &\hspace*{5.5cm}\times(1+o(1)),
\end{align}
where (a) substitutes $M^j = k(1-(1-p)^{s-1})(1+o(1))$ and $G=(n-k)\big(1-(1-p)^{s-1}\big)^r(1+o(1))$, and (b) substitutes $k=pn$. This completes the proof of Theorem \ref{thm:linear_achievability}.

\subsection{Size-Constrained Sub-Linear Regime with Exact Recovery (Theorem \ref{thm:sparse_unconstrained_achievability})} \label{sec:con_sublinear}

In this setting, to reduce the number of constants throughout, we consider $k=n^{\theta}$ and $\rho = \big(\frac{n}{k}\big)^\beta$ (i.e., implied constants of one in their scaling laws), but the general case follows with only minor changes. We again follow the above four-step procedure to obtain our result, and begin by selecting the relevant parameters. We choose $s=\rho$, and let $r$ be a constant (not depending on $n$) to be chosen later. This results in each $\mathsf{X}_j$ having size $\frac{n}{\rho}\times n$, and $\mathsf{X}$ having size $\frac{rn}{\rho}\times n$.

\noindent\textbf{Step 1:} Substituting $s=\rho$ into \eqref{eq:E[G]_general}, we obtain
\begin{align}
    \E[G]
    &=\Theta\bigg(n\bigg(1-\prod_{i=1}^{\rho-1}\Big(1-\frac{k}{n-i}\Big)\bigg)^{r}\bigg) \label{eq:start_binom_frac_simplification_con+sublinear} \\
    &\stackrel{(a)}{=}\Theta\bigg(n\bigg(1-\Big(1-\frac{k}{n(1-o(1))}\Big)^{\rho-1}\bigg)^{r}\bigg) \\
    &\stackrel{(b)}{=}\Theta\bigg(n\bigg(\frac{k\rho}{n}\bigg)^{r}\bigg), \label{eq:E[G]_con_sublinear}
\end{align}
where (a) uses $i\leq\rho-1=o(n)$, and (b) uses the fact $(1+a)^b=(1+ab)(1+o(1))$ when $|a|<1$ and $ab=o(1)$ (recall that $\rho \to \infty$ with $\rho=o\big(\frac{n}{k}\big)$).  We now consider two cases:
\begin{itemize}
    \item For $\theta\geq\frac{1}{2}$, we use Markov's inequality to obtain
    \begin{align}
        \mathbb{P}\Big[G\geq\frac{k^2\rho}{n}\log n\Big]
        &\leq\frac{\E[G]}{\frac{k^2\rho}{n}\log n}
        \leq\frac{n\big(\frac{k\rho}{n}\big)^r}{\frac{k^2\rho}{n}\log n} \nonumber \\
        &\stackrel{(a)}{=}\frac{n^{(1-\theta)(2-\beta)-r(1-\theta)(1-\beta)}}{\log n}, \label{eq:P[G_bound_theta>=1/2]}
    \end{align}
    where (a) substitutes $\rho=(n/k)^\beta$ and $k=n^\theta$. Note that when $r\geq\frac{2-\beta}{1-\beta}$, the expression in \eqref{eq:P[G_bound_theta>=1/2]} scales as $O\big(\frac{1}{\log n}\big)$. 
    \item For $\theta<\frac{1}{2}$, we similarly have
    \begin{align}
        \mathbb{P}[G\geq k^\beta\log n]
        &\leq\frac{\E[G]}{k^\beta\log n}
        \leq\frac{n\big(\frac{k\rho}{n}\big)^r}{k^\beta\log n} \nonumber \\
        &\stackrel{(a)}{=}\frac{n^{1-r(1-\theta)(1-\beta)-\theta\beta}}{\log n}, \label{eq:P[G_bound_theta<1/2]}
    \end{align}
    where (a) substitutes $\rho=(n/k)^\beta$ and $k=n^\theta$. When $r\geq\frac{1-\theta\beta}{(1-\theta)(1-\beta)}$, the expression in \eqref{eq:P[G_bound_theta<1/2]} scales as $O\big(\frac{1}{\log n}\big)$. As a side-note, to see why we split $\theta$ into two cases here, we note that $\frac{2-\beta}{1-\beta}\leq\frac{1-\theta\beta}{(1-\theta)(1-\beta)}$ if and only if $\theta\geq\frac{1}{2}$ (for any $\beta\in[0,1)$).
\end{itemize}

\noindent\textbf{Step 2:} Substituting $s=\rho$ into \eqref{eq:E[M_i^j]_general}, we have
\begin{align}
    \E\big[M^j\big]
    &=k\bigg(1-\prod_{i=1}^{\rho-1}\Big(1-\frac{k-1}{n-i}\Big)\bigg)
    =\frac{k^2\rho}{n}(1+o(1)),
\end{align}
where the last equality uses the same steps as those in \eqref{eq:start_binom_frac_simplification_con+sublinear}--\eqref{eq:E[G]_con_sublinear}. By Markov's inequality, it follows that
\begin{align}
    \mathbb{P}\Big[M^j\geq\frac{k^2\rho}{n}\log n\Big]
    \leq\frac{\E[M^j]}{\frac{k^2\rho}{n}\log n}
    \leq\frac{1+o(1)}{\log n}. \label{eq:D's_bound}
\end{align}
Turning to the desired event for all $r$ sub-matrices simultaneously, we have $M^j<\frac{k^2\rho}{n}\log n$ with probability at least $\big(1-\frac{1+o(1)}{\log n}\big)^r=1-O\big(\frac{1}{\log n}\big)$, using the fact that $r=\Theta(1)$.

\noindent\textbf{Step 3:} We again condition on the events described in the general description of Step 3 following \eqref{eq:E[M_i^j]_general}.  Here the high-probability bounds dictate that
\begin{align}
    G&\le
    \begin{cases}
    \frac{k^2\rho}{n}\log n, \text{ if $\theta\geq\frac{1}{2}$} \\
    k^\beta\log n, \text{ if $\theta<\frac{1}{2}$},
    \end{cases} \label{eq:g_value_con_sublinear}
\end{align}
and $M^j \le \frac{k^2\rho}{n}\log n$ for all $\mathsf{X}_j$ simultaneously.  

\noindent\textbf{Step 4:} Continuing from \eqref{eq:prob_def_i_masked_by_PD_general}, the probability of defective item $i$ being masked by $\mathcal{PD}\setminus\{i\}$ in all sub-matrices is
\begin{align}
    &\prod_{j=1}^{r}\mathbb{P}_{\mathcal{E}}\big[\mathrm{M}_i^j\cup \mathrm{MG}_i^j\big] \nonumber \\
    &\leq\bigg(\Big(\frac{M^j}{k}+\frac{G}{k}\Big)(1+o(1))\bigg)^r \\
    &\le
    \begin{cases}
    \Big(O\Big(\frac{k\rho}{n}\log n\Big)\Big)^r &\text{if $\theta\geq\frac{1}{2}$} \\
    \Big(O\Big(\frac{k\rho}{n}\log n+k^{-(1-\beta)}\log n\Big)\Big)^r &\text{if $\theta<\frac{1}{2}$},
    \end{cases}
\end{align}
where we substituted $M^j \le \frac{k^2\rho}{n}\log n$ and $G$ as shown in \eqref{eq:g_value_con_sublinear} and applies some simplifications.

Taking the union bound over all $k$ defective items and equating the resulting bound with a target value of $\frac{1}{\log n}$, we obtain the following conditions for the desired events to hold with probability at least $1 - \frac{1}{\log n}$:
\begin{align}
    k\bigg(O\Big(\frac{k\rho}{n}\log n\Big)\bigg)^r\leq\frac{1}{\log n} &\text{ if $\theta\geq\frac{1}{2}$} \\
    k\bigg(O\Big(\frac{k\rho}{n}\log n+k^{-(1-\beta)}\log n\Big)\bigg)^r\leq\frac{1}{\log n} &\text{ if $\theta<\frac{1}{2}$}.
\end{align}
In other words, there exist constants $C_1$ and $C_2$ such that it suffices to have
\begin{align}
    k\bigg(C_1\frac{k\rho}{n}\log n\bigg)^r\leq\frac{1}{\log n} &\text{ if $\theta\geq\frac{1}{2}$} \nonumber \\
    k\bigg(C_2\max\Big\{\frac{k\rho}{n}\log n,k^{-(1-\beta)}\log n\Big\}\bigg)^r\leq\frac{1}{\log n} &\text{ if $\theta<\frac{1}{2}$}.
\end{align}
Substituting $\rho=\big(\frac{n}{k}\big)^{\beta}$ and $k=n^\theta$, and performing some simplifications, we obtain the following conditions:
\begin{itemize}
    \item For $\theta\geq\frac{1}{2}$:
        \begin{align}
            n^\theta\Big(C_1n^{-(1-\theta)(1-\beta)+o(1)}\Big)^r\leq\frac{1}{\log n}.  \label{eq:eps_n_eq1}
        \end{align}
    \item For $\theta<\frac{1}{2}$:
    \begin{multline}
        n^\theta\Big(C_2\max\big\{n^{-(1-\theta)(1-\beta)+o(1)},n^{-\theta(1-\beta)+o(1)}\big\}\Big)^r \\ \leq\frac{1}{\log n}.  \label{eq:eps_n_eq2}
    \end{multline}
\end{itemize}
Observe that for any integer $r$ satisfying
\begin{align}
    r&>\frac{\theta}{(1-\theta)(1-\beta)},
\end{align}
the exponent of $n$ on the left-hand side of \eqref{eq:eps_n_eq1} becomes negative, so that \eqref{eq:eps_n_eq1} is satisfied. Similarly, observe that for any integer $r$ satisfying
\begin{align}
    r&>\max\bigg\{\frac{\theta}{(1-\theta)(1-\beta)},\frac{1}{1-\beta}\bigg\},
\end{align}
the exponent of $n$ on the left-hand side of \eqref{eq:eps_n_eq2} becomes negative, so that \eqref{eq:eps_n_eq2} is indeed satisfied. Hence, together with the bounds on $r$ from step 1, we require $r$ to be an integer such that the following conditions hold:
\begin{itemize}
    \item For $\theta \ge \frac{1}{2}$:
    \begin{align}
        r>\frac{\theta}{(1-\theta)(1-\beta)}, ~ r\geq\frac{2-\beta}{1-\beta}.
        \label{eq:r_condition_theta>=1/2}
    \end{align}
    \item For $\theta < \frac{1}{2}$:
    \begin{align}
        r>\max\Big\{\frac{\theta}{(1-\theta)(1-\beta)},\frac{1}{1-\beta}\Big\}, r\geq\frac{1-\theta\beta}{(1-\theta)(1-\beta)}. \label{eq:r_condition_theta<1/2}
    \end{align}
\end{itemize}
Note that \eqref{eq:r_condition_theta<1/2} simplifies to $r\geq\frac{1-\theta\beta}{(1-\theta)(1-\beta)}$ alone, because for any $\theta\in\big[0,\frac{1}{2}\big)$ and $\beta\in[0,1)$, we have (i) $1-\theta\beta>\theta$, which implies that $\frac{1-\theta\beta}{(1-\theta)(1-\beta)}>\frac{\theta}{(1-\theta)(1-\beta)}$, and (ii) $\frac{1-\theta\beta}{1-\theta}>1$, which implies that $\frac{1-\theta\beta}{(1-\theta)(1-\beta)}>\frac{1}{1-\beta}$.

Finally, we used a total of $rn/\rho$ tests and incurred a total error probability of at most $O\big(\frac{1}{\log n}\big)$. This completes the proof of Theorem \ref{thm:sparse_unconstrained_achievability}.

\section{Converse Analysis for the Size-Constrained Setting} \label{sec:uncon_sublinear_converse}

In this section, we prove Theorem \ref{thm:rho_converse}.  We prove the results corresponding to $r=\big\lceil\frac{1-(1-\theta)(2\beta+1)}{(1-\theta)(1-\beta)}\big\rceil$ and $r=2$ separately in Section \ref{sec:c=ceiling_term} and Section \ref{sec:c=2} respectively; the remaining term $\frac{1}{1-\beta}$ is already known from \cite{Ven19}.

\begin{remark} \label{rem:conv}
    Compared to the achievability part, the analysis in this section builds more closely on that of the prior work \cite{Oli20} handling the regime $\rho = O(1)$.  In particular, for the first part (Section \ref{sec:c=ceiling_term}), we mostly follow \cite{Oli20} but with different choices of parameters to carefully account for the scaling of $\rho$. On the other hand, for the second part (Section \ref{sec:c=2}), more substantial changes are needed:  In \cite{Oli20}, the scaling $\rho = O(1)$ makes it relatively easier to identify positive tests with multiple items of degree one (leading to failure), whereas in our setting with $\rho = \omega(1)$ the argument is somewhat more lengthy and technical, though still adopts a similar high-level approach.
\end{remark}

\subsection{Part I} \label{sec:c=ceiling_term}

To reduce the number of constants throughout, we consider $k=n^\theta$ and test sizes $\rho=\big(\frac{n}{k}\big)^\beta=n^{\beta(1-\theta)}$ (i.e.,
implied constants of one in their scaling laws), but the general case follows with only minor changes. Without loss of generality, we may assume that $\theta>\frac{1+\beta}{2+\beta}$; this is because if $\theta\leq\frac{1+\beta}{2+\beta}$, then $\big\lceil\frac{1-(1-\theta)(2\beta+1)}{(1-\theta)(1-\beta)}\big\rceil\leq1$, which implies that it will not be the maximum among the three constant terms in \eqref{eq:two_cases_previously}.  We make use of the notion of masked items in Definition \ref{def:masked_def_item}, and we are now interested in characterizing both masked defectives and masked non-defectives.


Let the true defectivity vector be $\mathbf{u}\in\{0,1\}^n$, where $u_i=1$ indicates that the $i$-th item is defective, and $u_i=0$ otherwise. Furthermore, let the test outcomes be represented by the vector $\mathbf{y}\in\{0,1\}^T$, where $y_i=1$ denotes that the $i$-th test is positive and $y_i=0$ otherwise. We also define
\begin{align}
    d^-&=\bigg\lceil\frac{1-(1-\theta)(2\beta+1)}{(1-\theta)(1-\beta)}\bigg\rceil-1,
    \text{ and } \nonumber \\
    d^+&=\bigg\lceil\frac{1-(1-\theta)(2\beta+1)}{(1-\theta)(1-\beta)}\bigg\rceil, \label{eq:d_def}
\end{align}
and we note that our assumption $\theta>\frac{1+\beta}{2+\beta}$ implies that $d^->0$.

We can visualize any non-adaptive group testing design matrix $\mathsf{X}$ as a bipartite graph $\mathscr{G}=(V\cup F,E)$ with $|F|=T$ factor nodes $\{a_1,\dots,a_T\}$ (tests) and $|V|=n$ variable nodes $\{u_1,\dots,u_n\}$ (items). An edge between an item $u_i$ and test $a_j$ indicates that $u_i$ takes part in $a_j$. We let $\{\partial_\mathscr{G}a_1,\dots,\partial_\mathscr{G}a_T\}$ and $\{\partial_\mathscr{G}u_1,\dots,\partial_\mathscr{G}u_n\}$ denote the neighborhoods in $\mathscr{G}$; 
the sparsity constraint on the test implies $|\partial_\mathscr{G}a_j|\leq\rho$. Finally, let $V_{1+}(\mathscr{G})$ be the set of defectives that are masked, $V_{0+}(\mathscr{G})$ be the set of non-defectives that are masked, and $V_{+}(\mathscr{G})=V_{1+}(\mathscr{G})\cup V_{0+}(\mathscr{G})$. We have the following auxiliary results.

\begin{lemma} \label{lem:Pe_from_V+} \textup{\cite[Claim 2.3]{Oli20}}
Conditioned on any non-empty realizations of the sets $V_{1+}(\mathscr{G})$ and $V_{0+}(\mathscr{G})$, any decoding procedure fails with probability at least $1-\frac{1}{|V_{1+}(\mathscr{G})|\cdot|V_{0+}(\mathscr{G})|}$.
\end{lemma}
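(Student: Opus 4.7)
The plan is to use a swap (indistinguishability) argument showing that any masked defective and any masked non-defective can be exchanged without altering the test outcomes, and then bound the decoder's performance by counting the resulting ambiguous alternatives.

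First I would establish the key swap identity: fix any $u_i\in V_{1+}(\mathscr{G})$ and $u_j\in V_{0+}(\mathscr{G})$, and set $\mathcal{K}'=(\mathcal{K}\setminus\{u_i\})\cup\{u_j\}$. To show $\mathbf{y}(\mathcal{K}')=\mathbf{y}(\mathcal{K})$, I would do a test-by-test check. A test $a$ containing neither $u_i$ nor $u_j$ has unchanged defective content. A test containing $u_i$ but not $u_j$ is positive under $\mathcal{K}$, and by the masking of $u_i$ it contains some other member of $\mathcal{K}\setminus\{u_i\}\subseteq\mathcal{K}'$, so it stays positive under $\mathcal{K}'$. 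A test containing $u_j$ but not $u_i$ must already have been positive under $\mathcal{K}$, since $u_j$ is a masked non-defective (so every test through $u_j$ meets $\mathcal{K}$), and it stays positive under $\mathcal{K}'$ because $u_j\in\mathcal{K}'$. Finally, tests containing both items are positive in both configurations. This case analysis is the main technical content.

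Next I would convert the swap into a counting argument. Ranging $(u_i,u_j)$ over $V_{1+}(\mathscr{G})\times V_{0+}(\mathscr{G})$ yields $|V_{1+}(\mathscr{G})|\cdot|V_{0+}(\mathscr{G})|$ distinct alternative defective sets, each giving rise to the same observation vector as $\mathcal{K}$. Under the combinatorial prior, all of these sets, together with $\mathcal{K}$, have identical prior probability and hence, by Bayes' rule, identical posterior probability given $\mathbf{y}$. Since any decoder is a (possibly randomized) function of $\mathbf{y}$, it can agree with at most one of the $|V_{1+}(\mathscr{G})|\cdot|V_{0+}(\mathscr{G})|+1$ equally likely candidates; hence the conditional success probability is at most $\tfrac{1}{|V_{1+}(\mathscr{G})|\cdot|V_{0+}(\mathscr{G})|+1}\le\tfrac{1}{|V_{1+}(\mathscr{G})|\cdot|V_{0+}(\mathscr{G})|}$, which gives the claimed failure bound.

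I expect the main obstacle to be the swap identity, since it is where the two (slightly different) flavors of masking interact: the case of a test through $u_i$ but not $u_j$ uses the fact that $u_i$ is masked by $\mathcal{K}\setminus\{u_i\}$, whereas the symmetric case through $u_j$ but not $u_i$ uses the fact that the non-defective $u_j$ is masked by $\mathcal{K}$ itself. Once these cases are confirmed, the remainder is a standard Bayesian indistinguishability argument and requires no further combinatorial input.
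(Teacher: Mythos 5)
Note first that the paper does not prove this lemma at all: it is imported verbatim as \cite[Claim 2.3]{Oli20}, so the only meaningful comparison is against the standard argument behind that claim, which is indeed the swap-and-count argument you give. Your swap identity is correct and well argued: the four-way case analysis over tests (neither item, only $u_i$, only $u_j$, both) is exactly where the two flavours of masking are used, and your counting step is also fine, since the pair $(u_i,u_j)$ is recoverable from the swapped set, so the $|V_{1+}(\mathscr{G})|\cdot|V_{0+}(\mathscr{G})|$ alternatives are distinct from each other and from $\mathcal{K}$, and all are consistent with $(\mathbf{y},\mathscr{G})$.

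The gap is in the final inferential step, where you pass from ``equally likely given $\mathbf{y}$'' to a bound on the probability conditioned on the \emph{realization of $(V_{1+},V_{0+})$}. These are different conditionings. Given $(\mathbf{y},\mathscr{G})$, the posterior is uniform over \emph{all} consistent size-$k$ sets $C(\mathbf{y},\mathscr{G})$; but the swapped sets generally have \emph{different} masked sets than the truth (e.g.\ a defective that was masked only because of $u_i$ ceases to be masked after removing $u_i$), so they typically do not lie in the event $\{V_{1+}=S_1, V_{0+}=S_0\}$ being conditioned on. Conditioned on that event and on $\mathbf{y}$, the posterior is uniform only over the consistent sets whose masked sets are exactly $(S_1,S_0)$, which can be a single set; a decoder that outputs that set upon seeing the corresponding outcome then succeeds with conditional probability near one (a tiny instance: $n=4$, $k=2$, tests $\{1,2,3\}$ and $\{1,2,4\}$, $\mathcal{K}=\{1,2\}$, where $\mathcal{K}$ is the unique defective set with $V_{1+}=\{1,2\}$, $V_{0+}=\{3,4\}$, yet $|V_{1+}||V_{0+}|=4$). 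So the equal-posterior observation alone does not deliver the lemma in its literal conditional phrasing. What your argument \emph{does} prove is the form that the paper actually needs in Section \ref{sec:uncon_sublinear_converse}: for every realization, $|C(\mathbf{y},\mathscr{G})|\ge 1+|V_{1+}(\mathcal{K},\mathscr{G})|\,|V_{0+}(\mathcal{K},\mathscr{G})|$ and each consistent set carries posterior weight $1/|C(\mathbf{y},\mathscr{G})|$, whence for any threshold $L$ one gets
\begin{align}
    \mathbb{P}\big[\widehat{\mathcal{K}}=\mathcal{K},\ |V_{1+}|\cdot|V_{0+}|\ge L\big]\le\frac{1}{1+L},
\end{align}
so that $|V_{1+}|,|V_{0+}|=\omega(1)$ with probability $1-o(1)$ forces error probability $1-o(1)$. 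To fully match the statement as phrased, you would need either to restate it in this aggregate/observation-conditioned form, or to supply an additional argument addressing the change of conditioning, since the swapped configurations escape the conditioning event.
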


\begin{lemma} \label{lem:mul_chernoff_bound} \textup{(Multiplicative Chernoff bound \cite[Thm.~4.2]{Mot10})}
Suppose $X_1,\dots,X_m$ are independent random variables taking values in $\{0,1\}$. Let $X=\sum_{i=1}^mX_i$. Then for any $\epsilon\in(0,1)$,
\begin{align}
    \mathbb{P}[X\leq(1-\epsilon)\E[X]]
    &\leq\exp\bigg(-\frac{\epsilon^2\E[X]}{2}\bigg).
\end{align}
\end{lemma}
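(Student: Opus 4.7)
The statement is the standard lower-tail multiplicative Chernoff bound imported from \cite{Mot10}, so the plan is to sketch the classical exponential-moment (Chernoff) derivation rather than develop anything new. Let $\mu := \E[X]$, and note $\mu = \sum_{i=1}^{m} p_i$ where $p_i := \mathbb{P}[X_i = 1]$. First I would apply Markov's inequality to the non-negative variable $e^{-tX}$ for an arbitrary $t > 0$, rewriting the event $\{X \le (1-\epsilon)\mu\}$ as $\{e^{-tX} \ge e^{-t(1-\epsilon)\mu}\}$, to obtain
\begin{align*}
    \mathbb{P}[X \le (1-\epsilon)\mu] \le e^{t(1-\epsilon)\mu}\,\E\bigl[e^{-tX}\bigr].
\end{align*}
Next I would exploit independence to factor $\E[e^{-tX}] = \prod_{i=1}^m \E[e^{-tX_i}]$, and for each Bernoulli summand use the identity $\E[e^{-tX_i}] = 1 + p_i(e^{-t}-1)$ combined with the elementary bound $1+x \le e^x$ to get $\E[e^{-tX_i}] \le \exp\bigl(p_i(e^{-t}-1)\bigr)$; multiplying over $i$ yields $\E[e^{-tX}] \le \exp\bigl(\mu(e^{-t}-1)\bigr)$ and thus the $t$-dependent bound $\exp\bigl(t(1-\epsilon)\mu + \mu(e^{-t}-1)\bigr)$, valid for every $t > 0$.

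Then I would optimize over $t > 0$ by taking $t = -\log(1-\epsilon)$, which is positive since $\epsilon \in (0,1)$. Under this choice one has $e^{-t} = 1-\epsilon$, and the exponent collapses to $-\mu\bigl[(1-\epsilon)\log(1-\epsilon) + \epsilon\bigr]$, a Kullback--Leibler-type quantity that is manifestly non-positive.

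Finally, the remaining step is the scalar inequality $(1-\epsilon)\log(1-\epsilon) + \epsilon \ge \epsilon^2/2$ on $(0,1)$, which I would verify by expanding $\log(1-\epsilon) = -\sum_{j \ge 1} \epsilon^j / j$ to rewrite the left-hand side as the power series $\sum_{j \ge 2} \epsilon^j / [j(j-1)]$, whose leading term is exactly $\epsilon^2/2$ and whose remaining coefficients are all positive. Substituting this into the previous display produces the claimed bound $\exp(-\epsilon^2\mu/2)$. There is no genuine obstacle; the only pitfall is sign bookkeeping around the lower tail, namely applying Markov to $e^{-tX}$ rather than $e^{tX}$ and picking the corresponding minimizer $t = -\log(1-\epsilon)$ rather than $t = \log(1+\epsilon)$ as one would for the upper tail.
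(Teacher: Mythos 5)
Your derivation is correct: the Markov bound on $e^{-tX}$, the factorization by independence with $1+x\le e^x$, the optimal choice $t=-\log(1-\epsilon)$, and the series verification of $(1-\epsilon)\log(1-\epsilon)+\epsilon=\sum_{j\ge2}\epsilon^j/[j(j-1)]\ge\epsilon^2/2$ all check out. Note that the paper itself gives no proof of this lemma — it is imported directly from the cited reference \cite{Mot10} — and your argument is essentially the standard textbook proof found there, so there is nothing to reconcile.
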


Our goal is to show that with $T=(1-\epsilon)\frac{d^+n}{\rho}$ for some constant $\epsilon>0$, we have $|V_{1+}(G)|,|V_{0+}(G)|=\omega(1)$ with high probability (i.e., with probability $1-o(1)$), which implies (using Lemma \ref{lem:Pe_from_V+}) that $P_e=1-o(1)$.  As a stepping stone to our result for the combinatorial prior, we first study the i.i.d.~prior, whose defectivity vector we denote by $\mathbf{u}^*$, such that each entry is one independently with probability $p=\frac{k-\sqrt{k}\log n}{n}$. The following existing result allows us to transfer from the latter prior to the former.

\begin{lemma} \label{lem:comb_vs_iid_prior} \textup{\cite[Corollary 3.6]{Oli20}}
Given non-negative integers $C_1,C_2$ and fixed $\epsilon_1,\epsilon_2\in(0,1)$, if the modified model (i.i.d.~prior) satisfies
\begin{align}
   & \mathbb{P}\big[|V_{1+}(\mathscr{G},\mathbf{u}^*)|>2C_1\big]\geq1-\epsilon_1
    \text{ and } \nonumber \\
    &\mathbb{P}\big[|V_{0+}(\mathscr{G},\mathbf{u}^*)|>2C_2\big]\geq1-\epsilon_2, \label{eq:combi_to_iid_def}
\end{align}
then the original model (combinatorial prior) satisfies
\begin{align}
    &\mathbb{P}\big[|V_{1+}(\mathscr{G},\mathbf{u})|>C_1\big]\geq1-\epsilon_1-o(1)
    \text{ and } \nonumber \\
    &\mathbb{P}\big[|V_{0+}(\mathscr{G},\mathbf{u})|>C_2\big]\geq1-\epsilon_2-o(1). \label{eq:combi_to_iid_nondef}
\end{align}
\end{lemma}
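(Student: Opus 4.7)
The plan is to build a coupling of the i.i.d.\ defectivity vector $\mathbf{u}^*$ and the combinatorial defectivity vector $\mathbf{u}$ on a single probability space so that, with probability $1-o(1)$, the random set $\mathcal{K}^* := \{i : u_i^* = 1\}$ is a subset of $\mathcal{K} := \{i : u_i = 1\}$. Once that is in place, a monotonicity argument transfers the lower bounds on $|V_{1+}|$ and $|V_{0+}|$ from the modified to the original model, and the factor of two in the hypotheses \eqref{eq:combi_to_iid_def} supplies exactly the slack needed to absorb the inevitable discrepancy between $|\mathcal{K}|$ and $|\mathcal{K}^*|$.

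For the coupling I would first draw $K^* \sim \mathrm{Binomial}(n, p)$ with $p = (k - \sqrt{k}\log n)/n$, and then sample $\mathcal{K}^*$ uniformly among subsets of $[n]$ of size $K^*$; by permutation symmetry this reproduces the i.i.d.\ marginal of $\mathbf{u}^*$. On the event $\{K^* \le k\}$ I would define $\mathcal{K}$ by adjoining to $\mathcal{K}^*$ a uniformly random subset of $[n]\setminus\mathcal{K}^*$ of size $k - K^*$, which again by symmetry makes $\mathcal{K}$ uniform among $k$-subsets; on the complementary event I would draw $\mathcal{K}$ independently as a uniform $k$-subset. Since $\E[K^*] = k - \sqrt{k}\log n$ and $\Var[K^*] = np(1-p) = k(1+o(1))$, Chebyshev's inequality gives $\mathbb{P}[K^* > k] = o(1)$ and, more strongly, $k - K^* = \sqrt{k}\log n\,(1+o(1))$ with probability $1-o(1)$.

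On the event $\{\mathcal{K}^* \subseteq \mathcal{K}\}$ two monotonicity facts hold. First, enlarging the defective set can only enlarge the candidate masker $\mathcal{K}\setminus\{i\}$ of any defective $i$, so every item in $V_{1+}(\mathscr{G}, \mathbf{u}^*)$ remains defective and masked in the combinatorial instance; hence $V_{1+}(\mathscr{G}, \mathbf{u}^*) \subseteq V_{1+}(\mathscr{G}, \mathbf{u})$. Second, an item $i$ in $V_{0+}(\mathscr{G}, \mathbf{u}^*)$ that remains non-defective under $\mathbf{u}$ is still masked, because any test containing $i$ still hits $\mathcal{K}^* \subseteq \mathcal{K}$; consequently the only items lost when passing from $V_{0+}(\mathscr{G}, \mathbf{u}^*)$ to $V_{0+}(\mathscr{G}, \mathbf{u})$ are those flipped from non-defective to defective, yielding $|V_{0+}(\mathscr{G}, \mathbf{u})| \ge |V_{0+}(\mathscr{G}, \mathbf{u}^*)| - (k - K^*)$. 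A union bound over the four high-probability events $\{K^* \le k\}$, $\{|V_{1+}(\mathscr{G}, \mathbf{u}^*)| > 2C_1\}$, $\{|V_{0+}(\mathscr{G}, \mathbf{u}^*)| > 2C_2\}$, and $\{k - K^* < C_2\}$ then yields \eqref{eq:combi_to_iid_nondef}.

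The only delicate point, and the main obstacle, is the asymmetry for $V_{0+}$: the extension step can remove up to $k - K^* = O(\sqrt{k}\log n)$ items from $V_{0+}(\mathscr{G}, \mathbf{u}^*)$, and it is precisely the slack in the stronger hypothesis $|V_{0+}(\mathscr{G}, \mathbf{u}^*)| > 2C_2$ (rather than $> C_2$) that lets this loss be absorbed, provided $C_2$ dominates $\sqrt{k}\log n$ — a condition met in the intended application. No comparable subtlety arises for $V_{1+}$, for which the set inclusion is clean and the factor of two is merely parallel notation.
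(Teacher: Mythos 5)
Your coupling and the $V_{1+}$ half are fine, but the $V_{0+}$ half has a genuine gap, and you have put your finger on it yourself without resolving it. Bounding the loss by the \emph{absolute} number of upgraded items, $|V_{0+}(\mathscr{G},\mathbf{u})|\ge |V_{0+}(\mathscr{G},\mathbf{u}^*)|-(k-K^*)$, forces you to include the event $\{k-K^*<C_2\}$ in the union bound; since $k-K^*$ concentrates around $\sqrt{k}\log n$, that event has probability $o(1)$ (not $1-o(1)$) unless $C_2\gtrsim \sqrt{k}\log n$. The lemma, however, is stated for \emph{arbitrary} non-negative integers $C_1,C_2$, and your escape clause that the condition is ``met in the intended application'' is false: in Part I of the converse (Section \ref{sec:c=ceiling_term}) the lemma is invoked with $C_1=C_2=\log n$, which is far below $\sqrt{k}\log n$ when $k=n^{\theta}$. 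So as written, your argument does not prove the statement, nor even the instance of it that the paper actually uses. (The paper itself gives no proof --- it cites \cite[Corollary 3.6]{Oli20} --- so the burden is entirely on your argument.)

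The fix stays within your coupling but exploits that only a vanishing \emph{fraction}, not a bounded number, of $V_{0+}(\mathscr{G},\mathbf{u}^*)$ can be upgraded. Condition on $K^*\le k$, on $\mathcal{K}^*$, and hence on the set $V_{0+}(\mathscr{G},\mathbf{u}^*)$ of size $m$ (the test design is deterministic here). The $k-K^*$ items you adjoin form a uniformly random subset of $[n]\setminus\mathcal{K}^*$, so the number of them landing inside $V_{0+}(\mathscr{G},\mathbf{u}^*)$ is hypergeometric with mean $m\cdot\frac{k-K^*}{n-K^*}=m\cdot O\big(\frac{\sqrt{k}\log n}{n}\big)=o(m)$, and Markov's inequality gives
\begin{align}
\mathbb{P}\Big[\#\{\text{upgraded items in }V_{0+}(\mathscr{G},\mathbf{u}^*)\}>\tfrac{m}{2}\Big]\le \frac{2(k-K^*)}{n-K^*}=o(1),
\end{align}
uniformly over $m\ge 1$. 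Hence with probability $1-o(1)$ at least half of $V_{0+}(\mathscr{G},\mathbf{u}^*)$ survives as masked non-defectives, i.e.\ $|V_{0+}(\mathscr{G},\mathbf{u})|\ge m/2>C_2$ whenever $m>2C_2$; this is exactly the role of the factor $2$ in the hypothesis, and it removes any side condition on $C_2$. With this replacement (and your unchanged monotonicity argument for $V_{1+}$, which indeed needs no factor of $2$), the union bound over the coupling-failure event, the two hypothesis events, and the Markov event yields the lemma as stated.
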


In view of this result, we proceed by working with $\mathbf{u}^*$ instead of $\mathbf{u}$.  We will also use the following key fact \cite{Coj19}: Whenever items have distance at least 6 in the underlying graph, the events of being masked are independent. Leveraging on this fact, we introduce a procedure for obtaining a set $\mathcal{B}$ of size $N$ (to be specified shortly), such that each item has a degree of at most $d^-$, and the pairwise distance between items is at least 6. The procedure is as follows:
\begin{enumerate}
    \item Create $\mathscr{G}_0$ from $\mathscr{G}$ by executing the following: 
    \begin{enumerate}
        \item Remove all items whose degree in $\mathscr{G}$ is greater than $\log n$.
        \item Identify all tests whose degree in $\mathscr{G}$ is less than $\frac{\rho}{\log n}$, and simultaneously remove all of those tests and their respective items.
    \end{enumerate}
    \item For $i=1,\dots,N$, where $N=\frac{n^{1-2\beta(1-\theta)}}{\log^3 n}$:
    \begin{enumerate}
        \item Arbitrarily select an item in $\mathscr{G}_{i-1}$ whose degree in $\mathscr{G}$ is at most $d^-$ (assuming one exists).
        \item Extract the item and add it to the set $\mathcal{B}$.
        \item Remove the extracted item's tests in the first and third neighborhood, its items in the second and fourth neighborhood, and the extracted item itself.  Let $\mathscr{G}_i$ be the resulting graph.
    \end{enumerate}
\end{enumerate}
We now proceed to analyze each of the steps.

\textbf{Analysis of Step 1:} 
In Step 1(a), we remove all items with degree greater than $\log n$.  Then, in Step 1(b), we further remove tests with degree less than $\frac{\rho}{\log n}$ and their items.  We are left with a sub-graph $\mathscr{G}_0$ of the original graph, whose nodes' degree properties turn out to be convenient for the analysis.

Before proceeding with $\mathscr{G}_0$, we need to understand its number of items.
%
To do so, we investigate how many items of various kinds could have been removed from $\mathscr{G}$.  We first note that the number of edges in $\mathscr{G}$ contributed by items with degree greater than $\log n$ is upper bounded by the total number of edges, which equals
\begin{align}
    \sum_{u\in V(\mathscr{G})}|\partial_{\mathscr{G}} u|
    =\sum_{a\in F(\mathscr{G})}|\partial_{\mathscr{G}} a|
    \leq T\rho
    =(1-\epsilon)d^+n.
\end{align}
Hence, the number of items in $\mathscr{G}$ with degree greater than $\log n$ (and hence removed in Step 1(a)) is at most $\frac{(1-\epsilon)d^+n}{\log n}=o(n)$.

Next, we note that the number of tests with degree less than $\frac{\rho}{\log n}$ removed in Step 1(b) is trivially no higher than the total number of tests $\frac{(1-\epsilon)d^+n}{\rho}$. Hence, the number of edges contributed by those tests is at most $\frac{\rho}{\log n}\cdot \frac{(1-\epsilon)d^+n}{\rho}=o(n)$. This implies that the number of items that are removed in Step 1(b) is at most $o(n)$.  
Combining these observations, we conclude that $\mathscr{G}_0$ has $n(1-o(1))$ items, all with degree at most $\log n$ in $\mathscr{G}_0$.  Moreover, every test in $\mathscr{G}_0$ has degree between $\frac{\rho}{\log n}$ and $\rho$ in $\mathscr{G}$ (though their degree in $\mathscr{G}_0$ itself could be below $\frac{\rho}{\log n}$).

\begin{lemma} \label{lem:items_with_at_most_d^-_degree}
Under the preceding setup with $T \le (1-\epsilon)\frac{d^+n}{\rho}$, the number of items in $\mathscr{G}_0$ with degree at most $d^-$ in $\mathscr{G}$ is $\Theta(n)$.
\end{lemma}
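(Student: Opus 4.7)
The plan is a short double-counting argument, together with the degree bookkeeping already established for Step 1. First I would bound the total number of edges in $\mathscr{G}$: since every test has degree at most $\rho$ and there are at most $T \le (1-\epsilon)\frac{d^+ n}{\rho}$ tests,
\begin{align}
    \sum_{u \in V(\mathscr{G})} |\partial_{\mathscr{G}} u|
    = \sum_{a \in F(\mathscr{G})} |\partial_{\mathscr{G}} a|
    \le T\rho
    \le (1-\epsilon) d^+ n.
\end{align}

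Next I would apply an averaging argument. Let $A \subseteq V(\mathscr{G})$ be the set of items whose degree in $\mathscr{G}$ is at most $d^- = d^+ - 1$. If $|A| < \epsilon n$, then the remaining $n - |A| > (1-\epsilon) n$ items each contribute at least $d^+$ to the degree sum, giving strictly more than $(1-\epsilon)d^+ n$ edges, contradicting the above bound. Hence $|A| \ge \epsilon n$.

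Finally I would check that most of $A$ survives the pruning to form $\mathscr{G}_0$. Step 1(a) only discards items of degree strictly greater than $\log n$ in $\mathscr{G}$; since $d^-$ is a constant, $\log n > d^-$ for all sufficiently large $n$, so no element of $A$ is removed at this stage. Step 1(b), as already shown in the paragraph preceding the lemma, removes at most $o(n)$ items in total. Therefore $|A \cap V(\mathscr{G}_0)| \ge \epsilon n - o(n) = \Omega(n)$, and since the count is trivially at most $n$, it is $\Theta(n)$, as required. There is no genuine obstacle here: the argument is an elementary averaging estimate reusing bounds already derived for Step 1, and the only subtlety is observing that $d^-$ is a constant so the Step 1(a) cutoff $\log n$ never threatens membership in $A$.
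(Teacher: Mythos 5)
Your proof is correct and rests on the same ingredients as the paper's: double-counting edges against the bound $T\rho \le (1-\epsilon)d^+ n$ and the Step~1 bookkeeping that only $o(n)$ items are pruned. The paper phrases it as a contradiction directly inside $\mathscr{G}_0$ (using that $\mathscr{G}_0$ retains $n(1-o(1))$ items), while you count the low-degree items in $\mathscr{G}$ first and then transfer to $\mathscr{G}_0$; this is only an organizational difference, and your version even gives the slightly more explicit bound $\epsilon n - o(n)$.
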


\begin{proof}
Suppose for contradiction that the number of items in $\mathscr{G}_0$ with degree at most $d^-$ in $\mathscr{G}$ is $o(n)$.  Then, the number of items in $\mathscr{G}_0$ with degree at least $d^+$ in $\mathscr{G}$ is $n(1-o(1))$. It follows that the number of edges (in $\mathscr{G}$) contributed by these items is at least $d^+n(1-o(1))$, which implies that the number of tests required for those edges is at least
\begin{align}
    \frac{d^+n(1-o(1))}{\rho}
    > (1-\epsilon)\frac{d^+n}{\rho},
\end{align}
for sufficiently large $n$, contradicting the assumption on $T$.
\end{proof}

\textbf{Analysis of Step 2:} Due to the degree upper bounds (in $\mathscr{G}_0$) established above, in each iteration, we remove at most
\begin{align}
    \rho\log n+\rho^2\log^2n=\Theta\big(n^{2\beta(1-\theta)}\log^2n\big)
\end{align}
items.  This scales as $o(n)$ when $\theta>1-\frac{1}{2\beta}$ with $\beta\in(0,1)$, which is satisfied due to our assumption $\theta>\frac{1+\beta}{2+\beta}=1-\frac{1}{2+\beta}$ (note that $2+\beta\geq2\beta$). In total, after $N=\frac{n^{1-2\beta(1-\theta)}}{\log^3 n}$ iterations, we removed a number of items scaling as
\begin{align}
    N\cdot\Theta\big(n^{2\beta(1-\theta)}\log^2n\big)=\Theta\bigg(\frac{n}{\log n}\bigg)=o(n). \label{eq:d^-_deg_items}
\end{align}
Since we have $\Theta(n)$ items with degree at most $d^-$ in $\mathscr{G}_0$ (see Lemma \ref{lem:items_with_at_most_d^-_degree}), and we remove only $o(n)$ items (from \eqref{eq:d^-_deg_items}), it follows that we will always be able to extract $N$ items, i.e., the item described in Step 2(a) is always guaranteed to exist under our assumptions.

{\bf Analysis of the masking probability.} We now focus our attention on the set $\mathcal{B}$. For each item $u\in\mathcal{B}$, we have 
\begin{align}
    \mathbb{P}[u\in V_{+}(\mathscr{G},\mathbf{u}^*)]
    &\stackrel{(a)}{\geq}\prod_{a\in\partial_{\mathscr{G}} u}\Big(1-(1-p)^{|\partial_{\mathscr{G}} a|-1}\Big) \label{eq:mask_step_a}\\
    &\stackrel{(b)}{=}\prod_{a\in\partial_{\mathscr{G}_0} u}\Big(1-(1-p)^{|\partial_{\mathscr{G}} a|-1}\Big) \label{eq:mask_step_b}\\
    &\stackrel{(c)}{=} \prod_{a\in\partial_{\mathscr{G}_0} u}\Big( p(|\partial_{\mathscr{G}} a|-1)(1+o(1)) \Big) \\
    &\stackrel{(d)}{\geq}\bigg(\frac{p\rho}{\log n}(1+o(1))\bigg)^{|\partial_{\mathscr{G}_0} u|} \\
    &\stackrel{(e)}{\geq}\bigg(\frac{p\rho}{\log n}(1+o(1))\bigg)^{d^-},
\end{align}
where:
\begin{itemize}
    \item (a) follows from the fact that the events of $u$ being masked in each of its tests satisfy an increasing property\footnote{Namely, marking any additional item(s) as defective can only increase (or keep unchanged) the probability of the masking event associated with each test $a \in \partial_{\mathscr{G}} u$.} \cite[Lemma 4]{Ald18}; this is a sufficient condition for applying the FKG inequality \cite{For71}, which lower bounds the probability by the expression that would hold under independence across $a \in \partial_{\mathscr{G}} u$.  Note also that $|\partial_{\mathscr{G}} a| \geq \frac{\rho}{\log n} \ge 2$ because our procedure (Step 1(b)) ensures that item $u$ is not an item that is individually tested in $\mathscr{G}$.
    \item (b) applies $\partial_{\mathscr{G}}u = \partial_{\mathscr{G}_0}u$, which holds because transforming $\mathscr{G}$ into $\mathscr{G}_0$ causes $u$ to stay in the same tests. This is because whenever we remove a test in Step 1 of the procedure, we also remove all the items in the test.
    \item (c) follows by first noting that $p|\partial_{\mathscr{G}} a|\leq\frac{k\rho}{n}(1-o(1))=o(1)$, where the inequality holds by substituting $p=\frac{k}{n}(1-o(1))$ and $|\partial_{\mathscr{G}} a|\leq\rho$, and the equality uses $\rho=o\big(\frac{n}{k}\big)$. We then use a Taylor expansion in \eqref{eq:mask_step_b}, followed by some simplifications.
    \item (d) uses the fact that $|\partial_{\mathscr{G}} a|\geq\frac{\rho}{\log n} = \omega(1)$ for each test $a$ in $\mathscr{G}_0$. 
    \item (e) uses the fact that $\frac{p\rho}{\log n}(1+o(1))=O\big(\frac{k\rho}{n\log n}\big)=\frac{n^{-\Omega(1)}}{\log n}<1$ (implying that a higher power in the exponent makes the overall expression smaller), and $|\partial_{\mathscr{G}_0} u| \le |\partial_{\mathscr{G}} u| \leq d^-$ by construction in our extraction procedure.
\end{itemize}

We now turn to the number of masked defective items in $\mathcal{B}$. Recall that for any two items $u,u'\in\mathcal{B}$, the events of being masked are independent due to the pairwise distances being at least 6. Furthermore, each item is defective independently with probability $p$ under $\mathbf{u}^*$, and the property of any given item being masked is independent of the item's defectivity status (as noted in \cite{Coj20}).  Hence, $|V_{1+}(\mathscr{G},\mathbf{u}^*)|$ stochastically dominates $\text{Binomial}\big(N,p\cdot\big(\frac{p\rho}{\log n}(1+o(1))\big)^{d^-}\big)$, which has an expectation of
\begin{align}
    &N\cdot p\cdot\bigg(\frac{p\rho}{\log n}(1+o(1))\bigg)^{d^-} \nonumber \\
    &=\frac{n^{1-2\beta(1-\theta)}}{\log^3 n}\cdot\frac{k}{n}\cdot\Big(\frac{k\rho}{n\log n}\Big)^{d^-}(1+o(1)) \\
    &=\frac{n^{1-(1-\theta)(2\beta+1)-d^-(1-\theta)(1-\beta)}}{(\log n)^{3+d^-}}(1+o(1)). \label{eq:V_1+_exp_bound0} \\
    &=\frac{n^{\Omega(1)}}{(\log n)^{3+d^-}}(1+o(1)), \label{eq:V_1+_exp_bound}
\end{align}
where \eqref{eq:V_1+_exp_bound0} substitutes $k = n^{\theta}$, $\rho = \big(\frac{n}{k}\big)^{\beta}$, and $p = \frac{k}{n}(1+o(1))$, and \eqref{eq:V_1+_exp_bound} holds since $d^-<\frac{1-(1-\theta)(2\beta+1)}{(1-\theta)(1-\beta)}$ by definition. By Lemma \ref{lem:mul_chernoff_bound} with $\epsilon=1-\frac{2(\log n)^{4+d^-}}{n^{\Omega(1)}}(1-o(1))$, we find that $|V_{1+}(\mathscr{G},\mathbf{u}^*)|>2\log n$ with probability $1-o(1)$. The same analysis can be repeated to show that $|V_{0+}(\mathscr{G},\mathbf{u}^*)|>2\log n$ with probability $1-o(1)$ (intuitively, the same readily follows for non-defectives because $k = o(n)$, i.e., the number of non-defectives is much higher).

Finally, we transfer our findings from the i.i.d.~prior $\mathbf{u}^*$ to the combinatorial prior $\mathbf{u}$.  Specifically, using Lemma \ref{lem:comb_vs_iid_prior}, we immediately obtain $|V_{1+}(\mathscr{G},\mathbf{u})|,|V_{0+}(\mathscr{G},\mathbf{u})|>\log n=\omega(1)$ with probability $1-o(1)$. By conditioning on $|V_{1+}(\mathscr{G},\mathbf{u})|=\omega(1)$ and $|V_{0+}(\mathscr{G},\mathbf{u})|=\omega(1)$, and applying Lemma \ref{lem:Pe_from_V+}, we get a conditional error probability of $1-o(1)$, which completes the first part of the proof of Theorem \ref{thm:rho_converse}.

\subsection{Part II} \label{sec:c=2}

We again work in the regime where $k = n^{\theta}$ and $\rho=\big(\frac{n}{k}\big)^\beta=n^{\beta(1-\theta)}$, with $\beta\in(0,1)$, i.e., the constant factors are set to one. We start by stating an auxiliary result that will be used later.

\begin{lemma} \label{lem:hypergeom_chernoff_bound} \textup{\cite[Section 5]{Ska13}}
For a random variable $Z_1\sim\textup{Hypergeometric}(n,z,k)$ (i.e., $\mathbb{P}[Z_1=z_1]={z\choose z_1}{n-z\choose k-z_1}{n\choose k}^{-1}$), $p=\frac{z}{n}$, and $0<t<p$, we have
\begin{align}
    \mathbb{P}[Z_1\leq(p-t)k]
    &\leq e^{-kD(p-t\|t)}
    \leq e^{-2t^2k},
\end{align}
where $D(a\|b)=a\log\frac{a}{b}+(1-a)\log\frac{1-a}{1-b}$ is the binary KL divergence.
\end{lemma}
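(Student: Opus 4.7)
\textbf{Proof Proposal for Lemma \ref{lem:hypergeom_chernoff_bound}.}

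The plan is to follow Hoeffding's classical strategy: reduce the hypergeometric tail bound to the binomial tail bound via a moment generating function (MGF) comparison, then apply the standard Chernoff argument, and finally invoke Pinsker's inequality to pass from the KL form to the quadratic form. Concretely, the sample $Z_1 \sim \text{Hypergeometric}(n,z,k)$ can be realized as $Z_1 = \sum_{i=1}^{k} \xi_i$, where $\xi_1,\dots,\xi_k$ are the indicators of the $k$ draws without replacement from an urn with $z$ marked items out of $n$. Hoeffding's theorem states that for any convex function $\phi$, $\E[\phi(Z_1)] \le \E[\phi(Z_1')]$, where $Z_1' \sim \text{Binomial}(k,p)$ with $p = z/n$. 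In particular, this holds for $\phi(x) = e^{-\lambda x}$ for any $\lambda > 0$, so the MGF of $-Z_1$ is dominated by that of $-Z_1'$.

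Next, I would apply Markov's inequality in the standard Chernoff form: for any $\lambda > 0$,
\begin{align}
    \mathbb{P}[Z_1 \le (p-t)k]
    &= \mathbb{P}[e^{-\lambda Z_1} \ge e^{-\lambda(p-t)k}] \nonumber \\
    &\le e^{\lambda(p-t)k}\, \E[e^{-\lambda Z_1}] \nonumber \\
    &\le e^{\lambda(p-t)k}\, \E[e^{-\lambda Z_1'}].
\end{align}
Since $Z_1'$ is a sum of $k$ i.i.d.\ Bernoulli$(p)$ variables, $\E[e^{-\lambda Z_1'}] = (1 - p + p e^{-\lambda})^k$. Optimizing the resulting bound over $\lambda > 0$ is a textbook calculus exercise whose minimizer yields $e^{-\lambda} = \frac{(p-t)(1-p)}{p(1-(p-t))}$, and substituting back produces exactly $e^{-k D(p-t \,\|\, p)}$, which is the first claimed inequality (interpreting the statement's $D(p-t\|t)$ as a typo for $D(p-t\|p)$, matching the standard reference).

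For the second inequality, I would invoke Pinsker's inequality in the form $D(a \,\|\, b) \ge 2(a-b)^2$ for $a,b \in (0,1)$, giving $D(p-t \,\|\, p) \ge 2t^2$ and hence $e^{-kD(p-t\|p)} \le e^{-2t^2 k}$.

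The main obstacle is the Hoeffding MGF domination step, since it is the one non-elementary ingredient: the other two steps are routine Chernoff optimization and a standard information-theoretic inequality. For this I would simply quote Hoeffding's classical result on sampling without replacement, which is available in standard references and is precisely the foundation of the cited Skala note; once that is in hand, the rest of the proof is mechanical.
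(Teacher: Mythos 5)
Your proof is correct and is essentially the standard argument underlying the cited Skala note (the paper itself imports this lemma by citation and gives no proof): Hoeffding's convex-ordering comparison between sampling without and with replacement, the routine Chernoff optimization yielding the KL form, and Pinsker's inequality $D(a\|b)\ge 2(a-b)^2$ for the quadratic bound. You are also right that the divergence in the statement should read $D(p-t\,\|\,p)$ rather than $D(p-t\,\|\,t)$; the latter is a typo, and the bound you derive, $\mathbb{P}[Z_1\le(p-t)k]\le e^{-kD(p-t\|p)}\le e^{-2t^2k}$, is the intended and standard one.
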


We wish to show that with $T=(1-\epsilon)\frac{2n}{\rho}$,\footnote{We may assume equality for the purpose of proving a converse, since reducing the number of tests only makes recovery even more difficult.} where $\epsilon$ is a positive constant, no algorithm successfully recovers $\mathbf{u}$ from $(\mathbf{y},\mathscr{G})$ with probability $\Omega(1)$.  We proceed with a proof by contradiction, assuming that the opposite is true, i.e., that with $T=(1-\epsilon)\frac{2n}{\rho}$, there exists an algorithm that successfully recovers $\mathbf{u}$ from $(\mathbf{y},\mathscr{G})$ with probability $\Omega(1)$. Based on this assumption, we will show that $\epsilon = o(1)$, which gives us a contradiction.

\begin{lemma} \label{lem:degree_one}
    If $T=(1-\epsilon)\frac{2n}{\rho}$ and the success probability is $\Omega(1)$, then there are at least $2\epsilon n(1-o(1))$ degree-one items.
\end{lemma}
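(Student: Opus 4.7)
The plan is a two-step argument: first a deterministic degree-counting bound, then an information-theoretic argument to bound the number of untested items.

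For the first step, let $n_d$ denote the number of items of degree exactly $d$ in $\mathscr{G}$, so that $\sum_{d \ge 0} n_d = n$. Since each test contains at most $\rho$ items, double-counting edges in the bipartite graph gives $\sum_{d \ge 0} d \cdot n_d \le T\rho = 2(1-\epsilon)n$. Using $\sum_{d \ge 0} d \cdot n_d \ge n_1 + 2 n_{\ge 2} = n_1 + 2(n - n_0 - n_1)$, this rearranges to the deterministic inequality
\begin{align}
    n_1 \;\ge\; 2\epsilon n - 2 n_0.
\end{align}

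For the second step, I would show that the hypothesis of $\Omega(1)$ success probability forces $n_0 = o(n)$. The intuition is that items of degree zero never appear in any test, so $\mathbf{y}$ carries no information about their defectivity status. Formally, conditional on $\mathscr{G}$ and the defectivity pattern of the tested items, the number $D$ of defectives among the $n_0$ untested items is hypergeometric with mean of order $n_0 k / n$; by Lemma \ref{lem:hypergeom_chernoff_bound} (applied with a constant fraction of the mean as deviation), $D$ concentrates around this mean with high probability. Any decoder must then correctly specify which $D$ of the $n_0$ untested items are defective, and by exchangeability the conditional probability of doing so is at most $\binom{n_0}{D}^{-1}$, which is exponentially small in $n_0 k/n$ whenever $n_0 k / n = \omega(1)$. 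Since the overall success probability is $\Omega(1)$, we must have $n_0 k / n = O(1)$, and therefore $n_0 = O(n/k) = o(n)$ (using $k \to \infty$).

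Combining the two steps yields $n_1 \ge 2\epsilon n - o(n) = 2\epsilon n (1 - o(1))$, as required. The main obstacle is the second step: one must carefully condition under the combinatorial prior so that the untested items are exchangeable and genuinely independent of $(\mathbf{y},\mathscr{G})$, and then argue that the decoder's best strategy on them is essentially guessing. The first step is purely algebraic and should present no difficulty, and in fact the second step likely yields a bound on $n_0$ that is far stronger than the $o(n)$ needed here.
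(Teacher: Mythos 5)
Your proposal is correct and follows essentially the same route as the paper: the paper likewise combines the edge-counting inequality $\alpha_1 n + 2(1-\alpha_0-\alpha_1)n \le T\rho = 2(1-\epsilon)n$ with an argument that the number of degree-zero items is $o(n)$, the latter because $\Omega(n)$ untested items would (by hypergeometric concentration) contain $\omega(1)$ defectives and $\omega(1)$ non-defectives that no decoder can distinguish, contradicting $\Omega(1)$ success probability. Your second step is just a more quantitative version of this same indistinguishability argument (yielding $n_0 = O(n/k)$ rather than merely $o(n)$), so no substantive difference.
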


\begin{proof}
Let $\alpha_0 n$ and $\alpha_1 n$ be the number of items with degree zero and one respectively. We start by establishing that $\alpha_0=o(1)$.  This is because if there were $\Omega(n)$ degree-zero items, then with high probability there would be $\omega(1)$ degree-zero defectives and $\omega(1)$ degree-zero non-defectives (e.g., using the variant of Hoeffding's bound for the hypergeometric distribution \cite{Hoe63}).  Since these are impossible to distinguish from each other, this implies $o(1)$ success probability, in contradiction with the lemma assumption.

Now, by lower bounding the number of edges contributed by the items, and upper bounding the number of edges contributed by the tests, we obtain
\begin{align}
    &\alpha_1 n+2(1-\alpha_0-\alpha_1)n
    \leq\sum_{u\in V(G)}|\partial_{\mathscr{G}} u|
    =\sum_{a\in F(G)}|\partial_{\mathscr{G}} a| \nonumber \\
    &\qquad \leq T\rho
    =(1-\epsilon)2n.
\end{align}
Combining the left-most and right-most expressions and making $\alpha_1$ the subject and applying $\alpha_0 = o(1)$, we obtain $\alpha_1 \geq2\epsilon-2\alpha_0=2\epsilon(1-o(1))$.
\end{proof}

Before proceeding, we introduce some further notation:
\begin{itemize}
    \item $\mathcal{T}'$ denotes the set of tests with at least $\frac{\rho}{\log n}$ items of degree one, and $T'=|\mathcal{T}'|$.  We observe that the lower bound $T'\geq\frac{n}{\rho\log n}$ must hold, since otherwise the total number of items of degree one would be at most $(1-\epsilon) \frac{2n}{\rho} \cdot \frac{\rho}{\log n} + \frac{n}{\rho\log n} \cdot \rho \le \frac{3n}{\log n}$, contradicting Lemma \ref{lem:degree_one}.
    \item $T'_+$ denotes the number of tests in $\mathcal{T}'$ with exactly one defective item of degree one.
    \item $z$ denotes the number of items with degree one that appear in the tests of $\mathcal{T}'$. Note that $z$ satisfies $z\geq\frac{T'\rho}{\log n}\geq\frac{n}{\log^2n}$ because each test in $\mathcal{T}'$ has at least $\frac{\rho}{\log n}$ items of degree one, and $z\leq T'\rho$ because of the $\rho$-sized test constraint. These inequalities will be used in our analysis later.
    \item $Z_1$ denotes the number of {\em defective} items with degree one that appear in the tests of $\mathcal{T}'$. 
\end{itemize}
Note that among the four quantities introduced above, only $T'_+$ and $Z_1$ are random (i.e., affected by the randomness of the defective set). 

\begin{lemma} \label{lem:test_w_rho/logn_deg(1)_items}
When the number $T'$ of tests containing at least $\frac{\rho}{\log n}$ items of degree one is at least $\frac{n}{\rho\log n}$, any algorithm recovering $\mathbf{u}$ from $(\mathbf{y},\mathscr{G})$ has a success probability of $o(1)$.
\end{lemma}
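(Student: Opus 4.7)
The plan is to derive a contradiction from the assumed $\Omega(1)$ success probability by showing that, with high probability, there exists a test $t \in \mathcal{T}'$ containing exactly one defective item of degree one among $m_t \ge \rho/\log n$ degree-one items; any such test forces an indistinguishability among $m_t$ candidate defective sets, driving the success probability to $o(1)$.

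First, I would lower bound $\E[Z_1]$. The hypothesis $T' \ge n/(\rho\log n)$, combined with each test in $\mathcal{T}'$ containing at least $\rho/\log n$ degree-one items, yields $z \ge n/\log^2 n$, so $\E[Z_1] = zk/n \ge k/\log^2 n = \omega(1)$. Applying Lemma \ref{lem:hypergeom_chernoff_bound} with $t = p/2$ (where $p = z/n$) gives $Z_1 \ge \E[Z_1]/2$ with probability $1-o(1)$. Next, since $\rho k/n = n^{-(1-\theta)(1-\beta)} = o(1)$, the expected total number of defective degree-one items lying in tests that contain at least two such items is bounded by
\[
\sum_{t \in \mathcal{T}'} \E[k'_t(k'_t-1)] \le T \cdot \rho^2 (k/n)^2 = O(\rho k^2/n),
\]
which is $o(\E[Z_1])$; Markov's inequality then gives $T'_+ \ge (1-o(1))Z_1 = \omega(1)$ with probability $1-o(1)$.

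The core indistinguishability argument is the following: for any test $t \in \mathcal{T}'$ with exactly one defective degree-one item, swapping that defective degree-one item with any of the other $m_t - 1$ non-defective degree-one items in $t$ (leaving all other items unchanged) produces a defective set of the correct cardinality $k$ whose induced outcomes match $\mathbf{y}$ on every test, because degree-one items appear in no other test. Hence the posterior under the uniform combinatorial prior is uniform over at least $m_t$ configurations, so the MAP decoder succeeds with conditional probability at most $1/m_t \le \log n/\rho$. Since $\rho = n^{\beta(1-\theta)} = \omega(\log n)$, we conclude $\Pr[\text{success}] \le \Pr[T'_+ = 0] + \log n/\rho = o(1)$, contradicting the $\Omega(1)$ assumption. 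The main obstacle is the bound in step 2 showing that enough of the $Z_1$ defective degree-one items are spread into distinct tests (rather than colliding in the same test) to guarantee $T'_+ \ge 1$; this is what really uses $\rho = o(n/k)$, i.e.\ $\beta < 1$.
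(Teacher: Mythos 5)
Your proposal is correct, but it reaches the conclusion by a noticeably different route than the paper. The paper first proves two auxiliary lemmas: that the number of masked non-defectives satisfies $|V_{0+}|=\omega(1)$ (argued under the i.i.d.\ prior via a binomial stochastic-domination bound on $T'_+$, then transferred to the combinatorial prior using Lemma \ref{lem:comb_vs_iid_prior}), and that $Z_1=\omega(1)$ (hypergeometric tail bound, as in your first step). It then conditions on both and splits into cases: if $|V_{1+}|=0$ it deduces $T'_+\ge Z_1=\omega(1)$ and uses the same ``guess uniformly among the $\ge\rho/\log n$ degree-one items'' observation you make, while if $|V_{1+}|>0$ it invokes the counting bound of Lemma \ref{lem:Pe_from_V+} from \cite{Oli20} together with $|V_{0+}|=\omega(1)$. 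You instead avoid the case split, the quantity $V_{0+}$, and the i.i.d.-prior transfer entirely: you show directly under the combinatorial prior that $T'_+=\omega(1)$ with high probability, by combining the hypergeometric bound on $Z_1$ with a first-moment collision estimate $\sum_t \E[k'_t(k'_t-1)]=O(\rho k^2/n)=o(\E[Z_1])$ (this is where $\rho k/n=n^{-(1-\theta)(1-\beta)}=o(1)$, i.e.\ $\beta<1$, enters—paralleling where the paper uses $k\rho/n=o(1)$ inside Lemma \ref{lem:V+_large}), and then you make the guessing step rigorous via an explicit swap argument: exchanging the unique degree-one defective in such a test with any of its $\ge\rho/\log n -1$ degree-one non-defective test-mates yields another size-$k$ set consistent with $(\mathbf{y},\mathscr{G})$, so the posterior is uniform over at least $\rho/\log n$ candidates and success is at most $\Pr[T'_+=0]+\log n/\rho=o(1)$. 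Your version is more self-contained (no reliance on \cite[Claim 2.3]{Oli20} or the prior-transfer lemma for this part) and handles masking automatically, whereas the paper's case analysis reuses machinery it needs anyway in Part I of the converse, so it costs little there. One small step worth writing out explicitly in your argument: the inference from the collision bound to $T'_+\ge(1-o(1))Z_1$ uses $T'_+\ge Z_1-\sum_t k'_t(k'_t-1)$ together with Markov's inequality at level $\E[Z_1]/\log n$ and a union bound with the $Z_1$ concentration event; this is routine given your estimates, but as stated it is compressed into one sentence.
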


Before proving this lemma, we state some additional auxiliary results.

\begin{lemma} \label{lem:V+_large}
Under the combinatorial prior with our assumed scaling laws, the size of the set $V_{0+}$ of masked non-defective items scales as $\omega(1)$ with probability $1-o(1)$.
\end{lemma}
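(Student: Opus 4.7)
The plan is to reuse the template established for $V_{1+}$ in Part~I but adapt it to the simpler structure available here: pass to the i.i.d.\ prior via Lemma~\ref{lem:comb_vs_iid_prior}, exhibit a large collection of degree-one non-defectives whose masking events admit independence, and finish with a concentration inequality before transferring back.

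Under the i.i.d.\ prior $\mathbf{u}^*$, I would focus on the degree-one items inside the tests of $\mathcal{T}'$. For each $a\in\mathcal{T}'$ let $m_a$ be the number of degree-one items in $a$ and $D_a$ the number of them that are defective; by hypothesis $m_a\ge\rho/\log n$ and $\sum_{a\in\mathcal{T}'} m_a = z\ge n/\log^2 n$. Since a degree-one item lives in exactly one test, the collections of degree-one items across distinct tests are disjoint, so $(D_a)_{a\in\mathcal{T}'}$ are mutually independent binomials. This yields the natural lower bound
\[
|V_{0+}(\mathscr{G},\mathbf{u}^*)| \;\ge\; \sum_{a\in\mathcal{T}'}(m_a-D_a)\,\mathds{1}[D_a\ge 1],
\]
capturing that every non-defective degree-one item sitting in a test made positive by another defective is automatically masked.

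A short calculation using $p=\Theta(k/n)$, $p\rho=o(1)$, and the approximation $1-(1-p)^{m_a}\approx p m_a$ shows that the expectation of the right-hand side grows at least like $p\cdot(\rho/\log n-1)\cdot z$, which is $\Theta\big(k\rho/\mathrm{polylog}(n)\big)=n^{\Omega(1)}/\mathrm{polylog}(n)$, and in particular $\omega(1)$. I would then apply Chebyshev's inequality, using the independence across tests and the deterministic bound $m_a\le\rho$: the variance is at most $\sum_a m_a^2\Pr[D_a\ge 1]=O(p\rho^2 z)=O(k\rho^2)$, which is negligible compared to the squared mean. This gives $|V_{0+}(\mathscr{G},\mathbf{u}^*)|=\omega(1)$ with probability $1-o(1)$, and invoking Lemma~\ref{lem:comb_vs_iid_prior} transfers the conclusion back to the combinatorial prior $\mathbf{u}$. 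The main subtlety is conceptual rather than computational: one must verify that restricting to \emph{degree-one} items is precisely what decouples the masking-relevant randomness across distinct tests of $\mathcal{T}'$, since any item shared between two tests necessarily has degree at least two and would reintroduce dependence.
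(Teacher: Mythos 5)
Your proposal is correct and follows essentially the same route as the paper's proof: pass to the i.i.d.\ prior, exploit that degree-one items lie in exactly one test so the masking-relevant randomness is independent across the tests of $\mathcal{T}'$, show via concentration that many non-defective degree-one items sit in tests containing a degree-one defective (using $m_a\ge\rho/\log n$ and $z\ge n/\log^2 n$), and transfer back with Lemma \ref{lem:comb_vs_iid_prior}. The only difference is bookkeeping: the paper counts the tests $T'_+$ with exactly one degree-one defective via stochastic domination by a binomial and a Chernoff bound, then multiplies by $\frac{\rho}{\log n}-1$, whereas you bound $|V_{0+}|$ directly by $\sum_{a\in\mathcal{T}'}(m_a-D_a)\mathds{1}[D_a\ge 1]$ and apply Chebyshev, which is an inessential variation.
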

\begin{proof}
We momentarily return to the i.i.d.~prior $\mathbf{u}^*$, where $p=\frac{k-\sqrt{k}\log n}{n}=\frac{k}{n}(1-o(1))$, and study how $T'_+$ scales. Under $\mathbf{u}^*$, the number of defective items of degree one in a test containing $\rho'\geq\frac{\rho}{\log n}$ items of degree one is distributed as $\text{Binomial}\big(\rho',\frac{k}{n}(1-o(1))\big)$, which implies
\begin{align}
    &\mathbb{P}[\text{$1$ def.~item of degree one}] \nonumber \\
    &={\rho'\choose 1}\Big(\frac{k}{n}(1-o(1))\Big)\Big(1-\frac{k}{n}(1-o(1))\Big)^{\rho'-1} \\
    &\stackrel{(a)}{\geq}\frac{\rho}{\log n}\Big(\frac{k}{n}(1-o(1))\Big)\Big(1-\frac{k}{n}(1-o(1))\Big)^{\rho-1} \\
    &\stackrel{(b)}{=}\frac{k\rho}{n\log n}\Big(1-\frac{k\rho}{n}(1-o(1))\Big)(1-o(1)) \\
    &\stackrel{(c)}{=}\frac{k\rho}{n\log n}(1-o(1)),
\end{align}
where (a) uses $\rho' \ge \frac{\rho}{\log n}$ and $\rho' \le \rho$, (b) uses $\frac{k\rho}{n}=o(1)$ and a Taylor expansion (followed by some simplifications), and (c) uses $\frac{k\rho}{n}=o(1)$. Observe that the event of a test in $\mathcal{T}'$ having at exactly one defective item of degree one is independent of the same event for the other tests in $\mathcal{T}'$, since the items are defective independently and cannot be shared across the tests (due to the degree of one). Hence, $T'_+$ stochastically dominates $\text{Binomial}\big(T',\frac{k\rho}{n\log n}(1-o(1))\big)$, which implies
\begin{align}
    \E\big[T'_+\big]
    &\geq T' \cdot \frac{k\rho}{n\log n} \cdot (1-o(1))
    =\frac{k}{\log^2n}(1-o(1)),
\end{align}
by applying $T'\geq\frac{n}{\rho\log n}$. By Lemma \ref{lem:mul_chernoff_bound} with $\epsilon=1-\frac{1+o(1)}{\log n}$, it follows that
\begin{align}
    \mathbb{P}\Big[T'_+\leq\frac{k}{\log^3n}\Big]
    &\leq\exp\Big(-\frac{k}{2\log^2n}(1+o(1))\Big)
    =o(1),
\end{align}
which implies that we have $T'_+>\frac{k}{\log^3n}=\omega(1)$ with probability $1-o(1)$.  This further implies that $|V_{0+}|>\frac{k}{\log^3n}\big(\frac{\rho}{\log n}-1\big)=\omega(1)$ (by counting the masked non-defective items with degree one in the $T'_+$ tests) with probability $1-o(1)$ under the i.i.d. prior. By Lemma \ref{lem:comb_vs_iid_prior}, it then follows that $|V_{0+}|=\omega(1)$ with probability $1-o(1)$ under the combinatorial prior.
\end{proof}

\begin{lemma} \label{lem:Z1_large}
Under the combinatorial prior, the number $Z_1$ of defective items with degree one that appear in the tests in $\mathcal{T}'$ scales as $\omega(1)$ with probability $1-o(1)$.
\end{lemma}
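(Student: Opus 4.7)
The plan is to observe that, conditional on the test matrix $\mathsf{X}$ (equivalently, on the graph $\mathscr{G}$), the set $\mathcal{T}'$ and the set $\mathcal{Z}$ of $z$ degree-one items lying in tests of $\mathcal{T}'$ are all deterministic quantities, so that the only randomness in $Z_1$ comes from the choice of defective set $\mathcal{K}$. Since $\mathcal{K}$ is drawn uniformly at random from the size-$k$ subsets of $[n]$ under the combinatorial prior, $Z_1$ is simply the size of $\mathcal{K} \cap \mathcal{Z}$, so $Z_1 \sim \text{Hypergeometric}(n, z, k)$.

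The next step is to bound the mean. Using the already-established inequality $z \geq \frac{n}{\log^2 n}$, the parameter $p = z/n$ satisfies $p \geq 1/\log^2 n$, and so $\E[Z_1] = pk \geq \frac{k}{\log^2 n} = \frac{n^\theta}{\log^2 n} = \omega(1)$, which is the desired growth. To turn this into a high-probability statement, I would invoke Lemma \ref{lem:hypergeom_chernoff_bound} with $t = p/2$, giving
\begin{align}
    \mathbb{P}\bigl[Z_1 \leq pk/2\bigr] \;\leq\; e^{-2(p/2)^2 k} \;=\; e^{-p^2 k / 2}.
\end{align}
Substituting $p^2 k \geq \frac{n^\theta}{\log^4 n} = \omega(1)$, the right-hand side is $o(1)$, so with probability $1-o(1)$ we have $Z_1 \geq pk/2 \geq \frac{k}{2\log^2 n} = \omega(1)$, as required.

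There is no significant obstacle here: the hypergeometric concentration inequality stated in Lemma \ref{lem:hypergeom_chernoff_bound} is tailor-made for exactly this situation, and the bounds on $z$ already derived before the lemma (in the bullet list defining $\mathcal{T}', T', z, Z_1$) are precisely what is needed to ensure that both $\E[Z_1]$ and $p^2 k$ diverge. The only minor subtlety is to make clear that, although the construction of $\mathcal{T}'$ depends on $\mathscr{G}$ and not on $\mathbf{u}$, we can condition on $\mathscr{G}$ first and then apply hypergeometric concentration to the resulting fixed set $\mathcal{Z}$ of size $z$; this gives a conditional bound that holds deterministically in $\mathscr{G}$ and hence unconditionally.
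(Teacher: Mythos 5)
Your proposal is correct and follows essentially the same route as the paper: both identify $Z_1 \sim \mathrm{Hypergeometric}(n,z,k)$ (with $\mathscr{G}$, hence $z$, fixed), invoke Lemma \ref{lem:hypergeom_chernoff_bound} together with $z\geq\frac{n}{\log^2 n}$ and $k=n^{\theta}$, and conclude $Z_1=\omega(1)$ with probability $1-o(1)$; the only difference is the choice $t=p/2$ instead of the paper's $t=\frac{z}{n}\big(1-\frac{1}{\log n}\big)$, which is immaterial.
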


\begin{proof}
Due to fact that the defective set $\mathcal{K}$ is uniformly distributed, we have $\mathbb{P}[Z_1=z_1]={z\choose z_1}{n-z\choose k-z_1}{n\choose k}^{-1}$, i.e., $Z_1\sim\text{Hypergeometric}(n,z,k)$. Using this fact, we have
\begin{align}
    \mathbb{P}\bigg[Z_1\leq\frac{\E[Z_1]}{\log n}\bigg]
    &=\mathbb{P}\Big[Z_1\leq\frac{zk}{n\log n}\Big] \\
    &\stackrel{(a)}{\leq}\exp\bigg(-2\Big(\frac{z}{n}\Big(1-\frac{1}{\log n}\Big)\Big)^2k\bigg) \\
    &\stackrel{(b)}{\leq}\exp\bigg(-2\Big(\frac{1}{\log^2n}\Big(1-\frac{1}{\log n}\Big)\Big)^2k\bigg) \\
    &= \exp\bigg(-\frac{2k}{\log^4n}(1-o(1))\bigg), \label{eq:asdf}
\end{align}
where (a) uses Lemma \ref{lem:hypergeom_chernoff_bound} with $t=\frac{z}{n}\big(1-\frac{1}{\log n}\big)$, and (b) uses $z\geq\frac{n}{\log^2n}$. This proves that $Z_1>\frac{\E[Z_1]}{\log n}$ with probability $1-o(1)$. Since $\frac{\E[Z_1]}{\log n}=\frac{zk}{n\log n}\geq\frac{k}{\log^3n}$ (by applying $z\geq\frac{n}{\log^2n}$), we have $Z_1>\frac{k}{\log^4n}=\omega(1)$ with probability $1-o(1)$.
\end{proof}

With the auxiliary results in place, we turn to the proof of Lemma \ref{lem:test_w_rho/logn_deg(1)_items}.

\begin{proof}[Proof of Lemma \ref{lem:test_w_rho/logn_deg(1)_items}] Our aim is to show that with $T'\geq\frac{n}{\rho\log n}$, the success probability is $o(1)$. From this point onwards, we condition on $|V_{0+}|=\omega(1)$ and $Z_1=\omega(1)$, both occurring with probability $1-o(1)$ (see Lemma \ref{lem:V+_large} and Lemma \ref{lem:Z1_large}). Under these conditioned events, we have the following possible cases: (1) $|V_{1+}|=0$ and $|V_{0+}|=\omega(1)$; (2) $|V_{1+}|>0$ and $|V_{0+}|=\omega(1)$.
We proceed to analyze each case separately:
\begin{itemize}
    \item \textbf{Case 1:} 
    By assumption, we have $Z_1 = \omega(1)$, i.e., an unbounded number of defective items with degree one in the tests of $\mathcal{T}'$.  The condition $|V_{1+}|=0$ implies that the tests those $Z_1$ items are in cannot include any other defective items, meaning that there are $Z_1$ corresponding tests with exactly one degree-one defective, so that $T'_+ \ge Z_1 = \omega(1)$.  For each of these tests, the decoder can do no better than to guess the defective item uniformly at random from all degree-one items in the test, of which there are at least $\frac{\rho}{\log n}$. Hence, the success probability in each test is at most $\frac{\log n}{\rho} = o(1)$.
    \item \textbf{Case 2:} In this case, a direct application of Lemma \ref{lem:Pe_from_V+} gives us an error probability of $1-o(1)$ (i.e., a success probability of $o(1)$).
\end{itemize}
Since the success probability is $o(1)$ in both cases regardless of the specific values of $|V_+|=\omega(1)$ and $Z_1=\omega(1)$ being conditioned on, it follows that the unconditional success probability is also $o(1)$. This proves Lemma \ref{lem:test_w_rho/logn_deg(1)_items}.
\end{proof}


In view of Lemma \ref{lem:test_w_rho/logn_deg(1)_items}, for the assumed condition stated following Lemma \ref{lem:hypergeom_chernoff_bound} to hold (with an algorithm attaining success probability $\Omega(1)$), it must be the case that $T'<\frac{n}{\rho\log n}$.  We proceed by assuming that this is true, and showing that we arrive at a contradiction.

The condition $T'<\frac{n}{\rho\log n}$ implies that the number of edges contributed by the $T'$ tests is below $\frac{n}{\rho\log n}\cdot\rho=\frac{n}{\log n}=o(n)$, which further implies that $o(n)$ items of degree one participate in the tests of $\mathcal{T}'$. Recalling Lemma \ref{lem:degree_one}, this leaves us with greater than $2\epsilon n(1-o(1))$ items of degree one that participate in tests containing fewer than $\frac{\rho}{\log n}$ items of degree one. Comparing the overall degrees, we find that the number of such tests is greater than $\frac{2\epsilon n(1-o(1))}{\rho/\log n}$, and combining this with the assumption $T=(1-\epsilon)\frac{2n}{\rho}$ gives
\begin{align}
    &\frac{2\epsilon n(1-o(1))}{\rho/\log n}\leq(1-\epsilon)\frac{2n}{\rho} \nonumber \\
    &\qquad \implies\epsilon\leq\frac{1}{(\log n)(1-o(1))+1}=o(1), \label{eq:contradiction_ineq}
\end{align}
which gives the desired contradiction (we assumed $\epsilon = \Theta(1)$). This completes the second part of the proof of Theorem \ref{thm:rho_converse}.

\section{Conclusion}

In this paper, we have analyzed the performance of doubly-regular group testing designs in several settings of interest, with our main results summarized as follows:
\begin{itemize}
    \item In the unconstrained setting with sub-linear sparsity, the block-structured doubly-regular design with the DD algorithm matches the achievability result of the DD algorithm with near-constant tests-per-item, which is known to be optimal for $\theta \ge \frac{1}{2}$.
    \item In the unconstrained setting with linear sparsity, we complemented hardness results for exact recovery by deriving achievability results with only false negatives in the reconstruction.
    \item In the setting of $\rho$-sized tests with sub-linear sparsity and exact recovery, we improved on previously-known upper and lower bounds in regimes of the form $\rho = \Theta \big( \big( \frac{n}{k} \big)^{\beta} \big)$ with $\beta \in (0,1)$, complementing recent improvements that only hold for $\beta = 0$.
\end{itemize}
An immediate direction for future research is to establish converse bounds for approximate recovery in the linear regime with no false positives, and also to establish a more detailed understanding of the entire FPR vs.~FNR trade-off.  In addition, in the size-constrained setting, the optimal thresholds still remain unknown in general, despite the gap now being narrower.

\appendices

\section{Details of Conditional Independence Argument} \label{app:condition}

Here we provide a more mathematical description of the conditioning argument used in Step 3 in Section \ref{sec:ach_ana}.  Before defining the conditioning events for sub-matrices $\textsf{X}_1,\dotsc,\textsf{X}_r$, we first consider fixing several quantities:
\begin{itemize}
    \item The defective set $\mathcal{K}$ (e.g., $\mathcal{K}= \{1,\dotsc,k\}$);
    \item The masked non-defective set $\mathcal{G}$ (e.g., $\mathcal{G} = k+1,\dotsc,k+G$, where $G$ is the number of masked non-defectives);
    \item The set of positive tests $\mathcal{P}_1,\dotsc,\mathcal{P}_r$ for the $r$ sub-matrices (e.g., with $\textsf{X}_1$ corresponding to tests $1,\dotsc,\frac{n}{s}$, we could have $\mathcal{P}_1$ being a fixed size-$\frac{n}{2s}$ subset of those tests);
    \item The counts of masked defectives $M^1,\dotsc,M^r$ (e.g., $M^1 = 0$, $M^2 = 3$, etc.);
    \item The placements of non-masked non-defectives into negative tests (e.g., item $k+G+1$ is in the first negative test for $\textsf{X}_1$, item $k+G+2$ is in the fifth negative test for $\textsf{X}_7$, etc.).  We represent this by a set $\mathsf{FixedNeg}$ containing triplets $(i,j,t)$ with $i \in \{1,\dotsc,n\} \setminus (\mathcal{K} \cup \mathcal{G})$ indexing the (non-defective) item, $j \in \{1,\dotsc,r\}$ indexing the sub-matrix, and $t \in \{1,\dotsc,T\}$ indexing the (negative) test.
\end{itemize}
With the sets of positive tests $\{\mathcal{P}_j\}_{j=1}^r$ fixed, the corresponding sets of negative tests are also fixed, and are denoted by $\{\mathcal{N}_j\}_{j=1}^r$. 
Then, with $\mathcal{K}$, $\mathcal{G}$, $\{\mathcal{P}_j\}_{j=1}^r$, $\{\mathcal{N}_j\}_{j=1}^r$, $\{M^j\}_{j=1}^r$ and $\mathsf{FixedNeg}$ fixed, we can now list the conditioning events for $j=1,\dotsc,r$:
\begin{itemize}
    \item Let $\mathcal{A}_j$ be the event that each test in $\mathcal{P}_j$ contains at least one item from $\mathcal{K}$, and that each test in $\mathcal{N}_j$ contains no items from $\mathcal{K}$.
    \item Let $\mathcal{B}_j$ be the event that each test in $\mathcal{N}_j$ contains no items from $\mathcal{G}$.
    \item Let $\mathcal{C}_j$ be the event that there are exactly $M^j$ items in $\mathcal{K}$ whose (only) test in $\textsf{X}_j$ contains at least one other defective (i.e., it contains two or more in total);
    \item Let $\mathcal{D}_j$ be the event that for all $(i,t)$ satisfying $(i,j,t) \in \mathsf{FixedNeg}$, it holds that entry $(i,t)$ in $\textsf{X}_j$ equals one.
\end{itemize}
The final conditioning event is $\bigcap_{j=1}^r \big( \mathcal{A}_j \cap \mathcal{B}_j \cap \mathcal{C}_j \cap \mathcal{D}_j \big)$.  Then, since $ \mathcal{A}_j \cap \mathcal{B}_j \cap \mathcal{C}_j \cap \mathcal{D}_j$ is an event depending only on $\mathsf{X}_j$ (for $j=1,\dotsc,r$), we observe that the independence of the matrices $\{\mathsf{X}_j\}_{j=1}^r$ before conditioning is still maintained after conditioning, as desired.

\begin{figure*}[t]
    \normalsize
    \setcounter{mytempeqncnt}{\value{equation}}
    \begin{align}
        &\frac{(n-k)!(n-\frac{n\log2}{k})!}{(n-\frac{n\log2}{k}-k+1)!(n-1)!}-\frac{(n-k)!(n-\frac{n\log2}{k}-1)!}{(n-\frac{n\log2}{k}-k+1)!(n-2)!} \label{eq:cov_first_part_start} \\
        &\qquad=\frac{(n-k)!((n-\frac{n\log2}{k})!(n-2)!-(n-\frac{n\log2}{k}-1)!(n-1)!)}{(n-\frac{n\log2}{k}-k+1)!(n-1)!(n-2)!} \\
        &\qquad=\frac{(n-k)!(n-\frac{n\log2}{k}-1)!(n-2)!(n-\frac{n\log2}{k}-n+1)}{(n-\frac{n\log2}{k}-k+1)!(n-1)!(n-2)!} \\
        &\qquad\stackrel{(a)}{=}\frac{(n-\frac{n\log2}{k}-1)\dots(n-\frac{n\log2}{k}-k+2)}{(n-1)\dots(n-k+1)}\Big(-\frac{n\log2}{k}+1\Big) \\
        &\qquad=\Big(-\frac{n\log2}{k}(1+o(1)) \cdot \frac{1}{n-k+1}\Big)\prod_{i=1}^{k-2}\frac{n-\frac{n\log2}{k}-i}{n-i} \\
        &\qquad\stackrel{(b)}{=}-\frac{\log2}{k}(1+o(1))\Big(1-\frac{\log2}{k}(1+o(1))\Big)^{k-2} \\
        &\qquad\stackrel{(c)}{=} O\Big(\frac{1}{k}\Big), \label{eq:cov_first_part_end}
    \end{align}
    \hrulefill 
    \begin{align}
        &\frac{(n-k)!(n-k-\frac{n\log2}{k}+1)!(n-\frac{n\log2}{k}-1)!(n-\frac{2n\log2}{k})!}
        {(n-2)!(n-\frac{n\log2}{k}-1)!(n-k-\frac{n\log2}{k}+1)!(n-k-\frac{2n\log2}{k}+2)!}
        -\bigg(\frac{(n-k)!(n-\frac{n\log2}{k})!}{(n-1)!(n-k-\frac{n\log2}{k}+1)!}\bigg)^2 \label{eq:start_final} \\
        &\qquad=\frac{(n-k)!(n-\frac{2n\log2}{k})!}{(n-2)!(n-k-\frac{2n\log2}{k}+2)!}
        -\bigg(\frac{(n-k)!(n-\frac{n\log2}{k})!}{(n-1)!(n-k-\frac{n\log2}{k}+1)!}\bigg)^2 \\
        &\qquad\stackrel{(a)}{=}\frac{(n-\frac{2n\log2}{k})\dots(n-k-\frac{2n\log2}{k}+3)}{(n-2)\dots(n-k+1)}
        -\frac{(n-\frac{n\log2}{k})^2\dots(n-\frac{n\log2}{k}-k+2)^2}{(n-1)^2\dots(n-k+1)^2} \\
        &\qquad=\prod_{i=0}^{k-3}\frac{n-\frac{2n\log2}{k}-i}{n-i-2}-\prod_{i=0}^{k-2}\frac{(n-\frac{n\log2}{k}-i)^2}{(n-i-1)^2} \\
        &\qquad=\prod_{i=0}^{k-3}\Big(1-\frac{\frac{2n\log2}{k}-2}{n-i-2}\Big)-\prod_{i=0}^{k-2}\Big(1-\frac{\frac{n\log2}{k}-1}{n-i-1}\Big)^2 \\
        &\qquad=\Big(1-\frac{2\log2}{k}\Big(1+O\Big(\frac{k}{n}\Big)\Big)\Big)^{k-2}-\Big(1-\frac{\log2}{k}\Big(1+O\Big(\frac{k}{n}\Big)\Big)\Big)^{2(k-1)} \\
        &\qquad\stackrel{(b)}{=}\exp\bigg(-(2\log2)\Big(1+O\Big(\max\Big\{\frac{1}{k},\frac{k}{n}\Big\}\Big)\Big)\bigg) - \exp\bigg(-(2\log2)\Big(1+O\Big(\max\Big\{\frac{1}{k},\frac{k}{n}\Big\}\Big)\Big)\bigg) \\
        &\qquad\stackrel{(c)}{=}O\Big(\max\Big\{\frac{1}{k},\frac{k}{n}\Big\}\Big), \label{eq:end_final}
    \end{align}
    \hrulefill
    \vspace*{4pt}
\end{figure*}

\section{Proof of Lemma \ref{lem:covariance} (Covariance Calculation)} \label{app:covariance}

\noindent It suffices to show that the first and second parts of \eqref{eq:cov_M1j_M2j} simplify to $O\big(\max\{\frac{1}{k},\frac{k}{n}\}\big)$.

\textbf{First part:} Expanding the binomial coefficient in terms of factorials, the first term simplifies as in \eqref{eq:cov_first_part_start}--\eqref{eq:cov_first_part_end} at the top of the next page, where (a) expands all the factorials and simplifies, (b) uses $k =o(n)$, and (c) applies $\big(1-\frac{\log2}{k}(1+o(1))\big)^{k-2}=\exp(-(1+o(1))\log2)=\Theta(1)$.

\textbf{Second part:} Expanding the binomial coefficient in terms of factorials, we have \eqref{eq:start_final}--\eqref{eq:end_final} at the top of the next page, where (a) expands all the factorials and simplifies, (b) uses the fact that $1-a = e^{-a + O(a^2)} = e^{-a(1+O(a))}$ and $(1+O(a))(1+O(b)) = 1+O(\max\{a,b\})$ whenever $a,b=o(1)$, and (c) uses the fact that upon applying a Taylor expansion to each term, the leading terms cancel and only the first-order remainder remains.
    

{\bf Combining the terms:} Substituting $O\big(\frac{1}{k}\big)$ and $O\big(\max\big\{\frac{1}{k},\frac{k}{n}\big\}\big)$ into the first and second parts of \eqref{eq:cov_M1j_M2j} respectively, we obtain $\Cov\big[M_1^j,M_2^j\big] \le O\big(\max\big\{\frac{1}{k},\frac{k}{n}\big\}\big)$, as desired.

\section{False Positive Rate of COMP in the Linear Regime} \label{app:COMP}

Here we provide an analog of Theorem \ref{thm:linear_achievability} for the COMP algorithm (see Algorithm \ref{alg:COMP&DD_algo}), which comes essentially ``for free'' from our analysis of the DD algorithm.  However, we note that unlike our DD analysis, the result for COMP would also follow easily from prior work, particularly \cite{Ald20}.  Recall that the COMP algorithm only produces false positives, which implies that we only need to look at the FPR (i.e., FNR = 0). 

\begin{theorem} \label{thm:linear_achievability_COMP}
Using the COMP algorithm with the block-structured doubly-regular design with fixed parameters $s$ and $r$, when there are $k=pn$ defective items with constant $p\in(0,1)$, we have $\textup{FPR} \le \textup{FPR}_{\max}(1+o(1))$ with probability $1-o(1)$, where 
\begin{align}
    \textup{FPR}_{\max}=\big(1-(1-p)^{s-1}\big)^r. \label{eq:linear_achievability_COMP}
\end{align}
\end{theorem}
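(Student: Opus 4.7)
The plan is to observe that Theorem \ref{thm:linear_achievability_COMP} follows almost directly from Step 1 of the analysis carried out in Section \ref{sec:linear} for the DD algorithm. The key observation is that the COMP decoder outputs $\widehat{\mathcal{K}}=\mathcal{PD}$, so $\widehat{\mathcal{K}}\setminus\mathcal{K}$ is precisely the set $\mathcal{G}$ of non-defective items that end up in $\mathcal{PD}$, whose cardinality is $G$. Hence $\mathrm{FPR} = \frac{G}{n-k}$, and the theorem reduces to establishing concentration of $G$ around the claimed value.

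The plan is as follows. First, I would recall from \eqref{eq:E[PD_i]_general} that for each non-defective item $i$, $\mathbb{P}[i\in\mathcal{PD}] = \big(1-\prod_{i=1}^{s-1}(1-\frac{k}{n-i})\big)^{r}$, and then apply Lemma \ref{lem:simplification_lemma} with $k = pn$ to simplify the product to $(1-p)^{s-1}(1-O(1/n))$. This yields $\mathbb{P}[i\in\mathcal{PD}] = (1-(1-p)^{s-1})^r(1+O(1/n))$, so that $\E[G] = (n-k)(1-(1-p)^{s-1})^r(1+O(1/n))$, matching \eqref{eq:E[G]_linear}.

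Next, I would import directly the variance and covariance calculations from Step 1 of Section \ref{sec:linear} (equations \eqref{eq:Var[PD_1]_linear} and \eqref{eq:Cov[PD_1,PD_2]_linear}), which give $\mathrm{Var}[\mathrm{PD}_1] = \Theta(1)$ and $\mathrm{Cov}[\mathrm{PD}_1,\mathrm{PD}_2] = O(1/n)$, so that $\mathrm{Var}[G] = \Theta(n)$. Since $\E[G] = \Theta(n)$, Chebyshev's inequality then gives $|G-\E[G]| \le \E[G]/\log n$ with probability $1-O(\log^2 n/n) = 1-o(1)$. Dividing through by $n-k$ gives $\mathrm{FPR} \le (1-(1-p)^{s-1})^r(1+o(1))$ with probability $1-o(1)$, which is exactly the claim.

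There is essentially no serious obstacle, since every ingredient has already been established in the proof of Theorem \ref{thm:linear_achievability}; the only minor point to be careful about is that the statement is phrased as a high-probability bound on the FPR itself (rather than as a bound on its expectation), which is why we invoke the Chebyshev concentration step rather than stopping at the mean computation. I would conclude by noting that because COMP produces no false negatives by construction, the corresponding FNR is zero, so this single bound fully characterizes the reconstruction error.
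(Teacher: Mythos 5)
Your proposal is correct, but it is heavier than the paper's own argument, and it is worth noting why. Since the paper defines the FPR as an expectation, ${\rm FPR} = \frac{\E[|\widehat{\mathcal{K}}\setminus\mathcal{K}|]}{|[n]\setminus\mathcal{K}|}$, the quantity in Theorem \ref{thm:linear_achievability_COMP} is deterministic and equals $\E[\mathrm{PD}_i]$ for a non-defective item $i$; the paper's proof therefore consists of a single computation: take \eqref{eq:E[PD_i]_general}, note that $s$, $r$, and $p$ are constants, and simplify the product to $(1-(1-p)^{s-1})^r(1+o(1))$, with no concentration argument at all (the ``with probability $1-o(1)$'' clause in the statement plays no role in that proof). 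You instead interpret the FPR as the realized fraction $\frac{G}{n-k}$ and prove a concentration statement, importing $\E[G]$ from \eqref{eq:E[G]_linear}, $\Var[\mathrm{PD}_1]=\Theta(1)$ and $\Cov[\mathrm{PD}_1,\mathrm{PD}_2]=O(1/n)$ from Step 1 of Section \ref{sec:linear}, and applying Chebyshev to get $G=(n-k)(1-(1-p)^{s-1})^r(1+o(1))$ with probability $1-o(1)$. Your route is valid (all the ingredients are indeed already established for the DD analysis) and in fact proves something slightly stronger that better matches the high-probability phrasing of the theorem statement; the paper's route is shorter because, under its definition of FPR, the mean computation alone already suffices. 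The only point to keep straight is the definitional one: under the paper's definition, $\mathrm{FPR}=\frac{\E[G]}{n-k}$ rather than $\frac{G}{n-k}$, so your Chebyshev step is an added bonus rather than a required ingredient.
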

\begin{proof}
    From \eqref{eq:E[PD_i]_general}, we have
    \begin{align}
        \text{FPR} 
        &=\bigg(1-\prod_{i=1}^{s-1}\Big(1-\frac{k}{n-i}\Big)\bigg)^r \nonumber \\
        &=\bigg(1-\Big(1-\frac{k}{n(1-o(1))}\Big)^{s-1}\bigg)^r \nonumber \\
        &= \big(1-(1-p)^{s-1}\big)^r(1+o(1)),
    \end{align}
    where the second and third qualities use the fact that $s$, $r$, and $p$ are all constant with respect to $n$.
\end{proof} 

A given FPR value corresponds to an average of $(n-k)\text{FPR}$ false positives, which may potentially be much larger than $k$.  To place the number of false positives and the actual number of defectives on the ``same scale'', we find it more convenient to work with the normalized quantity $\text{FPR} \frac{n-k}{k} =  \text{FPR} \frac{1-p}{p}$.  Then, this quantity equaling a given value $\alpha > 0$ corresponds to an average of $\alpha k$ false positives. 

Recalling the notion of rate in Definition \ref{def:rate}, we have the following analog of Corollary \ref{cor:small_p}; the proof is similar but simpler, so is omitted.

\begin{corollary} \label{cor:small_p_COMP}
    Under the setup of Theorem \ref{thm:linear_achievability_COMP}, there exist choices of $r$ and $s$ (depending on $p$) such that, in the limit as $p \to 0$ (after having taking $n \to \infty$), we have (i) $\mathrm{FPR}_{\max} \frac{1-p}{p} \rightarrow 0$, (ii) the rate approaches $\log 2$, and (iii) it holds that  $s=\frac{\log2}{p}(1+o(1))$ and $r = \frac{\log(\frac{1}{p})}{\log2}(1+o(1))$.
\end{corollary}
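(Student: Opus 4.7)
The plan is to mirror the argument of Corollary~\ref{cor:small_p}, which is actually simpler here because the FPR bound from Theorem~\ref{thm:linear_achievability_COMP} is a single exponential $\big(1-(1-p)^{s-1}\big)^r$ rather than the two-term expression appearing in the DD case. First I would set $s = \frac{\log 2}{p}$ (rounding has negligible effect as $p \to 0$), so that $(1-p)^{s-1} = e^{-p(s-1)(1+o(1))} = \frac{1}{2}(1+o(1))$, giving $\mathrm{FPR}_{\max} = \big(\frac{1}{2}(1+o(1))\big)^r$. This already verifies the $s$-scaling claimed in (iii).

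Next I would choose $r$ to ensure $\mathrm{FPR}_{\max} \cdot \frac{1-p}{p} \to 0$. Writing $\frac{1-p}{p} = \frac{1}{p}(1+o(1))$ and expanding, it suffices to arrange $\frac{1}{p}\big(\frac{1}{2}(1+o(1))\big)^r = o(1)$, which amounts to $r \log 2 \ge \log(1/p)(1+o(1))$ together with a vanishing additive slack (e.g., a $\log\log(1/p)$ correction) to absorb the $(1+o(1))$ factor inside the base. Solving yields $r = \frac{\log(1/p)}{\log 2}(1+o(1))$, establishing both~(i) and the $r$-scaling in~(iii).

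For claim~(ii), I would carry out the rate computation exactly as in Corollary~\ref{cor:small_p}. Using $\log_2\binom{n}{pn} = \frac{pn\log(1/p)}{\log 2}(1+o(1))$ as $p \to 0$ and $T = \frac{rn}{s}$, the rate becomes $\frac{s p \log(1/p)}{r \log 2}(1+o(1))$, and substituting our choices of $s$ and $r$ causes the $\log(1/p)$ factor to cancel, leaving $\log 2 (1+o(1))$.

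The proof is essentially mechanical and I do not anticipate any real obstacle; the only mild care required is in balancing the choice of $r$ so that $\mathrm{FPR}_{\max}\cdot\frac{1-p}{p}$ decays while still satisfying $r = \frac{\log(1/p)}{\log 2}(1+o(1))$, since a looser choice like $r = (1+\epsilon)\frac{\log(1/p)}{\log 2}$ with $\epsilon$ bounded away from zero would cost a multiplicative constant in the rate and fail to attain $\log 2$ in the limit.
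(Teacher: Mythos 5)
Your proposal is correct and follows essentially the same route the paper intends: the paper omits the proof of Corollary \ref{cor:small_p_COMP}, stating only that it is "similar but simpler" than that of Corollary \ref{cor:small_p}, and your argument is precisely that adaptation (choose $s=\frac{\log 2}{p}$, pick $r=\frac{\log(1/p)}{\log 2}(1+o(1))$ with a vanishing slack so that $\frac{1}{p}\big(\tfrac12(1+o(1))\big)^r\to 0$, then repeat the rate computation). The balancing point you flag about not taking $r=(1+\epsilon)\frac{\log(1/p)}{\log 2}$ with fixed $\epsilon$ is exactly the right care needed, so no gap remains.
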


Similarly to the discussion following Corollary \ref{cor:small_p}, this result is consistent with the rate of $\log 2$ attained for COMP with approximate recovery in the sub-linear regime $k = o(n)$ \cite[Sec.~5.1]{Ald19}. 

\begin{figure}[!t]
    \centering
    \includegraphics[width=0.475\textwidth]{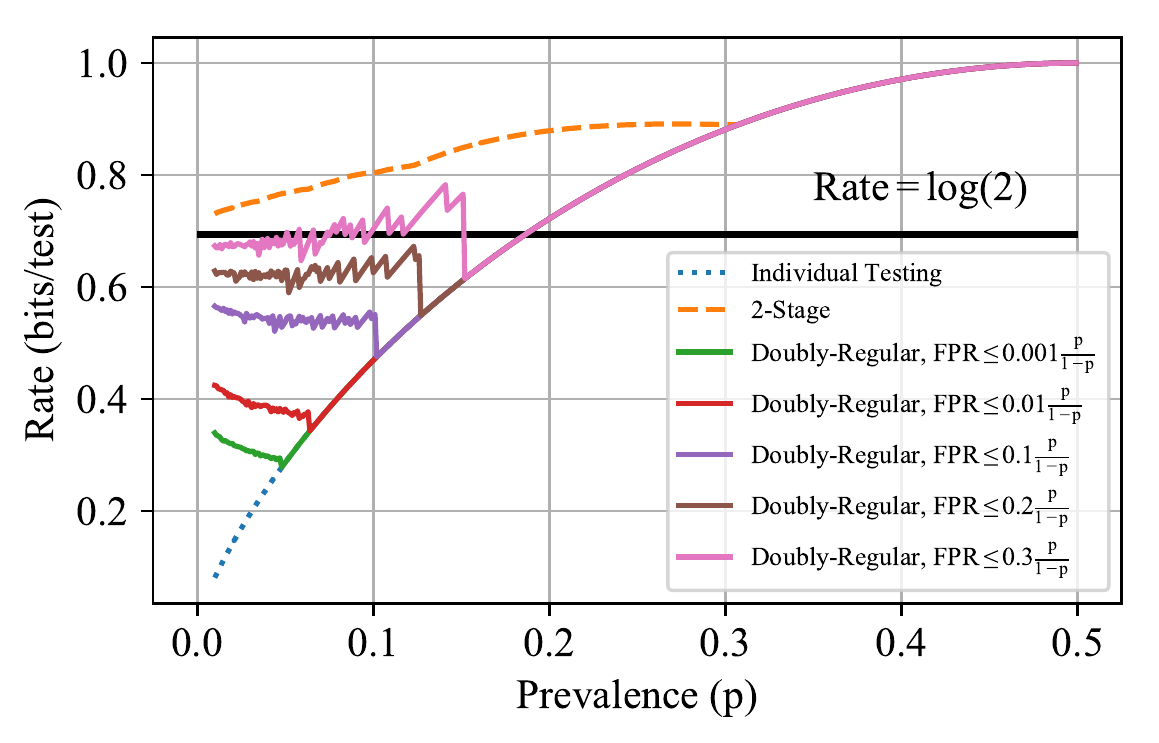}
    \caption{Achievable rates for COMP decoding with the doubly-regular design and approximate recovery (along with individual testing and a two-stage design \cite{Ald20}, both of which attain exact recovery).} \vspace*{-1.5ex}
    \label{fig:doub_reg_rate_COMP}
\end{figure} 

\begin{figure}[!t]
    \centering
    \includegraphics[width=0.45\textwidth]{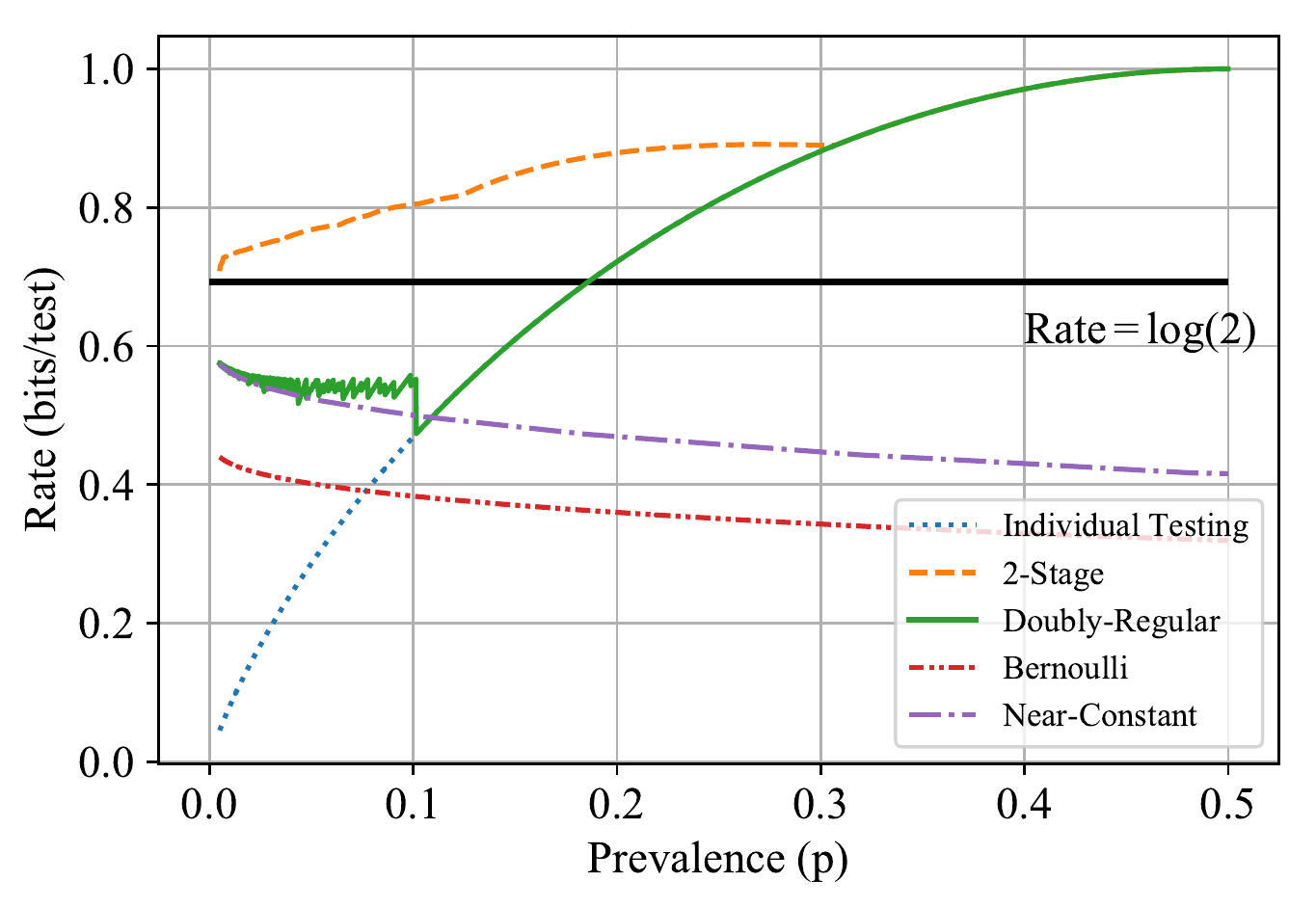} \quad 
        \includegraphics[width=0.45\textwidth]{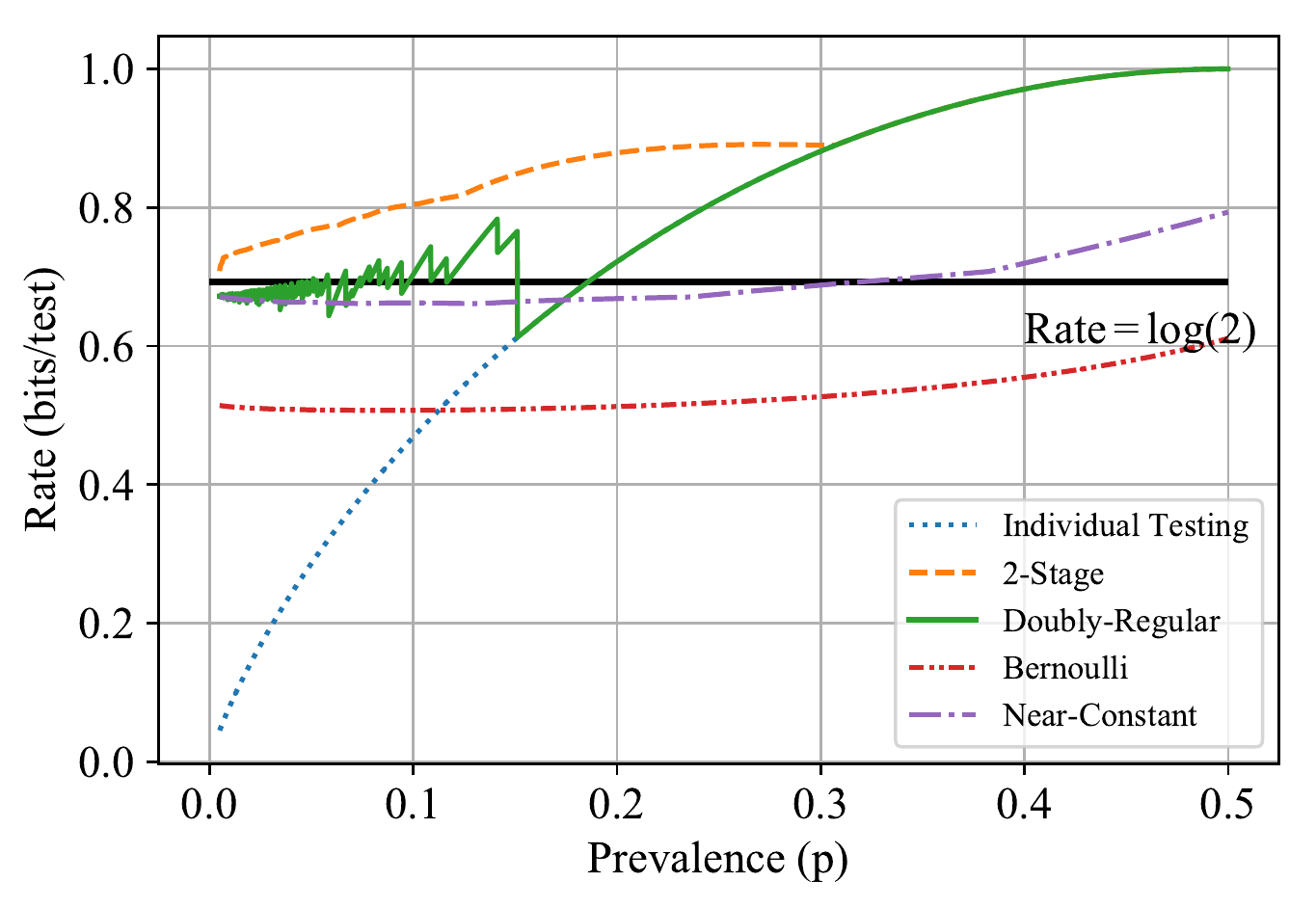} 
    \caption{Comparison of rates for COMP decoding with various tests designs, for approximate recovery with $\alpha = 0.1$ (Left) and $\alpha = 0.3$ (Right); the rates for Bernoulli designs and near-constant column weight designs are given in \cite{Bay20a}.  Individual testing and the two-stage design \cite{Ald20} attain exact recovery.} \vspace*{-1.5ex}
    \label{fig:comparisons}
\end{figure} 

In this case, we can easily argue that the constant of $\log 2$ is optimal.  To see this, note that if we can could attain at most $\alpha k$ false positives on average for arbitrarily small $\alpha > 0$, then we could use this to construct a two-stage adaptive group testing algorithm where the second stage tests the items outputted in the first stage individually, thus using an average of $(1+\alpha)k$ tests or fewer.  When $p$ approaches zero, these additional tests contribute a arbitrarily small fraction compared to the leading $\Theta\big( k \log \frac{n}{k} \big) = \Theta\big(k\log\frac{1}{p}\big)$ term, so the two-stage design attains zero error probability with an arbitrarily small increase in the rate.  However, the converse result of \cite{Ald20} shows that rates above $\log 2$ are impossible in this two-stage setting (see also \cite{Mez11} for the sublinear regime).  This establishes the optimality of the constant $\log 2$ above.

To visualize the constant-$p$ regime, we plot the rates attained by the COMP algorithm with the doubly-regular design in Figure \ref{fig:doub_reg_rate_COMP} (see the text following Corollary \ref{cor:small_p} for discussion on why the behavior of the curves reaching $\log 2$ as $p \to 0$ is not visible; a similar discussion applies here).  In Figure \ref{fig:comparisons}, we additionally compare against other random non-adaptive test designs, for which bounds on the FPR were given in \cite{Bay20a}.  We observe that the doubly-regular design consistently outperforms the Bernoulli design, and also outperforms the near-constant column weight design except near certain $p$ values where the doubly-regular curve is discontinuous.

\bibliographystyle{IEEEtran}
\bibliography{JS_References}

\begin{thebibliography}{10}
\providecommand{\url}[1]{#1}
\csname url@samestyle\endcsname
\providecommand{\newblock}{\relax}
\providecommand{\bibinfo}[2]{#2}
\providecommand{\BIBentrySTDinterwordspacing}{\spaceskip=0pt\relax}
\providecommand{\BIBentryALTinterwordstretchfactor}{4}
\providecommand{\BIBentryALTinterwordspacing}{\spaceskip=\fontdimen2\font plus
\BIBentryALTinterwordstretchfactor\fontdimen3\font minus
  \fontdimen4\font\relax}
\providecommand{\BIBforeignlanguage}[2]{{%
\expandafter\ifx\csname l@#1\endcsname\relax
\typeout{** WARNING: IEEEtran.bst: No hyphenation pattern has been}%
\typeout{** loaded for the language `#1'. Using the pattern for}%
\typeout{** the default language instead.}%
\else
\language=\csname l@#1\endcsname
\fi
#2}}
\providecommand{\BIBdecl}{\relax}
\BIBdecl

\bibitem{Nel21}
N.~Tan and J.~Scarlett, ``An analysis of the {DD} algorithm for group testing
  with size-constrained tests,'' in \emph{IEEE Int. Symp. Inf. Theory}, 2021.

\bibitem{Ald19}
M.~Aldridge, O.~Johnson, and J.~Scarlett, ``Group testing: An information
  theory perspective,'' \emph{Found. Trend. Comms. Inf. Theory}, vol.~15, no.
  3--4, pp. 196--392, 2019.

\bibitem{Ald21}
M.~Aldridge and D.~Ellis, \emph{Pooled testing and its applications in the
  COVID-19 pandemic}.\hskip 1em plus 0.5em minus 0.4em\relax Pandemics:
  Insurance and Social Protection, 2021.

\bibitem{Mal78}
M.~Malyutov, ``\BIBforeignlanguage{English}{The separating property of random
  matrices},'' \emph{\BIBforeignlanguage{English}{Math. Notes Acad. Sci.
  {USSR}}}, vol.~23, no.~1, pp. 84--91, 1978.

\bibitem{Ati12}
G.~Atia and V.~Saligrama, ``Boolean compressed sensing and noisy group
  testing,'' \emph{IEEE Trans. Inf. Theory}, vol.~58, no.~3, pp. 1880--1901,
  March 2012.

\bibitem{Ald14a}
M.~Aldridge, L.~Baldassini, and O.~Johnson, ``Group testing algorithms: Bounds
  and simulations,'' \emph{IEEE Trans. Inf. Theory}, vol.~60, no.~6, pp.
  3671--3687, June 2014.

\bibitem{Sca15b}
J.~Scarlett and V.~Cevher, ``Phase transitions in group testing,'' in
  \emph{Proc. ACM-SIAM Symp. Disc. Alg. (SODA)}, 2016.

\bibitem{Joh16}
O.~Johnson, M.~Aldridge, and J.~Scarlett, ``Performance of group testing
  algorithms with near-constant tests-per-item,'' \emph{IEEE Trans. Inf.
  Theory}, vol.~65, no.~2, pp. 707--723, Feb. 2019.

\bibitem{Coj19}
A.~Coja-Oghlan, O.~Gebhard, M.~Hahn-Klimroth, and P.~Loick,
  ``Information-theoretic and algorithmic thresholds for group testing,'' in
  \emph{Int. Colloq. Aut., Lang. and Prog. (ICALP)}, 2019.

\bibitem{Coj20}
------, ``Optimal group testing,'' in \emph{Conf. Learn. Theory (COLT)}, vol.
  125, 2020, pp. 1374--1388.

\bibitem{Mez08}
M.~M{\'e}zard, M.~Tarzia, and C.~Toninelli, ``Group testing with random pools:
  Phase transitions and optimal strategy,'' \emph{J. Stat. Phys.}, vol. 131,
  no.~5, pp. 783--801, 2008.

\bibitem{Ven19}
V.~Gandikota, E.~Grigorescu, S.~Jaggi, and S.~Zhou, ``Nearly optimal sparse
  group testing,'' \emph{IEEE Trans. Inf. Theory}, vol.~65, no.~5, pp. 2760 --
  2773, 2019.

\bibitem{Oli20}
O.~Gebhard, M.~Hahn-Klimroth, O.~Parczyk, M.~Penschuck, M.~Rolvien,
  J.~Scarlett, and N.~Tan, ``Near-optimal sparsity-constrained group testing:
  Improved bounds and algorithms,'' \emph{IEEE Trans. Inf. Theory}, vol.~68,
  no.~5, pp. 3253--3280, 2022.

\bibitem{Ald20}
M.~Aldridge, ``Conservative two-stage group testing in the linear regime,''
  2020, \url{https://arxiv.org/abs/2005.06617}.

\bibitem{Goe21}
R.~Goenka, S.-J. Cao, C.-W. Wong, A.~Rajwade, and D.~Baron, ``Contact tracing
  information improves the performance of group testing algorithms,'' 2021,
  https://arxiv.org/abs/2106.02699.

\bibitem{Gho21}
S.~Ghosh, R.~Agarwal, M.~A. Rehan, S.~Pathak, P.~Agarwal, Y.~Gupta, S.~Consul,
  N.~Gupta, R.~Ritika, R.~Goenka \emph{et~al.}, ``A compressed sensing approach
  to pooled {RT-PCR} testing for {COVID}-19 detection,'' \emph{IEEE Open J.
  Sig. Proc.}, 2021.

\bibitem{Ald18}
M.~{Aldridge}, ``Individual testing is optimal for nonadaptive group testing in
  the linear regime,'' \emph{IEEE Trans. Inf. Theory}, vol.~65, no.~4, pp.
  2058--2061, April 2019.

\bibitem{Bay20}
W.~H. Bay, J.~Scarlett, and E.~Price, ``Optimal non-adaptive probabilistic
  group testing in general sparsity regimes,'' 2 2022, article iaab020.

\bibitem{Cha14}
C.~L. Chan, S.~Jaggi, V.~Saligrama, and S.~Agnihotri, ``Non-adaptive group
  testing: Explicit bounds and novel algorithms,'' \emph{IEEE Trans. Inf.
  Theory}, vol.~60, no.~5, pp. 3019--3035, May 2014.

\bibitem{Bro20}
A.~Z. Broder and R.~Kumar, ``A note on double pooling tests,'' 2020,
  https://arxiv.org/abs/2004.01684.

\bibitem{Ald15}
M.~Aldridge, ``The capacity of {B}ernoulli nonadaptive group testing,''
  \emph{IEEE Trans. Inf. Theory}, vol.~63, no.~11, pp. 7142--7148, 2017.

\bibitem{cha11}
C.~L. Chan, P.~H. Che, S.~Jaggi, and V.~Saligrama, ``Non-adaptive probabilistic
  group testing with noisy measurements: Near-optimal bounds with efficient
  algorithms,'' in \emph{Allerton Conf. Comm., Ctrl., Comp.}, Sep. 2011, pp.
  1832--1839.

\bibitem{Cai13}
S.~Cai, M.~Jahangoshahi, M.~Bakshi, and S.~Jaggi, ``Efficient algorithms for
  noisy group testing,'' \emph{IEEE Trans. Inf. Theory}, vol.~63, no.~4, pp.
  2113--2136, 2017.

\bibitem{Lee16}
K.~{Lee}, R.~{Pedarsani}, and K.~{Ramchandran}, ``Saffron: A fast, efficient,
  and robust framework for group testing based on sparse-graph codes,'' in
  \emph{IEEE Int. Symp. Inf. Theory}, 2016.

\bibitem{Ngo11}
H.~Q. Ngo, E.~Porat, and A.~Rudra, ``Efficiently decodable error-correcting
  list disjunct matrices and applications,'' in \emph{Int. Colloq. Automata,
  Lang., and Prog.}, 2011.

\bibitem{Ind10}
P.~Indyk, H.~Q. Ngo, and A.~Rudra, ``Efficiently decodable non-adaptive group
  testing,'' in \emph{ACM-SIAM Symp. Disc. Alg. (SODA)}, 2010.

\bibitem{Eri20}
E.~Price and J.~Scarlett, ``A fast binary splitting approach to non-adaptive
  group testing,'' in \emph{RANDOM}, 2020.

\bibitem{Nel21a}
E.~Price, J.~Scarlett, and N.~Tan, ``Fast splitting algorithms for
  sparsity-constrained and noisy group testing,'' 2021,
  \url{https://arxiv.org/abs/2106.00308}.

\bibitem{Mez11}
M.~M{\'e}zard and C.~Toninelli, ``Group testing with random pools: Optimal
  two-stage algorithms,'' \emph{IEEE Trans. Inf. Theory}, vol.~57, no.~3, pp.
  1736--1745, 2011.

\bibitem{Dya14}
A.~G. D'yachkov, ``Lectures on designing screening experiments,'' 2014,
  https://arxiv.org/abs/1401.7505.

\bibitem{Nel20}
N.~Tan and J.~Scarlett, ``Near-optimal sparse adaptive group testing,'' in
  \emph{IEEE Int. Symp. Inf. Theory}, 2020.

\bibitem{Mot10}
R.~Motwani and P.~Raghavan, \emph{Randomized Algorithms}.\hskip 1em plus 0.5em
  minus 0.4em\relax Chapman \& Hall/CRC, 2010.

\bibitem{For71}
C.~M. Fortuin, J.~Ginibre, and P.~W. Kasteleyn, ``Correlation inequalities on
  some partially ordered sets,'' \emph{Communications in Mathematical Physics},
  vol.~22, no.~2, pp. 89 -- 103, 1971.

\bibitem{Ska13}
M.~Skala, ``Hypergeometric tail inequalities: ending the insanity,'' 2013,
  \url{https://arxiv.org/abs/1311.5939}.

\bibitem{Hoe63}
W.~Hoeffding, ``Probability inequalities for sums of bounded random
  variables,'' \emph{J. Amer. Stat. Assoc.}, vol.~58, no. 301, pp. 13--30,
  1963.

\bibitem{Bay20a}
W.~H. Bay and J.~Scarlett, ``Optimal non-adaptive probabilistic group testing
  in general sparsity regimes,'' 2020, https://arxiv.org/abs/2006.01325v1.

\end{thebibliography}

\begin{IEEEbiographynophoto}{Nelvin Tan}
    received the B.Comp.~degree in computer science and statistics from the National University of Singapore in 2021. He is currently pursuing the Ph.D.~degree from the Signal Processing and Communications Group in the Department of Engineering, University of Cambridge.  His research interests include information theory and high-dimensional statistics.
\end{IEEEbiographynophoto}

\begin{IEEEbiographynophoto}{Way Tan}
    received a double degree in mathematics and computer science from the National University of Singapore in 2021.  His research interests include information theory and applied mathematics.
\end{IEEEbiographynophoto}

\begin{IEEEbiographynophoto}{Jonathan Scarlett}
    (S'14 -- M'15) received 
    the B.Eng.~degree in electrical engineering and the B.Sci.~degree in 
    computer science from the University of Melbourne, Australia. 
    From October 2011 to August 2014, he
    was a Ph.D. student in the Signal Processing and Communications Group
    at the University of Cambridge, United Kingdom. From September 2014 to
    September 2017, he was post-doctoral researcher with the Laboratory for
    Information and Inference Systems at the \'Ecole Polytechnique F\'ed\'erale
    de Lausanne, Switzerland. Since January 2018, he has been an assistant
    professor in the Department of Computer Science and Department of Mathematics,
    National University of Singapore. His research interests are in
    the areas of information theory, machine learning, signal processing, and
    high-dimensional statistics. He received the Singapore National Research Foundation (NRF)
    fellowship, and the NUS Early Career Research Award.
\end{IEEEbiographynophoto}

\end{document}